\DeclareMathOperator*{\slim}{s--lim}
\newcommand{\supp}{\operatorname{supp}}
\newcommand{\dist}{{\operatorname{dist}}}
\newcommand{\Ran}{{\operatorname{Ran}}}
\newcommand{\ad}{{\operatorname{ad}}}
\newcommand{\N}{{\mathbb{N}}} 
\newcommand{\R}{{\mathbb{R}}} \newcommand{\Z}{{\mathbb{Z}}}
\newcommand{\C}{{\mathbb{C}}}
 \renewcommand{\c}{{\rm c}}
\newcommand{\e}{{\rm e}} \newcommand{\ess}{{\rm ess}}
 \renewcommand{\i}{{\rm i}}
\renewcommand{\d}{{\rm d}}
\newcommand{\pupo}{{\rm pp}}
\renewcommand{\Re}{{\rm Re}\,} \renewcommand{\Im}{{\rm Im}\,}
\DeclarePairedDelimiter\inp\langle\rangle
\newcommand\parb[2][]{#1 \big ( #2#1\big )} \newcommand\parbb[2][]{#1
  \Big ( #2#1\Big )}
 \renewcommand{\exp}{{\rm exp}}
\newcommand{\mand}{\text{ and }} \newcommand{\mfor}{\text{ for }}
\newcommand{\mforall}{\text{ for all }}
\newcommand{\mif}{\text{ if }} \newcommand{\mon}{\text{ on }}
\newcommand{\bY}{{\bf Y}} \newcommand{\bX}{{\bf X}}
\newcommand{\vA}{{\mathcal A}} \newcommand{\vB}{{\mathcal B}}
\newcommand{\vC}{{\mathcal C}} \newcommand{\vD}{{\mathcal D}}
 \newcommand{\vH}{{\mathcal H}}
\newcommand{\vT}{{\mathcal T}} 
\newcommand{\vV}{{\mathcal V}}
\newcommand\myboxed[1]{{\fboxsep=0.4pt\boxed{#1}}}
\theoremstyle{plain}%default
\newtheorem{thm}{Theorem}[section]
\newtheorem{proposition}[thm]{Proposition}
\newtheorem{lemma}[thm]{Lemma} \newtheorem{corollary}[thm]{Corollary}
\theoremstyle{definition}
 \newtheorem{cond}[thm]{Condition}
\newtheorem{remarks}[thm]{Remarks}
 \newtheorem*{remarks*}{Remarks}
\newtheorem*{remark*}{Remark}
\numberwithin{equation}{section}
\title[Absence of positive eigenvalues for hard-core $N$-body
systems]{Absence of positive eigenvalues for hard-core $N$-body
  systems}
\author{K. Ito}
\address[K. Ito]{Graduate School of Pure and Applied Sciences, University of Tsukuba\\
  1-1-1 Tennodai, Tsukuba Ibaraki, 305-8571 Japan}
\email{ito-ken@math.tsukuba.ac.jp}
\author{E. Skibsted} \address[E. Skibsted]{Institut for Matematiske
  Fag \\
  Aarhus Universitet\\ Ny Munkegade 8000 Aarhus C, Denmark}
\email{skibsted@imf.au.dk}
\begin{document}
\begin{abstract} We show absence of positive eigenvalues for
  generalized $2$-body hard-core Schr\"{o}dinger operators under the
  condition of bounded strictly convex obstacles. A scheme for showing
  absence of positive eigenvalues for generalized $N$-body hard-core
  Schr\"{o}dinger operators, $N\geq 2$, is presented. This scheme
  involves high energy resolvent estimates, and for $N=2$ it is
  implemented by a Mourre commutator type method. A particular example
  is the Helium atom with the assumption of infinite mass and finite
  extent nucleus.
\end{abstract}

\maketitle
\tableofcontents
\begin{comment}
  \keywords{Preliminary version of work to be presented by E.S. as
    invited speaker at ICMP 2012 Aalborg.}
\end{comment}

\section{Introduction and results}\label{sec:introduction}
Consider the $N$-body Schr\"{o}dinger operator 
\begin{align}\label{eq:Hami_1}
  H=\sum_{j = 1}^N \Big (-\frac{1}{2m_j}\Delta_{x_j}+V^{\rm
    ncl}_{j}(x_j)\Big ) + \sum_{1 \le i<j \le N} V_{ij}^{\rm elec}(x_i
  -x_j)
\end{align} for a system of $N$ $d$-dimensional particles in the
exterior of a bounded strictly convex obstacle $\Theta_1\subset
\mathbb{R}^d$ (for $N=1$ the last term is omitted). Whence $H$ is an
operator on the Hilbert space $L^2(\Omega)$; $\Omega=(\Omega_1)^N$,
$\Omega_1=\R^d\setminus \overline \Theta_1$. It is defined more
precisely by imposing the Dirichlet boundary condition. This operator
models a system of $N$ $d$-dimensional charged particles interacting
with a fixed nucleus of finite extent, for example a ball (or possibly
a somewhat  deformed ball). In particular (assuming $0\in \Theta_1$) we
could have Coulomb interactions $V^{\rm ncl}_{j}(y)=q_jq^{\rm
  ncl}|y|^{-1}$ and $V_{ij}^{\rm elec}(y)=q_iq_j|y|^{-1}$ in dimension
$d\geq 2$. We address the problem of proving absence of positive
eigenvalues. While this property is well-known for the one-body
problem it is open for $N\geq 2$. We introduce for obstacle problems
of this type a general procedure involving high energy resolvent
estimates for effective sub-Hamiltonians. We show that this scheme can
be implemented for the case $N=2$. In this case essentially such an 
effective sub-Hamiltonian is a one-body Hamiltonian for an exterior
region. The result is shown in the so-called generalized $2$-body
hard-core framework.

\subsection{Usual generalized $N$-body
  systems}\label{subsec:Usual generalized $N$-body
  systems}  We will work in a generalized framework. We first review the analogue of this
without obstacles, i.e. with ``soft potentials''. This is given by
real finite dimensional vector space $\bX$ with an inner product $q$,
i.e.  $(\bX,q)$ is Euclidean
space,  and a
finite family of subspaces $\{\bX_a| \ a\in \vA\}$ closed with respect
to intersection. We refer to the elements of $\vA$ as {\it cluster
decompositions} (not to be motivated here). The orthogonal complement of $\bX_a$ in $\bX$ is
denoted $\bX^a$, and correspondingly we decompose $x=x^a\oplus x_a\in
\bX^a\oplus \bX_a$. We order $\vA$ by writing $a_1\subset a_2$ if
$\bX^{a_1}\subset \bX^{a_2}$. It is assumed that there exist
$a_{\min},a_{\max}\in \vA$ such that $\bX^{a_{\min}}=\{0\}$ and
$\bX^{a_{\max}}=\bX$. Let $\vB=\vA\setminus\{a_{\min}\}$. The length
of a chain of cluster decompositions $a_1\subsetneq \cdots \subsetneq
a_k$ is the number $k$. Such a  chain is said to connect $a=a_1$ and
$b=a_k$. The maximal length of all chains connecting a given $a\in
\vA\setminus\{a_{\max}\}$ and $a_{\max}$ is denoted by $\# a$. We
define $\# a_{\max}=1$ and denoting $\# a_{\min}=N+1$ we say the
family $\{\bX^a|a\in\vA\}$ is of $N$-body type. Whence the generalized
$2$-body framework is characterized by the condition $\bX_a\cap
\bX_b=0$ for $a,b\neq a_{\min}$, $a\neq b$.

The $N$-body Schr\"odinger operator $H$ introduced above (now
considered without an obstacle) can be written on the form
\[
H = H_0 + V
\]
where $2H_0$ is (minus) the Laplace-Beltrami operator on the 
space
\begin{align*}
 \bX=(\R^d)^N, \quad q=\sum_{j=1}^N m_j|x_j|^2, 
\end{align*}
 $V=V(x) = \sum_{b\in\vB} V_{b}(x^{b})$ and indeed
the relevant family $\{\bX^a|a\in\vA\}$ of subspaces as discussed
above is of $N$-body type.
However this is just one example of a generalized $N$-body
Schr\"odinger operator. The  general construction of such an operator
$H$ is
similar, and under the following condition it is well-defined with
form domain given by the Sobolev space $H^1(\bX)$, cf.  \cite[Theorem X.17]{RS}.

\begin{cond}\label{cond:smooth1} 
  There exists $\varepsilon >0$ such that for potential $V_b$,
  $b\in\vB$, there is a splitting $V_b=V_b^{(1)}+V_b^{(2)}$, where
  \begin{enumerate}[label=(\arabic*)]
  \item \label{item:cond1} $V_b^{(1)}$ is smooth and
    \begin{equation}
      \label{eq:1k}
      \partial ^\alpha_yV_b^{(1)}(y)=O\big(|y|^{-\varepsilon-|\alpha|}\big ).
    \end{equation}
  \item \label{item:cond12} $V_b^{(2)}$ is compactly supported and
    \begin{equation}
      \label{eq:2k}
      (-\Delta+1)^{-1/2}V_{b}^{(2)}(-\Delta+1)^{-1/2}\text{ is compact on
      }L^2(\R^{\dim \bX ^b}_y).
    \end{equation}
  \end{enumerate}
\end{cond}

Let $ -\Delta^{a}=(p^a)^2$ and $-\Delta_{a}=p_a^2$ denote (minus) the
Laplacians on $L^2(\bX^a)$ and $L^2(\bX_a)$, respectively. Here
$p^a=\pi^ap$ and $p_a=\pi_ap$ denote the internal (i.e. within
clusters) and the inter-cluster components of the momentum operator
$p=-\i\nabla$, respectively.  For $a\in \vB$, denote
\begin{align*}
  V^a(x^a)&=\sum_{b\subset a}   V_{b}(x^{b}),\\
  H^{a} &= -\tfrac12\Delta^{a} + V^a(x^a), \\
  H_{a}&= H^{a}- \tfrac12 \Delta_{a} ,\\
  I_{a}(x) &= \sum_{b\not\subset a} V_{b}(x^{b}).
\end{align*}
We define $H^{a_{\min}}=0$ on $L^2(\bX^{a_{\min}}):=\C$. The operator
$H^a$ is the sub-Hamiltonian associated with the cluster decomposition
$a$ and $I_a$ is the sum of all inter-cluster interactions. The
detailed expression of $H^a$ depends on the choice of coordinates on
$\bX^a$.

In a natural way we have sub-Hamiltonians $H^a$ and ``inter-cluster''
Hamiltonians $H_a=H^a\otimes I+ I \otimes \tfrac12 p^2_a$. Given a
family $\{\bX^a|a\in\vA\}$ of $N$-body type and imposing Condition
\ref{cond:smooth1} the generalized $N$-body Hamiltonian is
$H=H^{a_{\max}}$.

Let
\[
\vT = \cup_{a\in\vA, \# a\ge 2} \;\sigma_{\pupo}( H^a)
\]
be the set of thresholds of $H$.  The HVZ theorem \cite[Theorem
XIII.17]{RS} gives the bottom of the essential spectrum $\Sigma_2
:=\inf \sigma_{\ess}(H) $ of $H$ by the formula
\begin{equation}
  \label{eq:3k}
  \Sigma_2 = \min_{a\in\vA\setminus\{a_{\max}\}} \inf\sigma( H^a) =
  \min_{a\in\vA, \# a = 2} \inf\sigma( H^a).
\end{equation}

It is also well-known that under rather general conditions $H$ does
not have positive eigenvalues and the negative eigenvalues can at most
accumulate at the thresholds from below, see \cite{FH} and \cite{Pe}.

\subsubsection{Graf vector field}
We give a brief review of the construction of a family of conjugate
operators for $N$-body Hamiltoninans originating from \cite{Sk1}. A
slightly different proof appears in \cite{Sk3}. This construction is
based on the vector field invented by Graf \cite{Gra} which is a
vector field satisfying the following properties, cf. \cite[Lemma
4.3]{Sk3}.  We use throughout the paper the notation $\langle x
\rangle = \sqrt{ x^2+1}$ and $\N_0=\N\cup\{0\}$.
\begin{lemma}
  \label{lemma:vector field} There exist on $\bX$ a smooth vector
  field $\omega$ with symmetric derivative $\omega_*$ and a partition
  of unity $\{\tilde q_a\}$ indexed by $a\in \vA$ and consisting of
  smooth functions, $0\leq \tilde q_a\leq 1$, such that for some
  positive constants $r_1$ and $r_2$
  \begin{enumerate}[\normalfont (1)]
  \item \label{item:4k}$\omega_*(x)\geq\sum _a\pi_a \tilde q_a.$
  \item \label{item:5k}$\omega^a(x)=0\text{ if }|x^a|<r_1.$
  \item\label{item:6k} $|x^b|>r_1\mon \supp(\tilde q_a)\mif b\not
    \subset a.$
  \item\label{item:7k} $|x^a|<r_2\mon \supp(\tilde q_a).$
  \item\label{item:7ik} For all $\alpha\in \N_0^{\dim \bX}$ and $k\in
    \N_0$ there exist $C\in \R$:
    \begin{equation}\label{eq:boun_constr}
      |\partial ^\alpha_x\tilde q_a|+|\partial ^\alpha_x(x\cdot
      \nabla)^k\parb{\omega(x)-x}|\leq C.
    \end{equation}
  \end{enumerate}
\end{lemma}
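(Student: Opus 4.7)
The plan is to carry out Graf's classical construction \cite{Gra} in the streamlined form of \cite{Sk3}, realizing $\omega$ as the gradient $\nabla F$ of a smooth convex function $F$ on $\bX$ built from the cluster structure. Convexity immediately makes $\omega_*=\nabla^2 F$ symmetric and positive semidefinite, and the architecture of $F$ will supply the sharper lower bound required for \ref{item:4k}.

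The core idea is to associate to each $a\in\vA$ a quadratic piece that prefers the $a$-cluster decomposition, namely
\begin{equation*}
F_a(x)=\tfrac12|x_a|^2+C_a,
\end{equation*}
whose gradient $\pi_a x\in\bX_a$ and whose Hessian $\pi_a$ are the key linear objects, and to combine them into a piecewise-quadratic convex function
\begin{equation*}
F^\sharp(x)=\max_{a\in\vA}F_a(x)
\end{equation*}
(possibly after lattice-dependent shifts) whose cells $\vR_a:=\{F_a=F^\sharp\}$ partition $\bX$ in a way adapted to the cluster hierarchy. One then obtains a smooth convex $F=F^\sharp\ast\rho_\delta$ by convolution with a nonnegative bump $\rho_\delta$ of radius $\delta$ much smaller than the minimal cell separation, sets $\omega=\nabla F$, and defines $\tilde q_a=\chi_{\vR_a}\ast\rho_\delta$.

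Given such a construction, the five properties are to be verified structurally. For \ref{item:4k}, the distributional Hessian of $F^\sharp$ is $\sum_a\chi_{\vR_a}\pi_a+\mu$ with $\mu\geq 0$ a singular part supported on the cell walls; mollification preserves positivity and yields $\omega_*\geq\sum_a\tilde q_a\pi_a$. For \ref{item:5k}, the parameters are to be tuned so that in a $\delta$-neighborhood of $\{|x^a|<r_1\}$ only cells $\vR_b$ with $\bX_b\subset\bX_a$ (equivalently $a\subset b$) contribute, so that $\omega\in\bX_a$ there and hence $\omega^a=(I-\pi_a)\omega=0$. Items \ref{item:6k} and \ref{item:7k} will then be immediate from the cell-support conditions and the smallness of $\delta$, while \ref{item:7ik} follows because on each cell $F^\sharp(x)-\tfrac12|x|^2=-\tfrac12|x^a|^2+C_a$ is bounded together with its iterated radial derivatives (using $(x\cdot\nabla)x^a=x^a$ and $|x^a|\leq r_2$ on the cell), and mollification preserves these bounds.

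The chief obstacle will be the combinatorial tuning of the constants $\{C_a\}$ (and possibly of additional lattice-dependent shifts) so that the cluster-support conditions hold simultaneously on every cell together with the rigidity statement \ref{item:5k} near every collision manifold $\bX_a$. The naive choice fails for pairs of incomparable cluster decompositions, where the required inequalities on $\vR_a$ and $\vR_b$ contradict each other by symmetry. One must therefore proceed inductively along the partial order on $\vA$, exploiting crucially the intersection-closedness of the family $\{\bX^a\}_{a\in\vA}$ to arrange consistency; this lattice bookkeeping is the substantive technical content of Graf's construction, and it is where the specific geometry of generalized $N$-body systems really enters.
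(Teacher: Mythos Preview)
The paper does not give its own proof of this lemma: it is quoted as a known result, with the attribution ``cf.\ \cite[Lemma 4.3]{Sk3}'' and the remark that the construction is due to Graf \cite{Gra}. Your sketch is precisely an outline of that Graf--Skibsted construction (convex max of cluster-quadratics, mollification, cell partition), so it is aligned with the cited sources and there is nothing to compare against in the present paper.
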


  \begin{comment}
    Of course a similar vector field $\omega^a(x^a)=\omega^{C_1}\big
    (x^{C_1}\big )\oplus\dots \oplus\omega^{C_k}\big (x^{C_k}\big )$
    exists for any $k$-cluster decomposition $a= (C_1, \dots,
    C_k)$. We note that the properties enlisted in Lemma
    \ref{lemma:vector field} are scaling invariant. We are in fact
    going to use the rescaled vector field $\omega_R(x):=R\omega(\frac
    xR)$ for $R>>1$. There is yet another property (cf. \cite[Appendix
    A]{Sk2}):

    For any $\delta>0$ there exists $\tilde R >1$ such that for all
    $a\in \vA$
    \begin{align}
      \label{eq:37k}
      \omega(x)&=x_a+\omega^a(x^a)\mforall x\in \bY,\\
      \bY=\bY_{a,\delta,\tilde R}&:=\{x\in \bX | \,|x|>\tilde
      R,\,|x^{(ij)}|>\delta |x|\mif (ij)\not \subset a\}.\nonumber
    \end{align} Rescaling as before, $\omega_R(x):=R\omega(\frac xR)$
    and $\omega_R^a(x^a):=R\omega^(\frac {x^a}R)$, obviously a
    consequence of \eqref{eq:37k} is the analogous result for the
    rescaled fields:
    \begin{equation}
      \label{eq:39k}
      \omega_R(x)=x_a+\omega^a_R(x^a)\mforall x\in \bY_{a,\delta,R\tilde R},\;R >1.
    \end{equation}
  \end{comment}

  Now, proceeding as in \cite{Sk3}, we introduce the rescaled vector
  field $\omega_R(x):=R\omega(\frac xR)$ and the corresponding
  operator
  \begin{equation*}
    A=A_R=\omega_R(x)\cdot p+p\cdot \omega_R(x);\;R>1.
  \end{equation*} We also   introduce the  function $d:\R\to\R$  by
  \begin{equation}\label{eq:44k}
    d(E)=\begin{cases}\inf _{\tau\in \vT (E)}(E-\tau),\;\vT
      (E):=\vT\cap \,]-\infty,E]\neq \emptyset,\\
      1,\;\vT
      (E)=\emptyset.
    \end{cases}
  \end{equation} These devices enter into the following Mourre
  estimate.  We remark that  all  inputs needed for the proof are
  stated in Lemma \ref{lemma:vector field}  and that although
  \cite[Corollary 4.5]{Sk3} is stated for relatively operator compact
  potentials the proof of \cite{Sk3} generalizes to include the class of relatively form compact
  potentials of Condition \ref{cond:smooth1}. For a different proof we
  refer to \cite{Gri}.
  \begin{lemma}
    \label{lemma:Mourre1} For all $E\in \R$ and $\epsilon>0$ the
    exists $R_0>1$ such that for all $R\geq R_0$ there is a
    neighbourhood $\vV$ of $E$ and a compact operator $K$ on
    $L^2(\bX)$ such that
    \begin{equation}\label{eq:40k}
      f(H)^*\i [H,A_R]f(H)\geq f(H)^*\{4d(E)-\epsilon-K\}f(H)\mforall
      f\in C^\infty_{\c}(\vV).
    \end{equation}
  \end{lemma}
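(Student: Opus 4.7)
My plan is the standard Mourre induction on the cluster size $\#a_{\min}=N+1$; I assume the estimate is already known for each proper sub-Hamiltonian $H^b$ with $\#b\ge 2$. The Graf partition and field supply the geometric positivity that yields the sharp constant $4d(E)$.

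First I would compute $\i[H,A_R]$. The kinetic contribution is $2p\cdot(\omega_*)_R(x)\cdot p$ up to a smooth $O(R^{-2})$ remainder, where $(\omega_*)_R(x):=\omega_*(x/R)$. For the smooth piece of each potential, $\i[V_b^{(1)},A_R]=2\omega_R\cdot\nabla V_b^{(1)}$, which is $O(\langle x\rangle^{-\varepsilon})$ by \eqref{eq:1k} and Lemma \ref{lemma:vector field}(\ref{item:7ik}); the singular piece $V_b^{(2)}$ contributes only form-compact terms between factors of $f(H)$. Applying Lemma \ref{lemma:vector field}(\ref{item:4k}) rescaled, $(\omega_*)_R(x)\ge\sum_a\pi_a\tilde q_{a,R}(x)$ with $\tilde q_{a,R}(x):=\tilde q_a(x/R)$, and symmetrizing $p\cdot\pi_a\tilde q_{a,R}\cdot p$ into $p_a\tilde q_{a,R}p_a$ modulo form-compact commutator errors of size $O(R^{-1})$, I arrive at
\begin{equation*}
 \i[H,A_R]\ge 2\sum_a p_a\tilde q_{a,R}p_a+2\sum_{b\in\vB}\omega_R\cdot\nabla V_b^{(1)}+K_R,
\end{equation*}
with $K_R$ form-compact uniformly for $R$ large.

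Next I would use the identity $p_a^2=2(H-H^a-I_a)$ and symmetric reshuffling, giving
\begin{equation*}
 2p_a\tilde q_{a,R}p_a=2\tilde q_{a,R}(H-H^a)+2(H-H^a)\tilde q_{a,R}-4\tilde q_{a,R}I_a+\text{compact}.
\end{equation*}
By item (\ref{item:6k}) of Lemma \ref{lemma:vector field}, $|x^b|>r_1R$ on $\supp\tilde q_{a,R}$ whenever $b\not\subset a$, and combined with Condition \ref{cond:smooth1} this forces both $I_a\tilde q_{a,R}$ and the inter-cluster potential commutators to tend to zero in form-compact norm as $R\to\infty$. The intra-cluster pieces ($b\subset a$) repackage as commutators with $H^a$ and are absorbed by the Virial theorem after the $f(H)$-sandwich.

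Finally, sandwich with $f(H)$ where $\supp f\subset\vV$ is a small neighbourhood of $E$. Inside each term, replacing $H$ by $E$ costs only $O(|\vV|)$, while the inductive Mourre estimate for $H^a$ together with the Virial theorem forces $\sigma_{\pupo}(H^a)\cap(E-d(E),E]$ to be finite and contained in $\vT$, so the associated spectral projection is finite rank, hence compact, and on its complement $E-H^a\ge d(E)-\tfrac{\epsilon}{4}$. Summing over $a$ with $\sum_a\tilde q_{a,R}=1$ delivers \eqref{eq:40k} for $R$ large and $\vV$ small. The hard part is this last step: executing the Froese--Herbst/Perry--Sigal--Simon induction in the relatively form-compact framework of Condition \ref{cond:smooth1}, where one must check that the Virial identity and Mourre's differential inequality retain their usual form; this is the only non-routine extension of \cite[Corollary 4.5]{Sk3}.
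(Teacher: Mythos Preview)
The paper does not give a self-contained proof of this lemma; it merely remarks that all geometric inputs are provided by Lemma~\ref{lemma:vector field} and that the argument of \cite[Corollary~4.5]{Sk3} (stated there for relatively operator-compact potentials) carries over to the form-compact class of Condition~\ref{cond:smooth1}, with \cite{Gri} cited as an alternative reference. Your sketch is exactly this argument: the commutator computation matches the paper's own \eqref{eq:mourre comm}--\eqref{eq:basic_est}, the Graf partition supplies the positivity, and the induction on sub-Hamiltonians is the standard Froese--Herbst/Perry--Sigal--Simon mechanism. You have also correctly isolated the only non-routine point---verifying that the Virial identity and the differential-inequality machinery survive when the $V_b^{(2)}$ are merely form-compact---which is precisely the extension the paper flags.
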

  Here the commutator is given by \eqref{eq:mourre comm} stated
  below. The possibly existing local singularities of the potential
  does not enter (for $R$ large) due to Lemma \ref{lemma:vector field}
  \ref{item:5k}. This feature motivates application to hard-core
  models, see Subsection \ref{subsec:Generalized hard-core $N$-body
    systems}.

  Two of the consequences of a Mourre estimate like the one stated
  above are that the set of thresholds $\vT$ is closed and countable
  and that the eigenvalues of $H$ can at most accumulate at $\vT$. We
  discuss a third consequence, decay of non-threshold eigenstates, in
  Subsection \ref{subsec:Generalized hard-core $N$-body systems}.
  
\subsection{Generalized $N$-body hard-core  
  systems}\label{subsec:Generalized hard-core  $N$-body
  systems}  The generalized hard-core model is a modification for the
above model. For the generalized hard-core  model we are   given  for
each $a\in \vB$ an open subset
$\Omega_a\subset \bX^a$ with $\bX^a\setminus \Omega_a$ compact,
possibly $\Omega_a= \bX^a$. Let for $a_{\min}\neq b\subset a$
\begin{equation*}
  \Omega^a_b=\parb{ \Omega_b+\bX_b}\cap \bX^a=\Omega_b+\bX_b\cap \bX^a,
\end{equation*} and for $a \neq a_{\min}$ 
\begin{equation*}
  \Omega^a=\cap_{a_{\min}\neq b\subset a} \Omega^a_b.
\end{equation*} We define $\Omega^{a_{\min}}=\{0\}$.

\begin{cond}\label{cond:smooth2} 
  There exists $\varepsilon >0$ such that for all $b\in\vB$ there is a
  splitting $V_b=V_b^{(1)}+V_b^{(2)}$, where
  \begin{enumerate}[label=(\arabic*)]
  \item \label{item:cond12b} $V_b^{(1)}$ is smooth on the closure of
    $\Omega_b$ and
    \begin{equation}
      \label{eq:1k2}
      \partial ^\alpha_yV_b^{(1)}(y)=O\big(|y|^{-\varepsilon-|\alpha|}\big ).
    \end{equation}
  \item \label{item:cond122} $V_b^{(2)}$ vanishes outside a bounded
    set in $\Omega ^b$ and
    \begin{equation}
      \label{eq:2k2}
      V_{b}^{(2)}\in \vC\big (H^1_0(\Omega_b),H^1_0(\Omega_b)^*\big ).
    \end{equation}
  \end{enumerate}
\end{cond}
Here and henceforth, given Banach spaces $X_1$ and $X_2$, the notation
$\vC\big (X_1, X_2\big )$ and $\vB\big (X_1, X_2\big )$ refers to the
set of compact and the set of bounded operators $T:X_1\to X_2$,
respectively.

We consider for $a\in\vB$ the Hamiltonian $H^a=-\tfrac 12
\Delta_{x^a} +V^a$ on $L^2(\Omega^a)$ with Dirichlet boundary
condition on $\partial \Omega^a$, in particular $H=\tfrac 12 p^2+V$
on $L^2(\Omega)$ with Dirichlet boundary condition on $\partial
\Omega$ where $ \Omega:=\Omega^{a_{\max}}$.  The corresponding form
domain is the Sobolev space $H^1_0(\Omega^a)$. Due to the continuous
embedding $H_0^1(\Omega^a)\subset H_0^1(\Omega^a_b)$ for
$a_{\min}\neq b\subset a$ we conclude that indeed $H^a$ is
self-adjoint, cf. \cite[Theorem X.17]{RS}. Again we define
$H^{a_{\min}}=0$, and the set of thresholds is also given as in
Subsection \ref{subsec:Usual generalized $N$-body systems}. We note
that one can replace the Hilbert space $L^2(\bX)$ in Lemma
\ref{lemma:Mourre1} by $\vH:=L^2(\Omega)$ and then obtain a Mourre
estimate for the present Hamiltonian $H$, cf. \cite[Theorem
2.4]{Gri}. All what is needed for this is to make sure that $R>1$ is
so large that the rescaled Graf vector field $\omega_R$ either
vanishes or acts tangentially on the boundary $\partial \Omega$. The
latter is doable due to Lemma \ref{lemma:vector field} \ref{item:5k}.

  According to \cite[Theorem 2.5(1)]{Gri} non-threshold eigenstates
  decay exponentially at rates determined by thresholds above the
  corresponding eigenvalues. This is a consequence of the hard-core
  Mourre estimate by arguments similar to the ones of \cite{FH} for
  usual $N$-body Hamiltonians. In \cite{Gri} Griesemer states as an
  open problem absence of positive eigenvalues under an additional
  connectedness condition. This is the problem we shall address in the
  present paper. The pattern of proof of \cite{FH} does not work
  except the following induction scheme: For $N=1$ absence of positive
  eigenvalues follows from various papers (assuming that
  $\Omega\subset \bX$ is connected), for example most recently
  \cite{IS}. For $N\geq 2$ we could suppose by induction that the
  result holds for sub-Hamiltonians, whence that there are no positive
  thresholds. Using the hard-core Mourre estimate in a similar way as
  for soft potentials \cite{FH, Gri, IS} we then deduce that an
  eigenstate with corresponding positive eigenvalue would decay
  super-exponentially, cf. \cite[Theorem 2.5(1)]{Gri}. This would be
  derived in terms of the potential function $r$ discussed
  below. Whence for any such eigenstate $\phi$ (i.e. corresponding to
  a positive eigenvalue) we would have $\e^{\sigma r}\phi\in
  L^2(\Omega)$ for all $\sigma\geq 0$. Consequently what would remain
  to be shown for completing the induction argument is that
  super-exponentially decaying eigenstates vanish.

  Although we are not able in general to implement  the above scheme  for showing
  absence of positive eigenvalues we show a partial result which
  reduces the problem to resolvent estimates for sub-system type
  Hamiltonians. Moreover we do in fact implement the scheme for $N=2$
  under additional conditions.
  \begin{cond}\label{cond:smooth3}
    Suppose $N\geq 2$. For all $b\in \vB\setminus\{a_{\max}\}$ with
    $\Omega_b\subsetneq\bX^b$ the set
    $\Theta_b:=\bX^b\setminus\overline {\Omega_b}\neq\emptyset$ has
    smooth boundary $\partial \Theta_b=\partial \Omega_b$ and is
    strictly convex.
  \end{cond}

  For the notion of strict convexity used in this paper we refer to
  Appendix \ref{sec:Appendix}. Given Condition \ref{cond:smooth3}, by
  definition  if
  $\Omega_b\subsetneq\bX^b$, then $\dim \bX^b\ge 2$. With minor
  modifications we could have allowed  $\dim \bX^b=1$ in the
  definition of strict convexity and obtained the same results, however for convenience we
 prefer not to  do that. 

  The main result of this paper is the following.
  \begin{thm}\label{thm:gener-hard-core} Suppose $N=2$ and Conditions
    \ref{cond:smooth2} and \ref{cond:smooth3}. Suppose that for all
    $b\in \vB\setminus\{a_{\max}\}$ with $\Omega_b\subsetneq\bX^b$ the
    term $V^{(2)}_b=0$ while for all $b\in \vB\setminus\{a_{\max}\}$
    with $\Omega_b=\bX^b$
    \begin{equation}
      \label{eq:2k2bb}
      x^b\cdot \nabla V_{b}^{(2)}(x^b),\,(x^b\cdot \nabla)^2 V_{b}^{(2)}(x^b)\in \vC\big (H^1(\bX^b),H^1(\bX^b)^*\big ).
    \end{equation} Suppose that any eigenstate of $H$   vanishing at infinity
    must be zero (the unique continuation property). Then $H$ does not have positive eigenvalues.
  \end{thm}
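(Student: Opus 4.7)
The plan is a three-step implementation of the induction scheme sketched in the introduction. Assume for contradiction that $E>0$ is an eigenvalue of $H$ with eigenstate $\phi\in L^2(\Omega)\setminus\{0\}$.

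\emph{Step 1: no positive thresholds.} In the $N=2$ framework $\vT$ is the union of $\{0\}$ (from $H^{a_{\min}}$) and the point spectra of the intermediate one-body sub-Hamiltonians $H^b$ with $\#b=2$. For $\Omega_b=\bX^b$ each $H^b$ is an ordinary Schr\"odinger operator on $\bX^b$ with potential of type \eqref{eq:1k2} plus a form-compact piece controlled by \eqref{eq:2k2bb}, so the classical results (Kato, Agmon, Froese--Herbst) rule out positive eigenvalues. For $\Omega_b\subsetneq\bX^b$, Condition \ref{cond:smooth3} puts us in the one-body exterior-of-a-strictly-convex-obstacle setting of \cite{IS}, again yielding no positive eigenvalues (the connectedness hypothesis there is automatic since $\Omega_b$ is the complement of a convex body in dimension $\dim\bX^b\ge 2$). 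Combining, $\vT\subset(-\infty,0]$, so $\vT\cap[E,\infty)=\emptyset$ and $d(E)=E>0$ in \eqref{eq:44k}.

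\emph{Step 2: super-exponential decay of $\phi$.} Apply the hard-core Mourre estimate of Lemma \ref{lemma:Mourre1} to $\vH=L^2(\Omega)$; this is available because by Lemma \ref{lemma:vector field}\ref{item:5k} the rescaled Graf field $\omega_R$ can be arranged tangential to $\partial\Omega$ for $R$ large, which neutralizes the Dirichlet boundary contributions to $\i[H,A_R]$ (cf.\ \cite{Gri}). The Froese--Herbst iteration in the form of \cite[Theorem 2.5(1)]{Gri}, combined with Step 1, then aims to give $e^{\alpha r}\phi\in\vH$ for every $\alpha\ge 0$, where $r$ is a smoothed version of $|x|$. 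The delicate point, and the main obstacle of the whole proof, is running the iteration uniformly as $\alpha\to\infty$ in the hard-core setting: after conjugating the eigenvalue equation by $e^{\alpha r}$ and testing against $\i[H,A_R]$, one must absorb an $\alpha$-dependent error. Following the paper's scheme, the Graf partition $\{\tilde q_a\}$ and the decomposition $\omega_R=x_a+\omega_R^a$ on $\supp\tilde q_a$ reduce this error, modulo the Mourre lower bound, to a high-energy limiting absorption estimate on the sub-system resolvent $(H^b-\zeta)^{-1}$ at $\zeta\sim E+\tfrac12\alpha^2+\i\alpha$. For $N=2$, $H^b$ is a one-body operator and such an estimate is exactly what the Mourre commutator method delivers, provided one can compute and bound two commutators with a suitable conjugate operator; the hypotheses $V_b^{(2)}=0$ on hard-core channels and \eqref{eq:2k2bb} on soft channels are tailored precisely to make this $C^2$-type condition hold. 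Any interior normal derivatives that survive the integration by parts at $\partial\Omega$ are controlled by strict convexity of the obstacles $\Theta_b$ (Condition \ref{cond:smooth3}), which provides the correct sign.

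\emph{Step 3: conclusion via unique continuation.} Having $e^{\alpha r}\phi\in\vH$ for every $\alpha\ge 0$ implies $\phi(x)\to 0$ as $|x|\to\infty$, so the unique continuation hypothesis of the theorem forces $\phi=0$, contradicting $E\in\sigma_{\pupo}(H)$. The structural conditions of the theorem play no role in this last step; they are in place solely to run the Mourre bootstrap of Step 2, which is where the real work lies.
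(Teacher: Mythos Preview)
Your outline has the right three headings but misplaces the work and contains a genuine gap in Step~3.

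The unique continuation hypothesis in the theorem means that an eigenstate which is \emph{identically zero on a neighbourhood of infinity} must vanish; this is how the paper uses it (``we learn \ldots\ that $\chi_{4\nu}(r)\phi\equiv 0$ \ldots, and then in turn from the unique continuation property that $\phi=0$''). Super-exponential $L^2$ decay gives $\phi(x)\to 0$ pointwise, but that is automatic for any $L^2$ eigenstate with mild local regularity and is \emph{not} the input needed. So Step~3 as you wrote it is a non-sequitur: you still owe the implication
\[
\bigl(\forall\sigma\ge 0:\ e^{\sigma r}\phi\in\vH\bigr)\ \Longrightarrow\ \phi\equiv 0 \text{ outside a compact set},
\]
which is exactly the statement \eqref{eq:hard_part} and is the heart of the paper (Section~\ref{sec:Reduction to high-energy hard-core sub-system resolvent bounds} together with Section~\ref{sec:One-body high-energy hard-core  resolvent bound}).

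Relatedly, you have inverted the roles in Step~2. Once Step~1 gives $\vT\cap(0,\infty)=\emptyset$, the bound $e^{\sigma r}\phi\in\vH$ for every $\sigma\ge 0$ is a direct citation of \cite[Theorem~2.5(1)]{Gri}; there is no ``uniformity as $\alpha\to\infty$'' obstruction in the Froese--Herbst iteration itself. The genuinely new argument comes \emph{after} super-exponential decay is in hand: one sets $\phi_\sigma=\chi_\nu e^{\sigma(r-4\nu)}\phi$, computes $\inp{\i[H^\sigma,A]}_{\phi_\sigma}$, and must extract a lower bound of order $\sigma\|\tilde\phi_\sigma\|^2$ beating the errors. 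The Graf partition localises to cells $q_{b,R}$, and on each cell the dangerous piece is controlled by the high-energy Besov-space bound \eqref{eq:Basic_assump} for the \emph{modified} operator $\widetilde H_b$ of \eqref{eq:Ham_dis} (not for $(H^b-\zeta)^{-1}$ at complex $\zeta$ as you wrote). Verifying \eqref{eq:Basic_assump} for $N=2$ is the content of Section~\ref{sec:One-body high-energy hard-core  resolvent bound}; the hypotheses $V_b^{(2)}=0$ on obstacle channels and \eqref{eq:2k2bb} on soft channels enter there, through the first- and second-commutator computations \eqref{eq:1_comm}, \eqref{eq:secCom} and \eqref{eq:1_comm2}.

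Finally, strict convexity is not used to handle boundary normal derivatives. Its role (Appendix~\ref{sec:Appendix}) is to ensure that the obstacle distance function $r$ of \eqref{eq:r} is smooth on all of $\Omega$ and satisfies the Hessian lower bound \eqref{eq:low_bound}, $\nabla^2 r_{|\{r=r\}}\ge c(1+r)^{-1}I$; this bound is what makes the first-order commutator \eqref{eq:1_commbb} positive and drives the Mourre-type argument behind Lemma~\ref{lemma:resolvent-boundsA} and Proposition~\ref{prop:setting-problem}.
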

  The unique continuation property is a well-studied subject, see for
  example \cite{Ge, JK, RS, Wu}. It is valid for a large class of
  potential singularities given connectedness of $\Omega$.
  \begin{corollary}
    \label{cor:gener-hard-core} For $N=2$ charged particles confined
    to the exterior of a bounded strictly convex obstacle
    $\Theta_1\subset \mathbb{R}^d$ containing $0$, $d\geq2$, the
    corresponding Hamiltonian $H$ given by \eqref{eq:Hami_1} with
    Coulomb interactions $V^{\rm ncl}_{j}(y)=q_jq^{\rm ncl}|y|^{-1}$
    and $V_{ij}^{\rm elec}(y)=q_iq_j|y|^{-1}$ does not have positive
    eigenvalues.
  \end{corollary}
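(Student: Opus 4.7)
The plan is to verify that the hypotheses of Theorem~\ref{thm:gener-hard-core} hold for the Coulomb Hamiltonian in the statement, and then invoke the theorem. The system \eqref{eq:Hami_1} with $N=2$ fits the generalized $2$-body framework with $\bX=(\R^d)^2$ equipped with the mass inner product $q(x)=\sum_{j=1}^2 m_j|x_j|^2$, and three pair-cluster subspaces: $\bX^{b_j}=\{x\in\bX:x_{3-j}=0\}$ for $j=1,2$ carrying the nuclear interactions $V^{\rm ncl}_j$, and $\bX^{b_{12}}$, the $q$-orthogonal complement of the joint center-of-mass subspace, carrying the electronic interaction $V^{\rm elec}_{12}$. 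Correspondingly $\Omega_{b_j}\subsetneq\bX^{b_j}$ with $\Theta_{b_j}$ isometric to $\Theta_1$ up to a mass scaling, while $\Omega_{b_{12}}=\bX^{b_{12}}$ since the electronic interaction carries no hard core.

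Condition~\ref{cond:smooth3} is then immediate: each $\Theta_{b_j}$ is the image of $\Theta_1$ under a linear similarity of $\bX^{b_j}$, hence has smooth boundary and is strictly convex. To verify Condition~\ref{cond:smooth2} together with the supplementary $V^{(2)}$ hypotheses of Theorem~\ref{thm:gener-hard-core}, I proceed case by case. For $b=b_j$, the assumption $0\in\Theta_1$ keeps the Coulomb singularity strictly inside $\Theta_{b_j}$, so $V^{\rm ncl}_j(y)=q_jq^{\rm ncl}/|y|$ is smooth on $\overline{\Omega_{b_j}}$ and satisfies \eqref{eq:1k2} with $\varepsilon=1$; I set $V^{(1)}_{b_j}=V^{\rm ncl}_j$ and $V^{(2)}_{b_j}=0$, meeting the $V^{(2)}_b=0$ requirement in the hard-core case. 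For $b=b_{12}$, pick a smooth cutoff $\chi$ equal to $1$ near $0$ and supported in a ball, and split $V^{\rm elec}_{12}=V^{(1)}+V^{(2)}$ with $V^{(1)}=(1-\chi)V^{\rm elec}_{12}$ (smooth with Coulomb decay) and $V^{(2)}=\chi V^{\rm elec}_{12}$ (compactly supported). A direct computation gives $y\cdot\nabla V^{(2)}=(y\cdot\nabla\chi-\chi)q_1q_2/|y|$ and an entirely analogous expression for $(y\cdot\nabla)^2 V^{(2)}$; each is compactly supported with at most a $|y|^{-1}$ singularity. Form-compactness on $H^1(\bX^{b_{12}})$ then follows from Rellich compactness together with the Sobolev embedding $H^1\hookrightarrow L^q$ (for any $q<\infty$ in $d=2$, and for $q=2d/(d-2)$ in $d\ge 3$) combined with $|y|^{-1}\in L^p_{\rm loc}$ for suitable $p$ and Hölder's inequality, establishing \eqref{eq:2k2} and \eqref{eq:2k2bb} simultaneously.

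Finally, $\Omega=(\Omega_1)^2$ is connected for $d\ge 2$, since $\Omega_1=\R^d\setminus\overline{\Theta_1}$ is the complement of a convex compact set in a space of dimension at least two. The unique continuation property for $H$ then follows from classical results for Schrödinger operators with Coulomb singularities on connected domains, e.g.~\cite{JK, Wu}. All hypotheses of Theorem~\ref{thm:gener-hard-core} being in place, the theorem applies and yields absence of positive eigenvalues. The only mildly delicate point in the verification is form-compactness of $V^{(2)}$ in $d=2$, where $|y|^{-1}\notin L^2_{\rm loc}$; this is the main technical obstacle, handled by trading the integrability deficit against the improved $L^q$-regularity of $H^1$-functions in two dimensions.
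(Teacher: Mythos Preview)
Your approach is essentially the same as the paper's: verify the hypotheses of Theorem~\ref{thm:gener-hard-core} and invoke it. The paper's own justification is a single remark following the corollary, so your write-up is in fact more detailed.

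One small technical point on the unique continuation step. The paper works on $(\Omega_1)^2\setminus\{x_1=x_2\}$ rather than $(\Omega_1)^2$, excising the Coulomb collision set so that the potential is locally regular and the version of UCP in \cite{RS} applies directly; connectedness survives for $d\ge 2$ because the diagonal has codimension $d$ in $(\R^d)^2$. Your appeal to \cite{JK} on the full domain would need $|x_1-x_2|^{-1}\in L^{n/2}_{\rm loc}(\R^{2d})$ with $n=2d$, i.e.\ $L^d_{\rm loc}$, and this fails (logarithmic divergence transverse to the diagonal). The fix is exactly the paper's: remove the diagonal, note that $(\Omega_1)^2\setminus\{x_1=x_2\}$ is still connected, and apply UCP there. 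This is a routine adjustment, not a gap in the overall strategy.
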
 Note for Corollary \ref{cor:gener-hard-core} that
  indeed $\Omega=(\Omega_1)^2\setminus\{(x_1,x_2)\in
  (\R^{d})^2|x_1=x_2\}$ is a connected subset of $\R^{2d}$ for
  $d\geq2$   (which follows readily using that $\Omega_1\subset \R^{d}$ is
  connected) and that the version of the unique continuation property
  of \cite{RS} applies.

  Another result of this paper is the following statement in which a
  technical condition stated in Section \ref{sec:Reduction to
    high-energy hard-core sub-system resolvent bounds} enters.
  \begin{proposition}\label{prop:N-body} Suppose $N\geq 2$ and
    Conditions \ref{cond:smooth2} and
    \ref{cond:general-scheme}. Suppose $H$ does not have positive
    thresholds. Suppose that  any eigenstate of $H$ vanishing at
    infinity must be zero (the unique continuation
    property). Then $H$
    does not have positive eigenvalues.
  \end{proposition}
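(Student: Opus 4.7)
The plan is to run the Froese--Herbst induction scheme sketched in the introduction. Assume for contradiction that there exists a non-zero $\phi\in \vH:=L^2(\Omega)$ with $H\phi=E\phi$ for some $E>0$; the goal is to show that $\phi$ must decay super-exponentially, which combined with the unique continuation hypothesis forces $\phi\equiv 0$ and delivers the contradiction.

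The starting point is the hard-core version of the Mourre estimate of Lemma \ref{lemma:Mourre1}, obtained by replacing $L^2(\bX)$ with $\vH$ and choosing $R$ so large that $\omega_R$ either vanishes or acts tangentially on $\partial\Omega$ (which is possible by Lemma \ref{lemma:vector field} \ref{item:5k}). Under the hypothesis $\vT\cap(0,\infty)=\emptyset$ one has $d(E)\geq E>0$, so \eqref{eq:40k} yields a strictly positive Mourre lower bound on a neighbourhood of $E$. The standard Froese--Herbst virial argument in its hard-core form (see \cite[Theorem 2.5(1)]{Gri}) then delivers some initial exponential decay of $\phi$, of the form $\e^{\sigma r}\phi\in\vH$, where $r$ is the potential function associated with the rescaled Graf field $\omega_R$.

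The crucial step is to bootstrap this into super-exponential decay, $\e^{\sigma r}\phi\in\vH$ for \emph{every} $\sigma\geq 0$. One takes $\sigma_\infty\in(0,\infty]$ to be the supremum of admissible rates and examines the conjugated operator $H_\sigma=\e^{\sigma r}(H-E)\e^{-\sigma r}$. In the soft-potential setting, $\sigma_\infty<\infty$ would force a threshold of $H$ to sit at the associated effective energy, and since no positive thresholds exist the iteration can be pushed to $\sigma_\infty=+\infty$. In the hard-core setting the exponential conjugation produces Dirichlet boundary contributions on each $\partial\Omega^a$ that obstruct the bare virial identity, and this is precisely where Condition \ref{cond:general-scheme} enters: the uniform high-energy resolvent estimates on the sub-Hamiltonians $H^a$ that it demands are exactly what is needed to absorb those boundary terms uniformly as $\sigma\to\infty$ and close the bootstrap. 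Once $\e^{\sigma r}\phi\in\vH$ holds for all $\sigma\geq 0$, the eigenstate $\phi$ vanishes at infinity in the sense of the hypothesis, and the assumed unique continuation property yields $\phi\equiv 0$.

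The main obstacle, and the reason the statement is phrased as a reduction rather than as an unconditional theorem, is exactly this bootstrap in the presence of Dirichlet obstacles: without uniform sub-system resolvent bounds there is no apparent way to control the boundary contributions generated by iterating the exponential weight, and Condition \ref{cond:general-scheme} simply axiomatises the estimates that close this gap. The companion Theorem \ref{thm:gener-hard-core} is then obtained by verifying Condition \ref{cond:general-scheme} for $N=2$ under the strict convexity Condition \ref{cond:smooth3}, where the relevant sub-Hamiltonian reduces to a one-body exterior problem amenable to a direct Mourre commutator analysis.
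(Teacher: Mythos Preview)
Your outline has the two-step architecture right (super-exponential decay, then vanishing via unique continuation), but you have mislocated the role of Condition~\ref{cond:general-scheme}, and this leaves a genuine gap.

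The bootstrap from exponential to super-exponential decay does \emph{not} require Condition~\ref{cond:general-scheme}. Once the hard-core Mourre estimate is in place and there are no positive thresholds, \cite[Theorem 2.5(1)]{Gri} already delivers $\e^{\sigma r}\phi\in\vH$ for every $\sigma\geq 0$; this is exactly the point made in the paragraph preceding the statement of Proposition~\ref{prop:N-body}. Your claim that the exponential conjugation ``produces Dirichlet boundary contributions on each $\partial\Omega^a$ that obstruct the bare virial identity'' at this stage is not the actual obstruction---the Froese--Herbst/Griesemer machinery handles that step without additional input.

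The real gap is in your next sentence: super-exponential $L^2$ decay does \emph{not} imply that $\phi$ ``vanishes at infinity'' in the sense required by the unique continuation hypothesis (namely, that $\phi$ is identically zero outside some compact set). Bridging this---showing that a positive-energy eigenstate with $\e^{\sigma r}\phi\in\vH$ for all $\sigma$ must satisfy $\chi_{4\nu}(r)\phi\equiv 0$ for large $\nu$---is precisely the content of \eqref{eq:hard_part}, and the whole of Section~\ref{sec:Reduction to high-energy hard-core sub-system resolvent bounds} is devoted to it. That argument is a quantitative commutator estimate on the weighted state $\phi_\sigma=\chi_\nu\e^{\sigma(r-4\nu)}\phi$, in which the Graf partition localises to sub-system channels and produces terms of the form $\inp{\chi^-_{\sigma^2/8}(\tilde p_b^2)}_{q_{b,R}\tilde\phi_\sigma}$. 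It is to control \emph{these} terms, uniformly as $\sigma\to\infty$, that the high-energy resolvent bounds of Condition~\ref{cond:general-scheme} are invoked (see \eqref{eq:bound_uncer} and the surrounding discussion). Only after this does one conclude $\chi_{4\nu}(r)\phi=0$, and only then does unique continuation apply.

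So the correction is: move Condition~\ref{cond:general-scheme} from the decay bootstrap (where it is not needed) to the step ``super-exponential decay $\Rightarrow$ compact support'', and recognise that this step is a substantial analytic argument rather than an immediate consequence.
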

  By imposing the analogous version of Condition
  \ref{cond:general-scheme} for sub-Hamiltonians as well as the unique
  continuation property for these operators and for $H$ (in addition
  to Condition \ref{cond:smooth2}) we obtain that $H$ does not have
  thresholds nor positive eigenvalues, cf. the scheme discussed
  above. However since we are only able to verify Condition
  \ref{cond:general-scheme} for $N=2$ using Condition
  \ref{cond:smooth3} we need these restrictions in Theorem
  \ref{thm:gener-hard-core}. Nevertheless, since proving Condition
  \ref{cond:general-scheme} for higher $N$ under Condition
  \ref{cond:smooth2} possibly as well as under Condition
  \ref{cond:smooth3} could be a purely technical difficulty, we
  consider the statement of Proposition \ref{prop:N-body} in such  situation as a result of independent
  interest. We devote Section \ref{sec:Reduction to high-energy
    hard-core sub-system resolvent bounds} to the crucial step in the
  proof. Section \ref{sec:One-body high-energy hard-core resolvent
    bound} is devoted to the verification of Condition
  \ref{cond:general-scheme} for $N=2$. Supplementary material is given
  in Appendices \ref{sec:Appendixb} and \ref{sec:Appendix}.

\subsection{Geometric properties}\label{Geometric properties}
We complete this section by a brief discussion of some properties
related to Lemma \ref{lemma:vector field}, and we show an estimate
which may be viewed as a first step in a proof of (a hard-core version
of) Lemma \ref{lemma:Mourre1} (not to be elaborated on in this
paper). These properties will be important in Section
\ref{sec:Reduction to high-energy hard-core sub-system resolvent
  bounds}.
\subsubsection{Potential function}
Since $\omega_*$ is symmetric we can write
\begin{equation*}
  \omega=\nabla r^2/2. 
\end{equation*}
It will be important for us that the function $r=r(x)$ can be chosen
positive, smooth and convex, see the proof of \cite[Proposition
4.4]{De} (we remark that \cite{De} also uses the Graf construction
although with a different regularization procedure). From the
convexity of $r$ we learn that
\begin{subequations}
  \begin{equation}
    \label{eq:posi_term}
    \partial^r|\d r|^2\geq 0;\;\partial^rf=\i p^rf:=\nabla r \cdot \nabla f.
  \end{equation}

  We have a slight extension of part of \eqref{eq:boun_constr},
  cf. \cite[Lemma 4.3 f)]{De},
  \begin{equation}
    \label{eq:small_terma}
    \forall \alpha\in \N_0^{\dim \bX}\mand
    k\in \N_0: |\partial_x^\alpha(x\cdot
    \nabla)^k\parb{r^2-x^2}|\leq C_\alpha.
  \end{equation} In particular 
  we obtain yet another useful property
  \begin{equation}
    \label{eq:small_termb}
    \forall \alpha\in \N_0^{\dim \bX}: |\partial_x^\alpha\parb{ |\d r|^2-1}|\leq C_\alpha \inp{x}^{-2}.
  \end{equation} In fact letting $f=r^2-x^2$ the bounds \eqref{eq:small_termb}
  follow  from \eqref{eq:small_terma} and the identity 
  \begin{equation*}
    |\d r|^2-1=\frac {x\cdot \nabla f +4^{-1} |\d f|^2-f}{x^2+f}.
  \end{equation*} The rescaled $r$ reads
  \begin{equation*}
    r_R(x)=Rr(x/R),
  \end{equation*} so that $\omega_R=\nabla r_R^2/2$. Clearly the bounds
  \eqref{eq:posi_term}-\eqref{eq:small_termb} are also valid  for the
  rescaled $r$ (possibly with $R$-dependent constants). We also
  rescale the partition functions of Lemma \ref{lemma:vector field}
  $\tilde q_{a,R}(x):=\tilde q_{a}(x/R)$ and similarly for the
  ``quadratic'' partition functions
  \begin{equation*}
    q_b(x)=\tilde q_b(kx)\parb{\sum_c \tilde q_c(kx)^2}^{-1/2};\;k=r_1/r_2.
  \end{equation*} Using that
  \begin{equation*}
    \tilde q_c(x)\tilde q_b(kx)=0\text{ if }c\not
    \subset b,
  \end{equation*} 
\end{subequations} and Lemma \ref{lemma:vector field} \ref{item:4k} we
conclude that
\begin{equation}
  \label{eq:2deri}
  \omega_*(x)\geq\sum _b\pi_b q_b^2(x).
\end{equation}

\subsubsection{Commutator calculation} We calculate
\begin{equation}\label{eq:mourre comm}
  \i [H,A_R]=2p\omega_*(x/R)p-(4R^2)^{-1}\parb{\Delta^2
    r^2}(x/R)-2\omega_R\cdot \nabla V,
\end{equation} and using \eqref{eq:2deri} we thus deduce
\begin{align}\label{eq:basic_est}
  \begin{split}
    \i [H,A_R]&\geq 2\sum _bq_{b,R}\,p_b^2\,q_{b,R}+O\parb
    {R^{-2}}-2\omega_R\cdot \nabla V\\
    &= 2\sum _bq_{b,R}\,p_b^2\,q_{b,R}+O\parb
    {R^{-\min\{2,\varepsilon\}}}.
  \end{split}
\end{align}

  \subsubsection{More  notation}\label{subsec:Notation}
  We fix a non-negative $\chi\in C^\infty(\mathbb{R})$ with $0\leq
  \chi\leq 1$ and
  \begin{align*}
    \chi(t)=\left\{\begin{array}{ll}
        0&\mbox{ for } t\le 5/4, \\
        1 &\mbox{ for }t \ge 7/4.
      \end{array} \right.
  \end{align*}
  We shall frequently use the rescaled function
  \begin{align}
    \chi_\nu(t)=\chi(t/\nu),\quad \nu>0,
    \label{eq:11.7.11.5.14}
  \end{align} and the notation $\chi_\nu^+=\chi_\nu$ and
  $\chi_\nu^-=\bar \chi_\nu=1-\chi_\nu$.

  For any self-adjoint operator $T$ and state $\phi$ we write
  $\inp{T}_\phi=\langle \phi ,T\phi\rangle$ for the corresponding
  expectation value.

  \section{Reduction to high-energy hard-core sub-system resolvent
    bounds}\label{sec:Reduction to high-energy hard-core sub-system
    resolvent bounds} Under Condition \ref{cond:smooth2} we propose a
  scheme for showing
  \begin{equation}
    \label{eq:hard_part}
    (H-E)\phi=0,\;E>0,  \mand \forall \sigma\geq
    0:\;\e^{\sigma r}\phi\in \vH=L^2(\Omega)\Rightarrow \phi=0.
  \end{equation}
  Here and henceforth $r=r_R$ is the rescaled potential function. We
  suppress the dependence on $R$ which is fixed (large) from this
  point. The proposed method relies on the unique continuation
  property and
  certain high-energy hard-core sub-system type resolvent bounds. The
  latter are stated in Condition \ref{cond:general-scheme} given
  below. Whence we give the crucial step of  the proof  Proposition
  \ref{prop:N-body}.

\subsection{General scheme} 
For $\phi$ given as in \eqref{eq:hard_part} we let for any $\nu\ge 1$
and $\sigma\ge 0$
\begin{align}
  \phi_\sigma=\phi_{\sigma,\nu}:=\chi_\nu\e^{\sigma (r-4\nu)}\phi\in
  \vH;\;\chi_\nu=\chi_\nu(r).
  \label{eq:11.7.3.23.49c}
\end{align} Putting $H^\sigma=H-\tfrac{\sigma^2}{2}|\mathrm{d} r|^2$
we note that
\begin{equation}\label{eq:19c}
  \begin{split}
    (H^\sigma-E)\phi_\sigma ={}& -\mathrm{i}\sigma
    (\mathop{\mathrm{Re}}{}p^r)\phi_\sigma
    -\mathrm{i}\mathrm{e}^{\sigma (r-4\nu)} R(\nu)\phi,
  \end{split}
\end{equation} where $R(\nu)=\mathrm{i}[H_0,\chi_\nu]=\mathop{\mathrm{Re}}{}\parb{\chi_\nu'p^r}$.
Whence by undoing the commutator, cf.
Appendix~\ref{sec:Appendixb},
\begin{align}
  \inp{\i[H^\sigma,A]}_{\phi_\sigma} =-2\sigma
  \mathop{\mathrm{Re}}{}\inp{(\mathop{\mathrm{Re}} p^r)
    A}_{\phi_\sigma} -2\mathop{\mathrm{Re}}{}\inp{R(\nu)
    \mathrm{e}^{\sigma (r-4\nu)}A\chi_\nu\mathrm{e}^{\sigma
      (r-4\nu)}}_{\phi}.
  \label{eq:11.7.13.4.10c}
\end{align}
The first term of (\ref{eq:11.7.13.4.10c}) is computed
\begin{align}\label{eq:27first}
  \begin{split}
    &-2\sigma \mathop{\mathrm{Re}}{}((\mathop{\mathrm{Re}} p^r) A)\\&
    =-\sigma(\mathop{\mathrm{Re}} p^r) (2r\mathop{\mathrm{Re}}{}p^r
    -\mathrm{i}|\mathrm{d} r|^2)
    +\text{h.c.}\\
    &= -4\sigma(\mathop{\mathrm{Re}} p^r)
    r\mathop{\mathrm{Re}}{}p^r+\sigma(\partial^r|\mathrm{d} r|^2).
  \end{split}
\end{align}
As for the second term we estimate (recall the notation $\bar
\chi_\nu=1-\chi_{\nu}$)
\begin{align*}
  &-2\mathop{\mathrm{Re}}{}\inp{R(\nu) \mathrm{e}^{\sigma (r-4\nu)}
    A\chi_\nu\mathrm{e}^{\sigma (r-4\nu)}}_{\phi}\\
  &\le \|\mathrm{e}^{\sigma (r-4\nu)}R(\nu)\phi\|^2+
  \|\bar\chi_{2\nu}A\chi_\nu\mathrm{e}^{\sigma (r-4\nu)}\phi\|^2\\
  &\le \bigl\{\|\chi_\nu'\mathrm{e}^{\sigma (r-4\nu)}p^r\phi\|
  +\tfrac{1}{2}\|(\chi_\nu''|\mathrm{d} r|^2+\chi'_\nu(\Delta r))
  \mathrm{e}^{\sigma (r-4\nu)}\phi\|\bigr\}^2\\
  &\phantom{\le
    {}}+\bigl\{\|2r\bar\chi_{2\nu}\chi_\nu\mathrm{e}^{\sigma
    (r-4\nu)}p^r \phi\|+\|\bar\chi_{2\nu}(2r|\mathrm{d}
  r|^2\chi_\nu'+2\sigma r\chi_\nu|\mathrm{d}r|^2
  +\tfrac{1}{2}(\Delta r^2)\chi_\nu) \mathrm{e}^{\sigma (r-4\nu)} \phi\|\bigr\}^2\\
  &\le C\nu^2\|\chi_{\nu/2}|p \phi|\|^2+C\nu^2\langle
  \sigma\rangle^2\|\phi\|^2\\
  &\le C\nu^2\inp{p^2}_ \phi +C\nu^2\langle \sigma\rangle^2\|\phi\|^2.
\end{align*}
Note that $C>0$ does not depend on $\nu$ or $\sigma$ because $r\leq
2\nu$ on $\mathop{\mathrm{supp}} \chi_\nu'$.  Using the relative 
$\epsilon$-smallness of the potential we have for some $C>0$
\begin{equation}\label{eq:bnd_potentials}
  \inp*{p^2}_ \phi \leq \inp{4H+C}_ \phi=(4E+C)\|\phi\|^2,
\end{equation} 
and we deduce that
\begin{align}\label{eq:28second}
  -2\mathop{\mathrm{Re}}{}\inp{R(\nu) \mathrm{e}^{\sigma
      (r-4\nu)}A\chi_\nu \mathrm{e}^{\sigma (r-4\nu)}}_{\phi} \le
  C\nu^2\langle\sigma\rangle^2\|\phi\|^2.
\end{align}

On the other hand doing the commutator, cf. \eqref{eq:mourre comm},
and then using \eqref{eq:posi_term} and \eqref{eq:basic_est} we obtain
that
\begin{align}
  \begin{split}
    \inp{\i[H^\sigma,A]}_{\phi_\sigma} &\geq -\sigma^2 \Im \inp {A|\d
      r|^2}_{\phi_\sigma}
    +2\sum _b\inp{p_b^2}_{q_{b,R}\phi_\sigma}-o(R^0)\|\phi_\sigma\|^2 \\
    &= \sigma^2 \inp {r\partial^r|\d r|^2}_{\phi_\sigma}
    +2\sum _b\inp{p_b^2}_{q_{b,R}\phi_\sigma}-o(R^0)\|\phi_\sigma\|^2 \\
    &\geq \sigma \inp {\partial^r|\d r|^2}_{\phi_\sigma} +2\sum _b\inp
    {p_b^2}_{q_{b,R}\phi_\sigma}-o(R^0) \|\phi_\sigma\|^2
  \end{split}
  \label{eq:11.7.13.4.10cc}
\end{align}

We combine \eqref{eq:11.7.13.4.10c}--\eqref{eq:11.7.13.4.10cc} and
conclude that
\
\begin{align}
  \begin{split}
    &C\nu^2\langle\sigma\rangle^2\|\phi\|^2+o(R^0)\|\phi_\sigma\|^2  \\
    &\geq 4\sigma\inp { r}_{(\Re{p^r})\phi_\sigma} +2\sum _b\inp
    {p_b^2}_{q_{b,R}\phi_\sigma}.
  \end{split}
  \label{eq:11.7.13.4.10ccc}
\end{align}

We aim at deriving some useful positivity from the second term of
\eqref{eq:11.7.13.4.10ccc} to the right. For that let us for $b\in
\vA$ introduce
\begin{align}
  \label{eq:Ham_dis}
  \widetilde H_b&=
  \widetilde H^b+\tilde p_b^2\chi^-_{\sigma^2/2}\parb{ \tilde p_b^2};\\
  s(x)&=\chi^+_{\nu/2}(r)(|\d r|-1)/\sqrt{2}+1/\sqrt{2},\nonumber\\
  \tilde p_b^2&=\tfrac12 s(x)^{-1}p_b^2s(x)^{-1},\nonumber\\
  \widetilde H^b&=s(x)^{-1}H^bs(x)^{-1}.\nonumber
\end{align} Here we suppressed the dependence of $\widetilde H_b
$ on 
the parameter $R$ (through $r$, and considered as fixed) as well as
the dependence on $\nu$ and $\sigma$. The latter parameters will be
considered as independent large parameters (at the end we fix $\nu$
large and let $\sigma\to \infty$). The operator $\widetilde
H_b-\sigma^{2}$ should be thought of as an effective approximation to 
\begin{align*}
  2|\d r|^{-1}\parb{H_b-\tfrac{\sigma^2}{2}|\mathrm{d} r|^2} |\d
  r|^{-1}\approx 2H_b-\sigma^2=2H^b+p_b^2-\sigma^2.
\end{align*}
Let us here note the following consequence of \eqref{eq:small_termb}
\begin{equation}
  \label{eq:der_s}
  \forall \alpha\in \N_0^{\dim \bX}: |\partial^\alpha_x\parb{s(x)-1/\sqrt{2}}|\leq C_\alpha\nu^{-2}.
\end{equation}

The definitions \eqref{eq:Ham_dis} are accompanied by the following
specification of domains: For $b\in \vA$ we define
\begin{align*}
  \vH_b=L^2(\Omega^b) \otimes L^2(\bX_b)=L^2(\Omega^b+\bX_b),
\end{align*} and note that
\begin{align*}
  \vH_b\subset L^2(\bX^b) \otimes L^2(\bX_b)=L^2(\bX).
\end{align*}
The operator $\tilde p_b^2$ is an operator on $L^2(\bX)$ with
$Q(\tilde p_b^2)=Q(p_b^2)=L^2(\bX^b) \otimes H^1(\bX_b)$ and
$\vD(\tilde p_b^2)=\vD(p_b^2)=L^2(\bX^b) \otimes H^2(\bX_b)$. However
since it is multiplicative in the $x^b$ variable the space $\vH_b$ is
an invariant subspace, in fact $\vD(\tilde p_b^2)\cap
\vH_b=L^2(\Omega^b) \otimes H^2(\bX_b)$. Whence clearly the first
term of \eqref{eq:Ham_dis} is a bounded operator on $\vH_b$ (with the
norm bound $\tfrac 78 \sigma^2$). The second term of
\eqref{eq:Ham_dis}, $\widetilde H^b$, is also an operator on
$\vH_b$. We specify its form domain to be
$L^2(\bX_b,H^1_0(\Omega^b);\d x_b)$.  The corresponding quadratic
form is closed. We conclude the  same for $\widetilde H_b$, and
consequently $\widetilde H_b$ is a well-defined operator on $\vH_b$
with
\begin{align*}
  Q(\widetilde H_b)=L^2(\bX_b,H^1_0(\Omega^b);\d x_b)\subset \vH_b.
\end{align*}
For later applications let us note the facts that $\vD(\tilde
p_c^2)\supset \vD(\tilde p_b^2)$ and $\vH_b\supset \vH_c$ for all
$c\supset b$ (the latter embedding is due to the relation $\Omega^b+
\bX_b\supset \Omega^c+ \bX_c$).

We introduce a technical condition for the operators introduced in
\eqref{eq:Ham_dis}.
\begin{cond}
  \label{cond:general-scheme} For all $b\neq a_{\max}$ the following
  bound holds uniformly in all large $\sigma,\nu>1$, $\epsilon\in
  (0,1]$ and reals $\lambda$ near $1$:
  \begin{equation}
    \label{eq:Basic_assump}
    \|\delta_\epsilon(\widetilde H_{b}/\sigma^2-\lambda)\|_{\vB(B(|x^b|),B(|x^b|)^*)}\leq
    C \sigma.
  \end{equation} Here, by 
  definition, for any self-adjoint operator $T$
  \begin{align*}
    \delta_\epsilon(T)=\pi^{-1}\Im ( T-\i \epsilon )^{-1}.
  \end{align*} The space $B(\cdot)$ is a Besov space, see Subsection
  \ref{subsec: Abstract Besov spaces} for the abstract
  definition. Note that \eqref{eq:Basic_assump} is trivially fulfilled
  for $b= a_{\min}$ (by the spectral theorem). We derive the bounds
  for $N=2$ in Section \ref{sec:One-body high-energy hard-core
    resolvent bound} under the additional regularity conditions on the
  obstacles and potentials stated in Theorem \ref{thm:gener-hard-core}. Note that for $N=2$ and $b\notin\{a_{\min},
  a_{\max}\}$ only $b'=b $ obeys $a_{\min}\neq b'\subset b$ and hence
  for such $b$ \eqref{eq:Basic_assump} is an effective high energy
  bound for a bounded obstacle (hence one-body type). More generally
  we prove \eqref{eq:Basic_assump} for $b$ with $\# b=N$ under the
  additional regularity conditions for $\Omega_b$ and $V_b$. High energy
  resolvent bounds are studied previously in the literature, see for
  example \cite{Je,Vo1,Vo2, RT}.
\end{cond}

Let us also introduce $\tilde\phi_\sigma=s(x) \phi_\sigma$. We
estimate for $b\neq a_{\max}$, $k_1>0$ (can be fixed arbitrarily) and
all large $\sigma>1$
\begin{align}
  \label{eq:fun_est}
  \begin{split}
    \tfrac12 \inp {p_b^2}_{q_{b,R}\phi_\sigma}&= \inp { \tilde
      p_b^2}_{q_{b,R}\tilde \phi_\sigma}\\
    &\geq \inp { \tilde
      p_b^2\chi^+_{k_1\sigma}\parb{ \tilde p_b^2}}_{q_{b,R}\tilde \phi_\sigma}\\
    &\geq k_1\sigma \parb{\|q_{b,R}\tilde
      \phi_\sigma\|^2-\inp{\chi^-_{k_1\sigma}\parb{ \tilde
          p_b^2}}_{q_{b,R}\tilde \phi_\sigma}}\\
    &\geq k_1\sigma \parb{\|q_{b,R}\tilde
      \phi_\sigma\|^2-\inp{\chi^-_{\sigma^2/8}\parb{ \tilde
          p_b^2}}_{q_{b,R}\tilde \phi_\sigma}}.
  \end{split}
\end{align} The contribution to \eqref{eq:11.7.13.4.10ccc} from the
first term to the right in \eqref{eq:fun_est} amounts (for $\nu\geq R
r_2$) to the positive term $ 4k_1\sigma \|\tilde \phi_\sigma\|^2$, and
it remains to estimate the contribution from the second term to the
right. Now up to a term of order $O(\sigma^{-2})$ better it is given
by summing the expressions $-4k_1\sigma\Re
\inp{\chi^-_{\sigma^2/8}\parb{ \tilde p_b^2}q_{b,R}^2}_{\tilde
  \phi_\sigma}$.

\begin{comment}
  We write for $b\neq a_{\max}$ and $R_1>>R$
  \begin{align*}
    q_{b,R}^2=q_{b,R}^2\sum_{b_1\supset \myboxed {b}}
    q_{b_1,R_1}^2=q_{b,R}^2q_{b,R_1}^2+q_{b,R}^2\sum_{b_1\supsetneq
      \myboxed { b}} q_{b_1,R_1}^2,
  \end{align*} and repeat writing for $R_2>>R_1$ and $b_1\supsetneq b$
  \begin{align*}
    q_{b,R}^2q_{b_1,R_1}^2=q_{b,R}^2q_{b_1,R_1}^2
    q_{b_1,R_2}^2+\sum_{b_2\supsetneq\myboxed {b_1}} q_{b,R}^2
    q_{b_1,R_1}^2q_{b_2,R_2}^2.
  \end{align*} Upon iterating the procedure it stops for each branch
  after say $n$ times when necessarily $b_n=a_{\max}$ ($n$ is smaller
  than $\#\vA$). Whence the only non-trivial terms to examine have the
  form
  \begin{align*}
    q_{b,R}^2q_{b,R_1}^2\text{ or }q_{b,R}^2\prod_{1\leq j\leq
      m}q_{b_j,R_j}^2 q_{b_m,R_{m+1}}^2,
  \end{align*}
  \begin{align*}
    q_{b,R}^2q_{b_1,R_1}^2=q_{b,R}^2q_{b_1,R_1}^2
    q_{b_1,R_2}^2+\sum_{b_2\supsetneq {b_1}} q_{b,R}^2
    q_{b_1,R_1}^2q_{b_2,R_2}^2.
  \end{align*}
\end{comment}
We write for $b\neq a_{\max}$ and $R_1\geq Rr_2/r_1$
\begin{align*}
  q_{b,R}^2=q_{b,R}^2\sum_{b_1\supset {b}}
  q_{b_1,R_1}^2=q_{b,R}^2q_{b,R_1}^2+q_{b,R}^2\sum_{b_1\supsetneq {
      b}} q_{b_1,R_1}^2.
\end{align*} Actually we shall later need $R_1>>R$. We repeat the
expansion by writing for $R_2>>R_1$ and $b_1\supsetneq b$
\begin{align*}
  q_{b,R}^2q_{b_1,R_1}^2=q_{b,R}^2q_{b_1,R_1}^2
  q_{b_1,R_2}^2+\sum_{b_2\supsetneq {b_1}} q_{b,R}^2
  q_{b_1,R_1}^2q_{b_2,R_2}^2.
\end{align*} Upon further iteration the procedure stops for each
branch after say $n$ times when necessarily $b_n=a_{\max}$ ($n$ is at
most $N$). Whence the only non-trivial terms to examine have the form
\begin{align*}
  q_{b,R}^2q_{b,R_1}^2\text{ or }q_{b,R}^2\prod_{1\leq j\leq
    m}q_{b_j,R_j}^2 q_{b_m,R_{m+1}}^2,
\end{align*}
where $m\le n-1$,  $b\subsetneq b_1\cdots \subsetneq b_m\subsetneq
a_{\max}$ and $R<<R_1\cdots <<R_m<<R_{m+1}$. As the reader will see
these constraints are needed later, see the verification of
\eqref{eq:11.7.13.4.10cccd}.  Moreover we shall need the constraint
$\nu\geq R_{m+1}r_2$. Introducing the notation $b_0=b$ and $R_0=R$ in
either case the form is then $q_{b_m,R_m}^2 q_{b_m,R_{m+1}}^2$ times a
bounded factor $Q_m^2$, in fact $|Q_m(x)|\leq 1$.  We decompose
\begin{align*}
  &\Re\parb{ \chi^-_{\sigma^2/8}\parb{ \tilde p_b^2}q_{b,R}^2}\\
  &=\Re\parb{ \sum \chi^-_{\sigma^2/8}\parb{ \tilde p_b^2}q_{b_m,R_m}^2q_{b_m,R_{m+1}}^2Q_m^2}+\Re\parb{ \sum \chi^-_{\sigma^2/8}\parb{ \tilde p_b^2}q_{b_m,R_m}^2q_{a_{\max},R_{m+1}}^2Q_m^2}\\
  &=\sum q_{b_m,R_{m+1}}q_{b_m,R_m}Q_m\chi^-_{\sigma^2/8}\parb{ \tilde
    p_b^2}Q_mq_{b_m,R_m} q_{b_m,R_{m+1}}+\text{ remainder}.
\end{align*} Here the remainder is the sum of terms either
$O(\sigma^{-2})$ better than ``the good term'' $ 4k_1\sigma \|\tilde
\phi_\sigma\|^2$ we derived from \eqref{eq:fun_est} or being expressed
by factors of $q_{a_{\max},R_{m+1}}$. Whence (using $\nu\geq
R_{m+1}r_2$) the remainder conforms with \eqref{eq:11.7.13.4.10ccc}.

Next on both sides of the factor $\chi^-_{\sigma^2/8}\parb{ \tilde
  p_b^2}$ in the summation to the right we insert
\begin{align*}
  I=\chi^-_{\sigma^2/4}\parb{ \tilde
    p_{b_m}^2}+\chi^+_{\sigma^2/4}\parb{ \tilde p_{b_m}^2}.
\end{align*} This yields four times as many terms. The contribution
from the terms with two factors of $\chi^-_{\sigma^2/4}$ is then
estimated as
\begin{align}
  &\sum q_{b_m,R_{m+1}}q_{b_m,R_m}Q_m\chi^-_{\sigma^2/4}\parb{ \tilde
    p_{b_m}^2}\chi^-_{\sigma^2/8}\parb{ \tilde
    p_b^2}\chi^-_{\sigma^2/4}\parb{ \tilde p_{b_m}^2}Q_mq_{b_m,R_m}
  q_{b_m,R_{m+1}}\nonumber\\
  &\leq \Re \parbb{\sum q_{b_m,R_{m+1}}^2\chi^-_{\sigma^2/4}\parb{
      \tilde p_{b_m}^2}^2q_{b_m,R_m}^2
    Q_m^2}+O(\sigma^{-2}).\label{eq:lead_term}
\end{align} We take a closer look at the first term later. We first
consider the contributions from
\begin{align*}
  2\Re \parbb{\chi^+_{\sigma^2/4}\parb{ \tilde
      p_{b_m}^2}\chi^-_{\sigma^2/8}\parb{ \tilde
      p_b^2}\chi^-_{\sigma^2/4}\parb{ \tilde
      p_{b_m}^2}}+\chi^+_{\sigma^2/4}\parb{ \tilde
    p_{b_m}^2}\chi^-_{\sigma^2/8}\parb{ \tilde
    p_b^2}\chi^+_{\sigma^2/4}\parb{ \tilde p_{b_m}^2}.
\end{align*} For this purpose let us for $\psi\in L^2(\bX)$ introduce
\begin{align*}
  \psi_\sigma=\chi^+_{\sigma^2/4}\parb{ \tilde
    p_{b_m}^2}\chi^-_{\sigma^2/8}\parb{ \tilde p_b^2}\psi\mand
  \widetilde \psi_\sigma=\chi^-_{\sigma^2/8}\parb{ \tilde
    p_b^2}\chi^+_{\sigma^2/4}\parb{ \tilde p_{b_m}^2}\psi.
\end{align*}
We apply (see below for a proof)
\begin{subequations}
  \begin{align}\label{eq:comm1}
    \|[\chi^+_{\sigma^2/4}\parb{ \tilde
      p_{b_m}^2},\chi^-_{\sigma^2/8}\parb{ \tilde p_b^2}]\|\leq
    C\tfrac1{\sigma\nu},
  \end{align} yielding
  \begin{align*}
    \|\psi_\sigma-\widetilde \psi_\sigma\|\leq
    C\tfrac1{\sigma\nu}\|\psi\|,
  \end{align*} and whence with the operator monotone function
  $f(t)=(t-1)/(1+t)$ 
  \begin{align*}
    \|\widetilde \psi_\sigma\|^2&\leq 2\|\psi_\sigma\|^2+C\tfrac1{\sigma^2\nu^2}\|\psi\|^2\\
    &\leq 2f(\tfrac54) ^{-1}\inp{f(\tfrac{4}{\sigma^2}\tilde
      p_{b_m}^2)}_{\psi_\sigma}+C\tfrac1{\sigma^2\nu^2}\|\psi\|^2\\
    &\leq 2f(\tfrac54) ^{-1}\inp{f(\tfrac{4}{\sigma^2}\tilde
      p_{b}^2)}_{\psi_\sigma}+C\tfrac1{\sigma^2\nu^2}\|\psi\|^2.
  \end{align*}

  Obviously it follows from \eqref{eq:comm1} that
  \begin{align}\label{eq:comm2}
    \|f(\tfrac{4}{\sigma^2}\tilde p_{b}^2)[\chi^+_{\sigma^2/4}\parb{
      \tilde p_{b_m}^2},\chi^-_{\sigma^2/8}\parb{ \tilde p_b^2}]\|\leq
    C\tfrac1{\sigma\nu},
  \end{align} and we can ``reverse the commutation''
  \begin{align*}
    2f(\tfrac54) ^{-1}&\inp{f(\tfrac{4}{\sigma^2}\tilde
      p_{b}^2)}_{\psi_\sigma}\\&\leq 2f(\tfrac54)
    ^{-1}\inp{f(\tfrac{4}{\sigma^2}\tilde p_{b}^2)}_{\widetilde
      \psi_\sigma}+C\tfrac
    1{\sigma\nu}\|\psi\|\,\|\psi_\sigma\|+C\tfrac
    1{\sigma\nu}\|\psi\|\,\|\widetilde \psi_\sigma\|\\&\leq  2f(\tfrac54) ^{-1}f(\tfrac78)  \|\widetilde \psi_\sigma\|^2+\|\psi_\sigma\|^2+\epsilon\|\widetilde \psi_\sigma\|^2+C_\epsilon\tfrac1{\sigma^2\nu^2}\|\psi\|^2\\
    &\leq
    \|\psi_\sigma\|^2+C_\epsilon\tfrac1{\sigma^2\nu^2}\|\psi\|^2;
  \end{align*} here we took $\epsilon=2f(\tfrac54) ^{-1}|f(\tfrac78)|
  $, for example. By combining with the previous estimation we find
  \begin{align*}
    \|\psi_\sigma\|^2\leq C\tfrac1{\sigma^2\nu^2}\|\psi\|^2,
  \end{align*}
  and then in turn
  \begin{align*}
    \|\psi_\sigma\|,\,\|\widetilde\psi_\sigma\|\leq
    C\tfrac1{\sigma\nu}\|\psi\|.
  \end{align*} We conclude that indeed, due to errors of the form
  $O(\nu^{-1}\sigma^{-1})+O(\sigma^{-2})$, we need to examine the
  first term of \eqref{eq:lead_term} only.

  \subsubsection{Verification of \eqref{eq:comm1}} Introduce
  $P_m=\sigma^{-2}\tilde p_{b_m}^2$ and $P=\sigma^{-2}\tilde p_b^2$.
  We show the slightly stronger bound
  \begin{align}\label{eq:comm3}
    \|[\chi^+_{1/4}(P_m),\chi^-_{1/8}(P)]\|=\|[\chi^-_{1/4}(P_m),\chi^-_{1/8}(P)]\|\leq
    C\tfrac1{\sigma\nu^2}.
  \end{align}
  Since $P_m,P\geq 0$ we can truncate $\chi^-_\nu$, $\nu=1/4, 1/8$, at
  the negative half-axis to become functions $\chi_1,\chi_2$ in
  $C_{\c}^\infty(\R)$ and invoke the standard representation for a
  self-adjoint operator $T$ and such function $\chi$
\end{subequations}
\begin{equation}\label{eq:rep1}
  \chi(T)=\int_{\mathbb{C}}(T-z)^{-1}\,\mathrm{d}\mu(z),\;\mathrm{d}\mu(z)=-\frac{1}{2\pi
    \mathrm{i}}\bar{\partial}\tilde{\chi}(z)\mathrm{d}z\mathrm{d}\bar{z}, 
\end{equation} where 
we have used  an almost analytic extension $\tilde{\chi}\in
C^\infty_{\mathrm{c}}(\mathbb{C})$, 
i.e.
\begin{align*}
  \tilde{\chi}(t)=\chi(t)\mfor t\in \mathbb{R},&&
  |\bar{\partial}\tilde{\chi}(z)|\le
  C_k|\mathop{\mathrm{Im}}z|^k;\;k\in\N.
\end{align*}

Whence
\begin{subequations}
  \begin{align}\label{eq:rep2}
    \chi^-_{1/4}(P_m)&=\int_{\mathbb{C}}(P_m-z_1)^{-1}\,\mathrm{d}\mu_1(z_1),\\
    \chi^-_{1/8}(P)&=\int_{\mathbb{C}}(P-z_2)^{-1}\,\mathrm{d}\mu_2(z_2).\label{eq:rep2b}
  \end{align}
\end{subequations}

Using \eqref{eq:rep2}, \eqref{eq:rep2b} and the domain relation
$\vD(P_m)\supset\vD(P)$ we represent 
\begin{align*}
  [\chi^-_{1/4}(P_m),\chi^-_{1/8}(P)]=\int_{\mathbb{C}}\int_{\mathbb{C}}&(P_m-z_1)^{-1}(P-z_2)^{-1}\\&[P_m,P](P-z_2)^{-1}(P_m-z_1)^{-1}\,\mathrm{d}\mu_2(z_2)\mathrm{d}\mu_1(z_1).
\end{align*}
Next we note the elementary bounds
\begin{subequations}
  \begin{align}
    \label{eq:ele_bound1}
    \|(P_m-z)^{-1}\|&\leq \tfrac{1}{|\Im z|},\\
    \|\inp{P}(P-z)^{-1}\|&\leq C\tfrac{|z|+1}{|\Im
      z|}.\label{eq:ele_bound3}
  \end{align}

  Using \eqref{eq:der_s} we compute
  \begin{equation}
    \label{eq:commb}
    \|\inp{P}^{-1}  [P_m,P]\inp{P}^{-1}\|\leq C\tfrac1{\sigma\nu^2}.
  \end{equation}
\end{subequations}

Finally applying \eqref{eq:ele_bound1}--\eqref{eq:commb} to the double
integral we obtain the bound
\begin{align*}
  \cdots \leq
  C_1\tfrac1{\sigma\nu^2}&\int_{\mathbb{C}}\int_{\mathbb{C}} |\Im
  z_1|^{-2}\tfrac{(|z_2|+1)^{2}}{|\Im
    z_2|^2}\,|\mathrm{d}\mu_2(z_2)|\,|\mathrm{d}\mu_1(z_1)|=C_2\tfrac1{\sigma\nu^2},
\end{align*} and we have shown \eqref{eq:comm3}.

\subsubsection{Localization for first term of \eqref{eq:lead_term}} We
decompose (for $k_2>0$ to be fixed later, big)
% Up to terms of order $O(\sigma^{-2})$
\begin{align*}
  q_{b_m,R_{m+1}}^2\chi^-_{\sigma^2/4}\parb{ \tilde
    p_{b_m}^2}^2q_{b_m,R_m}^2
  Q_m^2&=q_{b_m,R_{m+1}}^2\chi^-_{\sigma^2/4}\parb{ \tilde
    p_{b_m}^2}(\chi^-+\chi^+)q_{b_m,R_m}^2
  Q_m^2;\\
  \chi^-&:=\chi^-_{(k_2R_m\sigma)^{-1}}(|\widetilde H_{b_m}/\sigma^2-1|),\\
  \chi^+&:=\chi^+_{(k_2R_m\sigma)^{-1}}(|\widetilde
  H_{b_m}/\sigma^2-1|).
\end{align*} To treat the contribution from $\chi^-$ we write
$q_{b_m,R_m}^2=\chi^-_{r_2R_m}(|x^{b_m}|)q_{b_m,R_m}^2$. Now imposing
Condition \ref{cond:general-scheme} and applying
\eqref{eq:Basic_assump} with $b=b_m$ we get using
\begin{align*}
  \|\chi^-_{r_2R_m}(|x^{b_m}|)\|_{\vB(\vH_{b_m},B(|x^{b_m}|))}\leq
  C\sqrt{r_2R_m}
\end{align*}
that
\begin{equation}
  \label{eq:bound_uncer}
  \|\chi^-_{r_2R_m}(|x^{b_m}|)(\chi^-)^2\chi^-_{r_2R_m}(|x^{b_m}|)\|_{\vB(\vH_{b_m})}\leq C_1 /k_2.
\end{equation} Here we used the general bound for $S$ bounded and $T$ self-adjoint
\begin{align*}
  \|S^*g(T)S\|\leq \|g\|_{L^1}\sup_{\lambda\in \supp g,\epsilon\in
    (0,1]} \|S^*\delta_\epsilon(T-\lambda)S\|,
\end{align*}
cf. Stone's formula \cite{RS}. We fix $k_2$ such that
$(\#\vA)^{N+1}\sqrt{C_1/k_2}\leq 1/2$, saving ``the good term''
$2k_1\sigma \|\tilde \phi_\sigma\|^2$ in the previous bound
\eqref{eq:11.7.13.4.10ccc}.

\subsubsection{Completion of proof of \eqref{eq:hard_part}} 
We need to examine the contribution from $ \chi^+$ to
\eqref{eq:11.7.13.4.10ccc}. We write
\begin{align*}
  q_{b_m,R_{m+1}}^2\chi^-_{\sigma^2/4}\parb{ \tilde p_{b_m}^2}^2
  \chi^+=k_2R_m\sigma q_{b_m,R_{m+1}}^2\chi^-_{\sigma^2/4}\parb{
    \tilde p_{b_m}^2}^2(\widetilde H_{b_m}/\sigma^2-1) \widetilde Q_m,
\end{align*} where $\widetilde Q_m=\widetilde Q_m(\widetilde
H_{b_m}/\sigma^2)$ is bounded with norm at most $1$. Taking
expectation in $\tilde \phi_\sigma$ and using the Cauchy Schwarz
inequality it suffices to bound
\begin{align}
  \begin{split}
    &4k_1\sigma\sum k_2R_m\sigma^{-1}\|(\widetilde H_{b_m}-\sigma^2)
    \chi^-_{\sigma^2/4}\parb{ \tilde p_{b_m}^2}^2
    q_{b_m,R_{m+1}}^2\tilde \phi_\sigma\|\,\|\tilde \phi_\sigma\|\\
    &\leq k_1\sigma \|\tilde \phi_\sigma\|^2+4\sigma\inp {
      r}_{(\Re{p^r})\phi_\sigma}+C\parb{\nu^2\langle\sigma\rangle^2\|\phi\|^2+\|\phi_\sigma\|^2}.
  \end{split}
  \label{eq:11.7.13.4.10cccd}
\end{align}

With \eqref{eq:11.7.13.4.10ccc} this yields
\begin{align*}
  C\parb{\nu^2\langle\sigma\rangle^2\|\phi\|^2+\|\phi_\sigma\|^2 }\geq
  k_1\sigma\parbb{1-O\parb{\nu^{-1}\sigma^{-1}}-O\parb{\sigma^{-2}}}
  \|\tilde \phi_\sigma\|^2,
\end{align*}
and we learn by letting $\sigma\to\infty$ that
$\chi_{4\nu}(r)\phi\equiv 0$ (for $\nu$ large), and then in turn from
the unique continuation property that $\phi= 0$.

\subsubsection{Mapping properties} As a preparation for proving
\eqref{eq:11.7.13.4.10cccd} let us note the following mapping
properties of $S=\chi^-_{\sigma^2/4}\parb{ \tilde p_{a}^2}^2$ and $T=
q_{a,R_{m+1}}^2$ entering in \eqref{eq:11.7.13.4.10cccd} with
$a=b_m$. Recall that the form domain $Q\parb{\widetilde H_{a}}$ of
$\widetilde H_{a}$ is $L^2(\bX_a,H^1_0(\Omega^a);\d x_a)$.
\begin{subequations}
  \begin{align}
    \label{eq:1map}S\in \vB\parb{ Q\parb{\widetilde H_{a}}, H^1_0(\Omega^a+ \bX_a)},\\
    T\in \vB\parb{H^1_0(\Omega^a+ \bX_a),H^1_0(\Omega)},
    \label{eq:2map}\\
    TS\in \vB\parb{Q\parb{\widetilde H_{a}},H^1_0(\Omega)},
    \label{eq:3map}\\
    TS\in \vB\parb{Q\parb{\widetilde H_{a}}}.
    \label{eq:4map}
  \end{align}
\end{subequations} For \eqref{eq:1map} we can use \eqref{eq:rep1} to
represent $S=\chi\parb{\tilde p_{a}^2}$, apply the integral to a
simple tensor $\psi^a\otimes \psi_a$ and then calculate derivatives of
the resulting expression (not to be elaborated on). Clearly $S$ is a
smoothing operator in the $x_a$ variable yielding the improved
smoothness. Also we note that since $S$ is multiplicative in the $x^a$
variable it preserves the support in this variable of elements of an
approximating sequence. We obtain that indeed $S\psi_a\otimes
\psi^a\in H^1_0(\Omega^a+ \bX_a)$ with
\begin{align*}
  \| S\psi^a\otimes \psi_a\|_1\leq C\|\psi^a\otimes
  \psi_a\|_{Q\parb{\widetilde H_{a}}}.
\end{align*}
This bound extends to finite sums of simple tensors (by the same
arguments) and hence \eqref{eq:1map} follows by density and
continuity. As for \eqref{eq:2map} we use that
\begin{align*}
  \supp\parb{q_{a,R_{m+1}}^2}\cap\parb{\Omega^a+ \bX_a}\subset
  \Omega.
\end{align*} Clearly \eqref{eq:3map} follows from \eqref{eq:1map} and
\eqref{eq:2map}, while in turn \eqref{eq:4map} follows from
\eqref{eq:3map} and the inclusion $\Omega\subset \Omega^a+ \bX_a$
implying that $H^1_0(\Omega)$ is continuously embedded in
$H^1_0(\Omega^a+ \bX_a)$ and therefore in $Q\parb{\widetilde H_{a}}$.

\subsubsection {Proof of \eqref{eq:11.7.13.4.10cccd}} We consider the
vector $(\widetilde H_{a}-\sigma^2) \chi^-_{\sigma^2/4}\parb{ \tilde
  p_{a}^2}^2 q_{a,R_{m+1}}^2\tilde \phi_\sigma$, $a=b_m$, in
\eqref{eq:11.7.13.4.10cccd} as an element of the dual space of the
form domain $Q\parb{\widetilde H_{a}}$, that is in
$L^2(\bX_a,H^1_0(\Omega^a)^*;\d x_a)$. As a part of
\eqref{eq:11.7.13.4.10cccd} we must show that indeed it belongs to
$\vH_a$. This will follow from \eqref{eq:3map} and the calculations
below. We rewrite, using that $\tilde \phi_\sigma\in Q\parb{\widetilde
  H_{a}}$, $\widetilde H_{a}\tilde \phi_\sigma\in Q\parb{\widetilde
  H_{a}}^*$ and \eqref{eq:4map},
\begin{align*}
  (\widetilde H_{a}-\sigma^2) \chi^-_{\sigma^2/4}\parb{ \tilde
    p_{a}^2}^2 q_{a,R_{m+1}}^2\tilde
  \phi_\sigma&=\chi^-_{\sigma^2/4}\parb{ \tilde p_{a}^2}^2
  q_{a,R_{m+1}}^2 (\widetilde
  H_{a}-\sigma^2) \tilde \phi_\sigma+T_{\rm com}\\
  T_{\rm com}&=[\widetilde H_{a}, \chi^-_{\sigma^2/4}\parb{ \tilde
    p_{a}^2}^2q_{a,R_{m+1}}^2]\tilde \phi_\sigma,
\end{align*} and then
\begin{align*}
  (\widetilde H_{a}-\sigma^2) \tilde \phi_\sigma&=
  s(x)^{-1}\parb{\tfrac12 p_{a}^2s(x)^{-1}\chi^-_{\sigma^2/2}\parb{
      \tilde p_{a}^2}s(x)+H^{a}-\tfrac{\sigma^2}{2}|\mathrm{d}
    r|^2}\phi_\sigma\\
  &=s(x)^{-1}\parb{(H^\sigma-E+\mathrm{i}\sigma
    (\mathop{\mathrm{Re}}{}p^r))\phi_\sigma+\mathrm{i}\mathrm{e}^{\sigma
      (r-4\nu)} R(\nu)\phi}+T_1+T_2+T_3\\
  T_1&=-\tilde p_{a}^2\chi^+_{\sigma^2/2}\parb{
    \tilde p_{a}^2}\tilde \phi_\sigma,\\
  T_2&=s(x)^{-1}\parb{E-I_{a}-\mathrm{i}\sigma
    (\mathop{\mathrm{Re}}{}p^r)}\phi_\sigma\\
  T_3&=-s(x)^{-1}\mathrm{i}\mathrm{e}^{\sigma (r-4\nu)} R(\nu)\phi.
\end{align*} Due to \eqref{eq:19c} (and \eqref{eq:3map}) we need to
estimate the contributions from $T_1$--$T_3$ and $T_{\rm com}$ only.

As for $T_1$ we note that
$(\chi^-_{\sigma^2/4})^2\chi^+_{\sigma^2/2}=0$. Whence by commutation
\begin{align*}
  \|\chi^-_{\sigma^2/4}\parb{ \tilde p_{a}^2}^2
  q_{a,R_{m+1}}^2T_1\|\leq C \tfrac{\sigma}{R_{m+1}}\|\tilde
  \phi_\sigma\|,
\end{align*} which agrees with \eqref{eq:11.7.13.4.10cccd} provided
$R_{m+1}>> R_m$.

We estimate
\begin{align*}
  \|\chi^-_{\sigma^2/4}\parb{ \tilde
    p_{a}^2}^2q_{a,R_{m+1}}^2T_2\|^2&\leq
  C\|\phi_\sigma\|^2+2\sigma^2\|(\mathop{\mathrm{Re}}{}p^r)\phi_\sigma\|^2\\
  &\leq (C+O(\sigma^2/\nu^2))\|\phi_\sigma\|^2+\tfrac 2\nu
  \sigma^2\inp{r}_{(\mathop{\mathrm{Re}}{}p^r)\phi_\sigma}.
\end{align*} Whence the contribution from $T_2$ to the bound
\eqref{eq:11.7.13.4.10cccd} is given by
\begin{align*}
  \cdots \leq C\epsilon^{-1}R_m \parb{(\sigma^{-1}+
    \tfrac\sigma{\nu^2})\|\phi_\sigma\|^2+\tfrac\sigma\nu\inp{r}_{(\mathop{\mathrm{Re}}{}p^r)\phi_\sigma}}+\epsilon
  R_m \sigma\|\tilde\phi_\sigma\|^2.
\end{align*} This bound agrees with \eqref{eq:11.7.13.4.10cccd} for
all large $\nu$ and $\sigma$ if we choose $\epsilon>0$ small (note
that $\nu >>R_m$ is used). Notice that we needed the second term on
the right hand side of \eqref{eq:11.7.13.4.10cccd} (this is the only
occurrence).

As for the contribution from $T_3$ we invoke
\eqref{eq:bnd_potentials}.

To treat the contribution from $T_{\rm com}$ we decompose
\begin{align*}
  T_{\rm com}=\chi^-_{\sigma^2/4}\parb{ \tilde p_{a}^2}^2[\widetilde
  H_{a},q_{a,R_{m+1}}^2]\tilde \phi_\sigma +[\widetilde
  H_{a},\chi^-_{\sigma^2/4}\parb{ \tilde
    p_{a}^2}^2]q_{a,R_{m+1}}^2\tilde \phi_\sigma
\end{align*} and use the representation \eqref{eq:rep1} for both terms
to the right.

Noting the following generalization of \eqref{eq:bnd_potentials}
\begin{align}
  \|\chi_\nu\mathrm{e}^{\sigma r}|p \phi|\|^2 \le
  4E\|\chi_\nu\mathrm{e}^{\sigma r}\phi\|^2 +C\langle
  \sigma\rangle^2\|\chi_{\nu/2}\mathrm{e}^{\sigma r}\phi\|^2,
  \label{eq:11.7.22.9.5bb}
\end{align} it follows (for the first term) that
\begin{align*}
  \|[\widetilde H_{a},q_{a,R_{m+1}}^2]\tilde \phi_\sigma\|\leq C
  \tfrac{\sigma}{R_{m+1}}\parb{\|\tilde \phi_\sigma\|+\| \phi\|},
\end{align*} which agrees with \eqref{eq:11.7.13.4.10cccd} provided
$R_{m+1}>> R_m$.

We claim that
\begin{align}\label{eq:final}
  \|[\widetilde H_{a},\chi^-_{\sigma^2/4}\parb{ \tilde
    p_{a}^2}^2]q_{a,R_{m+1}}^2\tilde \phi_\sigma \| \leq
  C\tfrac{\sigma}{\nu}\parb{\|\tilde \phi_\sigma\|+\| \phi\|},
\end{align} which also agrees with \eqref{eq:11.7.13.4.10cccd}, hence
finally showing the latter bound.

Now for showing \eqref{eq:final} we do various commutation using
\eqref{eq:11.7.22.9.5bb}
\begin{align*}
  &[\widetilde H_{a},\chi^-_{\sigma^2/4}\parb{ \tilde
    p_{a}^2}^2]q_{a,R_{m+1}}^2\tilde
  \phi_\sigma\\
  &=O\parb{\tfrac{\sigma}{\nu}}+[s(x)^{-2},\chi^-_{\sigma^2/4}\parb{ \tilde p_{a}^2}^2]H^{a}q_{a,R_{m+1}}^2\tilde \phi_\sigma\\
  &=O\parb{\tfrac{\sigma}{\nu}}+O\parb{\tfrac1{\sigma\nu}}\inp{
    \sigma^{-2}\tilde p_{a}^2}^{-1}s(x)^{-1}H^{a}q_{a,R_{m+1}}^2\phi_\sigma\\
  &=O\parb{\tfrac{\sigma}{\nu}}+O\parb{\tfrac{1}{\nu
      R_{m+1}}}+O\parb{\tfrac1{\sigma\nu}}\inp{ \sigma^{-2}\tilde
    p_{a}^2}^{-1}s(x)^{-1}q_{a,R_{m+1}}^2H^{a}\phi_\sigma,
\end{align*} where we used the convention
$\|O\parb{\tfrac{\sigma}{\nu}}\|\leq
C\tfrac{\sigma}{\nu}\parb{\|\tilde \phi_\sigma\|+\|\phi\|}$ and
similarly for the term $O\parb{\tfrac{1}{\nu R_{m+1}}}$.  Next we
decompose
\begin{align*}
  H^{a}\phi_\sigma&=\parb{(H^\sigma-E+\mathrm{i}\sigma
    (\mathop{\mathrm{Re}}{}p^r))\phi_\sigma+\mathrm{i}\mathrm{e}^{\sigma
      (r-4\nu)} R(\nu)\phi}\\&+(E-\mathrm{i}\sigma
  (\mathop{\mathrm{Re}}{}p^r)+\tfrac{\sigma^2}{2}|\mathrm{d}
  r|^2-\tfrac12 p_{a}^2-I_{a})\phi_\sigma-\mathrm{i}\mathrm{e}^{\sigma
    (r-4\nu)} R(\nu)\phi.
\end{align*} The first term vanishes. The second term contributes with
a term of the form $O\parb{\tfrac{\sigma}{\nu}}$. To see this we use
\eqref{eq:11.7.22.9.5bb} in two applications (note that the factor
$\inp{ \sigma^{-2}\tilde p_{a}^2}^{-1}s(x)^{-1}$ is used to bound one
factor of $p_{a}^2$), and we use the factor $q_{a,R_{m+1}}^2$ (note
that $q_{a,R_{m+1}}^2I_{a}$ is bounded). The third term is
$O\parb{\tfrac{1}{\nu}}$.  We have shown \eqref{eq:final}.

\section{ High-energy hard-core  one-body resolvent bound}\label{sec:One-body high-energy hard-core  resolvent bound}
In Subsections \ref{subsec: Concrete Besov
  spaces}--\ref{sec:CaseOmegarmc} we verify Condition
\ref{cond:general-scheme} for $N=2$ under Conditions
\ref{cond:smooth2} and \ref{cond:smooth3}.  The proof will be based on
various results for abstract Besov spaces to be given in Subsection
\ref{subsec: Abstract Besov spaces} and on a variant of Mourre theory
somewhat related to \cite{Sk4}.  We present our main results for the
obstacle case in Subsection \ref{subsec: Setting of problem}.  These
will be given in a slightly more general setting, and we devote 
Subsections \ref{subsec: Concrete Besov spaces} and \ref{subsec:
  Concrete Besov spaces,Proposition} to proofs.  The case of an empty
obstacle is treated in Subsection \ref{sec:CaseOmegarmc}.

\subsection{Abstract Besov spaces}\label{subsec: Abstract Besov
  spaces}

Let $A$ be a self-adjoint operator on a Hilbert space~$\vH$. Let
$R_0=0$ and $R_j=2^{j-1}$ for $j\in \N$.  We define correspondingly
characteristic functions $F_j=F(R_{j-1}\leq |\cdot|<R_j)$ and the
space
\begin{equation}
  \label{eq:1n}
  B=B(A)=\big \{u\in \vH \big | \,\sum_{j\in\N}R_j^{1/2}\|F_j(A)u\|=:\|u\|_B<\infty\big \}.
\end{equation}  We can identify (using the embeddings $\inp{A}^{-1}\vH\subset
B\subset \vH \subset B^*$, $\inp{A}:=\sqrt{A^2+1}$\,)
the dual  space $B^*$ as 
\begin{equation}
  \label{eq:2}
  B^*=B(A)^*=\big \{u\in \inp{A}\vH \big | \,\sup_{j\geq 1} R_j^{-1/2}\|F_j(A)u\|=:\|u\|_{B^*}<\infty\big \}.
\end{equation} Alternatively, the elements  $u$  of $B^*$ are  those
sequences   $u=(u_j)\subset  \vH $ with $u_j\in \Ran
(F_j(A))$ and $\sup_{j\in\N} R_j^{-1/2}\|u_j\|<\infty$.  For previous related works we refer to
\cite{AH,JP,GY,Wa,Ro,Sk4} 
and \cite[Subsections 14.1 and 30.2]{Ho2}. We note the
bounds, cf. \cite[Subsections 14.1]{Ho2},
\begin{equation}
  \label{eq:2p}
  \|u\|_{B^*}\leq \sup_{R>1} R^{-1/2}\|F(|A|<R)u\|\leq 2\|u\|_{B^*}. 
\end{equation}

Introducing \emph{abstract weighted spaces}
$L^2_s=L^2_s(A)=\inp{A}^{-s}\vH $ we have the embeddings
\begin{equation}\label{eq:3}
  L^2_s\subset B\subset L^2_{1/2}\subset \vH \subset
  L^2_{-1/2} \subset B^*\subset L^2_{-s},\mforall s>1/2.
\end{equation} All embeddings are continuous and corresponding
bounding constants can be chosen as absolute constants,
i.e. independently of $A$  and $\vH$. In particular
\begin{equation}
  \label{eq:22}
  \|u\|_{\vH}\leq \|u\|_B \mforall u\in B.
\end{equation}

We refer to the spaces $B$ and $B^*$ as \emph{abstract Besov spaces}.
Recall the following interpolation type result, here stated
abstractly. The proof is the same as that of the concrete versions
\cite[Theorem~2.5]{AH}, \cite[Theorem~14.1.4]{Ho2},
\cite[Proposition~2.3]{JP} and \cite[Subsection~4.3]{Ro}.
\begin{lemma}
  \label{lemma:resolvent-bounds} Let $A_1$ and $A_2$ be self-adjoint
  operators on Hilbert spaces $\vH_1$ and $\vH_2$, respectively, and
  let $s>1/2$. Suppose $T\in \vB(\vH_1,\vH_2)\cap
  \vB(L^2_s(A_1),L^2_s(A_2))$. Then $T\in \vB(B(A_1),B(A_2))$, and
  there is a constant $C=C(s)>0$ (independent of $T$) such that
  \begin{equation}
    \label{eq:4}
    \|T\|_{\vB(B(A_1),B(A_2))}\leq
    C\parb{\|T\|_{\vB(\vH_1,\vH_2)}+\|T\|_{\vB(L^2_s(A_1),L^2_s(A_2))}}.
  \end{equation}
\end{lemma}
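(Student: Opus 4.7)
\medskip

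\noindent\textbf{Proof plan.} The plan is to mimic the Agmon--Hörmander argument in its abstract guise. Write $u\in B(A_1)$ as the $\vH_1$-orthogonal sum $u=\sum_{j\in\N}u_j$ with $u_j=F_j(A_1)u$, so that $\|u\|_{B(A_1)}=\sum_j R_j^{1/2}\|u_j\|_{\vH_1}$. Our task is to control
\begin{equation*}
\|Tu\|_{B(A_2)}\leq\sum_{k\in\N}R_k^{1/2}\sum_{j\in\N}\|F_k(A_2)Tu_j\|_{\vH_2}.
\end{equation*}
For each pair $(k,j)$ I will produce two independent estimates of $\|F_k(A_2)Tu_j\|_{\vH_2}$ and then take the pointwise minimum.

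The first estimate is the crude one, $\|F_k(A_2)Tu_j\|_{\vH_2}\leq\|T\|_{\vB(\vH_1,\vH_2)}\|u_j\|_{\vH_1}$. For the second, note that on the range of $F_k(A_2)$ with $k\geq 2$ one has $\langle A_2\rangle^s\geq cR_k^s$, while $\|\langle A_1\rangle^s u_j\|_{\vH_1}\leq CR_j^s\|u_j\|_{\vH_1}$; combining these with the hypothesis on $T$ yields
\begin{equation*}
\|F_k(A_2)Tu_j\|_{\vH_2}\leq C\bigl(R_j/R_k\bigr)^{s}\|T\|_{\vB(L^2_s(A_1),L^2_s(A_2))}\|u_j\|_{\vH_1},\qquad k\geq 2.
\end{equation*}
Writing $\alpha=\|T\|_{\vB(\vH_1,\vH_2)}$, $\beta=\|T\|_{\vB(L^2_s(A_1),L^2_s(A_2))}$, I will then bound
\begin{equation*}
\sum_{k\geq 2}R_k^{1/2}\min\!\bigl(\alpha,\,\beta(R_j/R_k)^s\bigr)\leq C(s)(\alpha+\beta)R_j^{1/2}
\end{equation*}
by splitting the $k$-sum at the index $k\sim j$: for $k\lesssim j$ one uses $\alpha$ and the geometric factor $R_k^{1/2}$ accumulates to $O(R_j^{1/2})$; for $k\gtrsim j$ the tail $\beta R_j^s\sum_{k>j}R_k^{1/2-s}$ converges precisely because $s>1/2$, and is again $O(R_j^{1/2}\beta)$. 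The $k=1$ contribution is handled separately by the crude bound together with the embedding \eqref{eq:22}, and is dominated by $\alpha\|u\|_{B(A_1)}$.

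Interchanging the summations, the outcome is
\begin{equation*}
\|Tu\|_{B(A_2)}\leq C(s)(\alpha+\beta)\sum_j R_j^{1/2}\|u_j\|_{\vH_1}=C(s)(\alpha+\beta)\|u\|_{B(A_1)},
\end{equation*}
which is exactly \eqref{eq:4}. The main obstacle I anticipate is purely bookkeeping: verifying that the split of the $k$-sum at $k\sim j$ does produce a bound uniform in $j$, and isolating the low-frequency ($k=1$) contribution where the weighted estimate degenerates because $R_0=0$. Both are routine given $s>1/2$.
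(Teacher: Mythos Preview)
Your proposal is correct and is precisely the abstract version of the Agmon--H\"ormander argument that the paper has in mind: the paper does not supply its own proof but refers to \cite[Theorem~2.5]{AH}, \cite[Theorem~14.1.4]{Ho2}, \cite[Proposition~2.3]{JP} and \cite[Subsection~4.3]{Ro}, whose proofs proceed exactly by the dyadic splitting and geometric-series estimate you describe. Your treatment of the $k=1$ term via the embedding \eqref{eq:22} is fine, though it could equally well be absorbed into the $k\le j$ part of the split since $R_1^{1/2}\alpha\|u_j\|\le R_j^{1/2}\alpha\|u_j\|$.
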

We state and prove the following (partial) version of \cite[Lemma
2.5]{Sk4}.
\begin{lemma}
  \label{lemma:resolvent-boundsook} Suppose $A$ is a self-adjoint
  operator on a Hilbert space $\vH$, $c>1$ and $u\in B(A)$, then $u\in
  B(cA)$ with \begin{equation}
    \label{eq:4ooB}
    \|u\|_{B(cA)}\leq  8c^{1/2} \|u\|_{B(A)}.
  \end{equation}
\end{lemma}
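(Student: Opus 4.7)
The plan is to exploit the dyadic structure by expanding the projections $F_j(cA)$ in terms of the $F_i(A)$ and carefully counting overlaps. Writing $F_j(cA)=\chi(R_{j-1}\le c|A|<R_j)=\chi(R_{j-1}/c\le |A|<R_j/c)$, one sees that $F_j(cA)F_i(A)\neq 0$ only when the spectral intervals $[R_{i-1},R_i)$ and $I_j:=[R_{j-1}/c,R_j/c)$ overlap. Setting $c=2^\gamma$ with $\gamma>0$, a short computation shows that for $i,j\ge 2$ overlap forces
\[
j\in (i+\gamma-1,\,i+\gamma+1),
\]
so for each $i\ge 2$ there are at most two contributing $j$'s, and every such $j$ satisfies $R_j=2^{j-1}\le 2^{i+\gamma}=2cR_i$, hence $R_j^{1/2}\le \sqrt{2c}\,R_i^{1/2}$. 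By symmetry (in $i$ and $j$) the same bound on the number of overlaps holds from the other side.

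The delicate case is $i=1$, which is the only real obstacle. Since $F_1(A)=F(|A|<1)$ has a unit-length spectral support while $F_j(cA)$ for $j\ge 2$ is supported in $[2^{j-2}/c,2^{j-1}/c)$, the condition $R_{j-1}/c<1$ allows $j$ to range up to roughly $\log_2 c+1$. I would control the resulting geometric sum directly: with $M=\lfloor\gamma+1\rfloor$,
\[
\sum_{j=1}^{M} R_j^{1/2}=\sum_{j=1}^{M}2^{(j-1)/2}\le \frac{2^{M/2}}{\sqrt{2}-1}\le \frac{\sqrt{2c}}{\sqrt{2}-1}=(\sqrt{2}+1)\,c^{1/2}.
\]
This replaces the naive $\log c$ loss with a clean $c^{1/2}$ bound.

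Combining, I would plug $u=\sum_i F_i(A)u$ into the definition of $\|\cdot\|_{B(cA)}$, use $\|F_j(cA)F_i(A)u\|\le\|F_i(A)u\|$, and swap the order of summation:
\[
\|u\|_{B(cA)}\le \sum_{i\ge 1}\|F_i(A)u\|\sum_{j:\,I_j\cap[R_{i-1},R_i)\neq\emptyset}R_j^{1/2}.
\]
The $i=1$ contribution is bounded by $(\sqrt{2}+1)c^{1/2}\|F_1(A)u\|$, while for $i\ge 2$ the inner sum is at most $2\sqrt{2c}\,R_i^{1/2}$. Since $R_1^{1/2}=1$, both contributions are dominated by $C\,c^{1/2}\|u\|_{B(A)}$ with $C=\max(\sqrt{2}+1,\,2\sqrt{2})<8$, yielding \eqref{eq:4ooB}.

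The main obstacle is the asymmetry of the dyadic scheme at $j=1$ (the choice $R_0=0$), which makes $F_1(A)$ qualitatively larger than the other projections and causes it to overlap with many rescaled projections $F_j(cA)$. All other steps are elementary spectral-calculus bookkeeping; the essential point is that the geometric weight $R_j^{1/2}$ grows exactly fast enough to produce the sharp $c^{1/2}$ scaling and not a spurious logarithm.
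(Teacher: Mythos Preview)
Your approach is essentially the same as the paper's: both split the analysis into the ``aligned'' range where each rescaled dyadic shell $F_j(cA)$ overlaps at most two shells $F_i(A)$ (with $R_j\lesssim cR_i$), and the ``small-$j$'' range where the overlaps with $F_1(A)$ are handled by the geometric sum $\sum R_j^{1/2}$; the paper just organizes the same bookkeeping by fixing the integer $i$ with $R_{i-1}<c\le R_i$ and splitting at $j=i$ rather than swapping the order of summation.

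One minor slip: your cutoff $M=\lfloor\gamma+1\rfloor$ is too small. The overlap condition for $j\ge 2$ with $F_1(A)$ is $2^{j-2}/c<1$, i.e.\ $j<\gamma+2$, so e.g.\ for $\gamma=1/2$ the index $j=2$ contributes while your $M=1$ misses it. Taking instead $M\le\gamma+2$ gives $\sum_{j=1}^M R_j^{1/2}\le \tfrac{2c^{1/2}}{\sqrt{2}-1}=2(\sqrt{2}+1)c^{1/2}$, and the final constant $\max\bigl(2(\sqrt{2}+1),\,2\sqrt{2}\bigr)=2(\sqrt{2}+1)<8$ still comfortably yields \eqref{eq:4ooB}.
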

\begin{proof} Pick $i\geq 2$ such that $R_{i-1}<c\leq R_i$. Then for
  all $j\geq i+1$
  \begin{equation*}
    F_j(ct)\leq F(R_{j-1}/R_i\leq |t|< R_{j}/R_{i-1})\leq F_{j-i+1}(t)+F_{j-i+2}(t).
  \end{equation*}
  Whence for any $u\in B(A)$ we can estimate
  \begin{align*}
    \|u\|_{B(cA)}&\leq \parb{ \sup_{j\geq
        i+1}\parb{R_{j}/R_{j-i+1}}^{1/2}+\sup_{j\geq
        i+1}\parb{R_{j}/R_{j-i+2}}^{1/2}}\|u\|_{B(A)}+\sum^{i}_{j=1}R_j^{1/2}\|u\|_{\vH}\\
    &\leq \parb{
      2^{(i-1)/2}+2^{(i-2)/2}+2^{i/2}(\sqrt 2 +1)}\|u\|_{B(A)}\\
    &\leq \parb{\sqrt 2+1+2(\sqrt 2 +1)}c^{1/2}
    \|u\|_{B(A)}\\
    &\leq 8c^{1/2} \|u\|_{B(A)}.
  \end{align*}
\end{proof}

We note the following abstract version of a result from \cite{JP, Mo2}
(proven by using suitable decompositions of unity and the
Cauchy Schwarz inequality, see also \cite[Subsection 2.2]{Wa}).
\begin{lemma}
  \label{lemma:resolvent-boundsooA} Let $A_1$ and $A_2$ be
  self-adjoint operators on Hilbert spaces $\vH_1$ and $\vH_2$,
  respectively, and let $T\in \vB(\vH_1,\vH_2)$. Suppose that
  uniformly in $m,n\in \Z$,
  \begin{equation}
    \label{eq:15hA}
    \|F(m\leq A_2< m+1)TF(n\leq A_1< n+1)\|\leq C.
  \end{equation} Then with the
  constant $C$ from \eqref{eq:15hA} we have 
  \begin{equation}
    \label{eq:21}
    \|T\|_{\vB(B(A_1),B(A_2)^*)}\leq  2C.
  \end{equation}
\end{lemma}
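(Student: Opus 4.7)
Since $B(A_2)^*$ is the topological dual of $B(A_2)$, proving \eqref{eq:21} is equivalent to proving the bilinear bound
\[
|\langle v,Tu\rangle|\le 2C\,\|v\|_{B(A_2)}\,\|u\|_{B(A_1)}
\]
for all $u\in B(A_1)$, $v\in B(A_2)$, where the pairing extends from the inner product on $\vH_2$. The hypothesis \eqref{eq:15hA} controls ``unit-interval'' matrix coefficients, while the Besov norms are defined through dyadic spectral blocks $F_j$. I would therefore refine each dyadic block into unit-interval pieces. Let $\chi_n^{(1)}=F(n\le A_1<n+1)$ and $\chi_m^{(2)}=F(m\le A_2<m+1)$; these form orthogonal resolutions of the identity on $\vH_1$ and $\vH_2$ respectively. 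Since the numbers $R_j=2^{j-1}$ are integers for $j\ge 1$, one has the strongly convergent orthogonal decomposition
\[
F_j(A_2)=\sum_{m\in M_j}F_j(A_2)\chi_m^{(2)},\qquad M_j:=\{m\in\Z:[m,m+1)\cap\{R_{j-1}\le|\cdot|<R_j\}\ne\emptyset\},
\]
and a direct count of integer intervals covering $[R_{j-1},R_j)\cup(-R_j,-R_{j-1}]$ yields $|M_j|\le 2R_j$ for every $j\ge 1$. Analogously one introduces $N_k$ for $A_1$ with $|N_k|\le 2R_k$.

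The heart of the proof is the ``matrix block'' estimate
\[
\|F_j(A_2)\,T\,F_k(A_1)\|_{\vB(\vH_1,\vH_2)} \le C\bigl(|M_j|\,|N_k|\bigr)^{1/2}.
\]
To prove it I set $w=F_j(A_2)TF_k(A_1)u$ and use orthogonality of $\{\chi_m^{(2)}\}_{m\in M_j}$ to write
\[
\|w\|^2=\sum_{m\in M_j}\|F_j(A_2)\chi_m^{(2)}w\|^2.
\]
Expanding $F_k(A_1)u=\sum_{n\in N_k}F_k(A_1)\chi_n^{(1)}u$ and controlling each summand by $\|\chi_m^{(2)}T\chi_n^{(1)}\|\le C$ via \eqref{eq:15hA}, Cauchy--Schwarz on the finitely supported $n$-sum of length $|N_k|$ gives $\|F_j(A_2)\chi_m^{(2)}w\|\le C|N_k|^{1/2}\|F_k(A_1)u\|$; summing $m\in M_j$ then yields the claimed block bound.

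Finally I assemble the dyadic sum: using the block bound together with $|M_j|\le 2R_j$ and $|N_k|\le 2R_k$,
\begin{align*}
|\langle v,Tu\rangle|
&\le\sum_{j,k\ge 1}\|F_j(A_2)v\|\,\|F_j(A_2)TF_k(A_1)\|\,\|F_k(A_1)u\|\\
&\le C\sum_{j,k\ge 1}(2R_j)^{1/2}(2R_k)^{1/2}\|F_j(A_2)v\|\,\|F_k(A_1)u\|\\
&=2C\,\|v\|_{B(A_2)}\,\|u\|_{B(A_1)},
\end{align*}
exactly as required. There is no serious obstacle here; the only point requiring some care is the bookkeeping of boundary integer indices contributing to $M_j$ (and $N_k$), which is what lets one land on the sharp constant $2C$ rather than a larger absolute multiple.
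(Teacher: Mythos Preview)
Your proof is correct and follows precisely the approach the paper indicates (``suitable decompositions of unity and the Cauchy--Schwarz inequality''): you refine each dyadic block $F_j$ into unit-interval spectral pieces, use the hypothesis on each piece, and reassemble via Cauchy--Schwarz. The bookkeeping $|M_j|\le 2R_j$ and the identification of the paper's $\|\cdot\|_{B^*}$ with the genuine dual norm (via testing against $v=R_j^{-1/2}F_j(A)u/\|F_j(A)u\|$) together yield exactly the constant $2C$.
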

We note the following (partial) abstract criterion for
\eqref{eq:15hA}, cf.  \cite[(I.10)]{Mo2} (see also \cite{Wa}). Recall
that a bounded operator $T$ on a Hilbert space is called
\emph{accretive} if $T+T^*\geq 0$, cf.  for example \cite[Chapter
X]{RS}.
\begin{lemma}
  \label{lemma:resolvent-boundsooAb} Let $A$ be a self-adjoint operator
  on a Hilbert space  $\vH$, and suppose $T\in \vB(\vH)$ is accretive.
  Suppose the following bounds uniformly in $n\in \Z$,
  \begin{subequations}
    \begin{align*}
      \|F(n\leq A< n+1)TF(n\leq A< n+1)\|&\leq C_1,\\
      \|F(A<
      n)TF(n\leq A< n+1)\|&\leq C_2,\\
      \|F(n\leq A< n+1)TF(A\geq n)\|&\leq C_3.
    \end{align*}
  \end{subequations} Then \eqref{eq:15hA} holds with $A_1=A_2=A$, the
  accretive $T$ and with $C=2C_1+C_2+C_3$.
\end{lemma}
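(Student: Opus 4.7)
The plan is to prove the block bound $\|E_m T E_n\|\leq C$ for the spectral projections $E_n := F(n\leq A< n+1)$ with $C=2C_1+C_2+C_3$, by a direct case analysis on the pair $(m,n)\in\Z^2$. The three hypotheses handle the cases $m\leq n$ immediately, and the accretivity of $T$ will be invoked only to reduce the remaining case $m>n$ to the adjoint of a case already handled.

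For $m=n$ the first hypothesis gives $\|E_n T E_n\|\leq C_1$. For $m<n$ the support condition $[m,m+1)\subset(-\infty,n)$ implies $E_m=E_m F(A<n)$, so $E_m T E_n=E_m\bigl(F(A<n)T E_n\bigr)$ and the second hypothesis yields the bound $C_2$. Alternatively, one may write $E_n=F(A\geq m)E_n$ and invoke the third hypothesis at index $m$ to obtain $C_3$ for the same case; it is in this way that $C_3$ naturally enters the final constant.

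The only nontrivial case is $m>n$. Here I will use that $S:=T+T^*\geq 0$ is a bounded positive operator, so the sesquilinear form $(u,v)\mapsto\langle u,Sv\rangle$ is positive semidefinite and the Cauchy--Schwarz inequality for such forms yields, for arbitrary orthogonal projections $P,Q$,
\[
\|P S Q\|\leq \|P S P\|^{1/2}\,\|Q S Q\|^{1/2}.
\]
Applied with $P=E_m$, $Q=E_n$ together with the first hypothesis used at both indices, this gives $\|E_m(T+T^*)E_n\|\leq 2C_1$. Writing
\[
E_m T E_n = E_m(T+T^*) E_n - (E_n T E_m)^*
\]
and observing that now $n<m$, so the previous paragraph applies to $E_n T E_m$ with bound $C_2$ (or $C_3$), I conclude $\|E_m T E_n\|\leq 2C_1+C_2$. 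Taking the maximum over the three cases and majorizing by $2C_1+C_2+C_3$ then establishes \eqref{eq:15hA} uniformly in $m,n\in\Z$. There is no serious obstacle: the one conceptual step is recognising that Cauchy--Schwarz for the positive form induced by the accretive part $T+T^*$ is precisely the device that bridges the ``upper'' and ``lower'' off-diagonal blocks, reducing the $m>n$ case to the already-controlled $m<n$ one via the adjoint.
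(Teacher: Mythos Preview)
Your proof is correct. The paper does not actually supply its own proof of this lemma; it merely states the result with a reference to Mourre \cite[(I.10)]{Mo2} (and \cite{Wa}), so there is nothing to compare against beyond noting that your Cauchy--Schwarz argument for the positive part $T+T^*$ is exactly the standard device behind such off-diagonal decay estimates.
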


\subsection{Setting of problem}\label{subsec: Setting of problem}
We assume that $\Omega\subset\bX$ is open and that
$\Theta:=\bX\setminus\overline {\Omega}\neq\emptyset$ is bounded with
smooth boundary $\partial \Theta=\partial \Omega$. Moreover we assume
that $\Theta$ is strictly convex, see Appendix \ref{sec:Appendix} for definition.
The case $\Omega=\bX$ is simpler and will be treated in Subsection
\ref{sec:CaseOmegarmc}.
\begin{comment}
  Possibly we could have $\Omega^{\rm c}=\emptyset$ or for $\dim
  \bX\leq 2$ the set $\Omega^{\rm c}$ consisting of one point only,
  however this will not be discussed here.
\end{comment}

We consider a Hilbert space $\vH=L^2(\Omega,\d x)\otimes L^2(M,\d
y)$. The structure of the second factor will not be of importance. To
make contact to \eqref{eq:Basic_assump} we think of $x$ as $x^{b}$ and
$y$ as $x_{b}$ (here $b\notin\{a_{\min}, a_{\max}\}$ and $\#
b=N$). Hence the function $s$ of \eqref{eq:Basic_assump} is now a
function of $x$ and $y$, viz. $s=s(x,y)$. The operator $\widetilde
H_{b}$ takes the form $\widetilde H_{b}=\tilde H^{b}+\tilde B_b$ on
$\vH$. We simplify notation and look at
\begin{align*}
  H&=\widetilde H^{b}+\widetilde B;\\
  \widetilde H^{b}&=s(x,y)^{-1}(\tfrac12p_x^2+V(x))s(x,y)^{-1}=\tilde
  p_x^2+V(x)s(x,y)^{-2},\\
  \widetilde B&=\widetilde B(x),
\end{align*} where the operator $\widetilde B(x)$ acts as a bounded
operator on the component $L^2(M,\d y)$. As an operator on $\vH$ it is
bounded, and it needs to be small and regular in $x$ in a certain
sense (to be specified in
\eqref{eq:smallness2}-\eqref{eq:smallness3}). Whence our method does
not require much specific structure of the operator-valued potential
$\widetilde B$.  The unbounded part, $\widetilde H^{b}$, is defined
with Dirichlet boundary condition at $\partial \Omega$, and the
two-body potential $V=V(x)$ needs to be sufficiently
regular. For simplicity we impose $V\in C^\infty(\overline \Omega)$
and  $
      \partial ^\alpha_xV(x)=O\big(|x|^{-\varepsilon-|\alpha|}\big )$, 
cf. Condition \ref{cond:smooth2} \ref{item:cond12b}. Whence the form domain of
$\widetilde H^{b}$  is given by the space
\begin{align*}
  Q(\widetilde H^{b})=Q(H)=L^2(M,H^1_0(\Omega);\d y)\subset\vH.
\end{align*}
The reader should keep in mind the rough approximation $\widetilde
H^{b}\approx -\Delta_x+2V(x)$ (recall here that $s\approx 1/\sqrt 2$ in
the large $\nu$ regime). We denote the resolvent of
$H_\sigma:=\sigma^{-2}H$ by $R(z,\sigma)$,
viz. $R(z,\sigma)=(H_\sigma-z)^{-1}$.

\begin{subequations}
  We introduce a function $r=r(x)$ that is different from the function
  $r$ of Section \ref{sec:Reduction to high-energy hard-core
    sub-system resolvent bounds}. It is now given as
  \begin{align}\label{eq:r}
    r(x)=\dist (x,\partial \Omega),
  \end{align} which can be extended to a smooth function on $\bX$ and
  which at infinity has bounds
  \begin{align}
    \label{eq:r_bounds}
    \partial^\alpha r=O\parb{r^{1-|\alpha|}}=O\parb{\inp{x}^{1-|\alpha|}}.
  \end{align} More importantly there exists $c>0$ such that
  \begin{align}
    \label{eq:low_bound}
    \nabla ^2 r_{|\{r(x)=r\}}\geq \tfrac c{1+r}I.
  \end{align}
\end{subequations} The verification of \eqref{eq:r_bounds} and
\eqref{eq:low_bound} is given in Appendix
\ref{sec:Appendix}.
Note the following consequence of \eqref{eq:low_bound},
\begin{align*}
  \forall \delta\in (0,1]:\tfrac{\nabla^2 r^2}{2}\geq\delta \d
  r\otimes\d r\oplus r\nabla ^2 r_{|\{r(x)=r\}}\geq\min (\delta,c)\tfrac r{1+r}I.
\end{align*}

In terms of the function $r$ we introduce a conjugate operator
different from the operator $A$ that appears in Lemma
\ref{lemma:Mourre1}. Now
\begin{align}\label{eq:5A}
  A:=\tfrac{\nabla r^2}{2}\cdot p+p\cdot \tfrac{\nabla r^2}{2}.
\end{align} This operator is self-adjoint on $L^2(\Omega,\d x)$ (it is
essentially self-adjoint on $C^\infty_{\c}(\Omega)$) and whence also
on $\vH$. Note that $Q(H)$ is ``boundedly stable'' under the dynamics
generated by $A$ (using here terminology of \cite{GGM}, see also
\cite{FMS}), i.e.
\begin{subequations}
  \begin{align}
    \label{eq:banu_stable}
    \forall \psi\in Q(H):\,\sup_{|t|<1}\|\e^{\i
      tA}\psi\|_{Q(H)}<\infty.
  \end{align}
  We note the representation $A=rp_r+p_rr$ where
  \begin{align*}
    p_r:=\tfrac{\nabla r}{2}\cdot p+p\cdot \tfrac{\nabla r}{2}=-\i
    \tfrac{\partial}{\partial r}-\i\tfrac{\Delta r}{2}.
  \end{align*} In turn the  operator $p_r$ is symmetric as an operator with
  domain $H^1_0(\Omega)$, and we define $p_r^2$ as the Friedrichs
  extension from $C^2_\c(\Omega)$ and use the same notation for
  $p_r^2\otimes I$. Note the inclusion $Q(H)\subset Q(p_r^2)$ for form
  domains as well as the following analogue of \eqref{eq:banu_stable}
  \begin{align}
    \label{eq:banu_stableb}
    \forall \psi\in Q(p_r^2):\,\sup_{|t|<1}\|\e^{\i
      tA}\psi\|_{Q(p_r^2)}<\infty.
  \end{align}
\end{subequations} Note for \eqref{eq:banu_stable} and
\eqref{eq:banu_stableb} that a similar  property is derived in
Appendix \ref{sec:Appendixb} for the conjugate operator used in Section \ref{sec:Reduction to high-energy hard-core sub-system
    resolvent bounds} (the one constructed by  the Graf
vector field). Note for  \eqref{eq:banu_stableb} the explicit formula  $\|p_r\e^{\i
      tA}\psi\|=\e^{2t}\|p_r\psi\|$. For a  different proof, given in a
    generalized setting, see Lemma \ref{lem:12.6.4.13.14b}. The
    property  \eqref{eq:banu_stable} follows from Lemma \ref{lem:12.6.4.13.14}.
  
We recall the Hardy bounds, cf. \cite[Lemma 5.3.1]{Da},
\begin{align}
  \label{eq:hardy}\|r^{-\kappa}|p_r|^{-\kappa}\|\leq 2\text{ for }\kappa\in [0,1].
\end{align} Moreover we have
\begin{align}
  \label{eq:laplace}
  -\Delta_x=p_r^2+L^2+\tfrac14 (\Delta
    r)^2+\tfrac12 (\partial_r\Delta r),
\end{align} where the second term is positive and commutes with $r$
(it is the Laplace-Beltrami operator in geodesic coordinates), and the
third and fourth terms are  bounded functions on $\Omega$.

We also introduce operators $f_1,f_2\geq 0$ with squares
\begin{align*}
  f_1^2&=\sigma^{-2/3}+\tfrac r{1+r},\\
  f_2^2&=\sigma^{-2/3}+\tfrac r{1+r}
  +\sigma^{-2}p_r^2=f_1^2+\sigma^{-2}p_r^2.
\end{align*}
A main preliminary bound of this section is
\begin{lemma}
  \label{lemma:resolvent-boundsA} With $A$ given by \eqref{eq:5A} we
  have uniformly in all large $\sigma,\nu>1$ and all $\Re z\approx 1$
  \begin{equation}
    \label{eq:limitbound3a}
    \left\|
      f_{2} R(z,\sigma)
      f_{2} \right\|_{\vB (B(A),B(A)^*)} \leq C.
  \end{equation}
\end{lemma}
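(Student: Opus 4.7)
The plan is to apply Mourre's commutator method to $H_\sigma = \sigma^{-2}H$ with the conjugate operator $A$ from \eqref{eq:5A}, obtain weighted $L^2$ resolvent bounds uniform in large $\sigma,\nu$, and then upgrade these to the abstract Besov bound of \eqref{eq:limitbound3a} using the tools of Subsection \ref{subsec: Abstract Besov spaces}.

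\textbf{Mourre estimate.} First I would compute $\i[H_\sigma, A]$: as in \eqref{eq:mourre comm} the principal term is $2\sigma^{-2}p\,\omega_*\,p$ with $\omega_* = \nabla^2 r^2/2$, and the convexity estimate displayed just below \eqref{eq:low_bound} gives $\omega_* \ge c\,\tfrac{r}{1+r}\,I$. Conjugation by $s(x,y)^{-1}$ contributes only $O(\nu^{-2})$ corrections via \eqref{eq:der_s}, $\sigma^{-2}\i[Vs^{-2}, A] = O(\sigma^{-2})$ by the short-range bounds on $V$, and $\i[\widetilde B, A]$ is absorbed by the (yet to be specified) regularity of $\widetilde B$ in $x$. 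Energy localization near $H_\sigma \approx 1$ converts $\sigma^{-2}|p|^2 \approx 2$, so the kinetic commutator produces $\gtrsim \tfrac{r}{1+r}$; the offset $\sigma^{-2/3}$ in $f_1^2$ then absorbs the $O(\sigma^{-2}+\nu^{-2})$ residuals uniformly, while the remaining piece $\sigma^{-2}p_r^2$ of $f_2^2$ is bounded on $\supp f$ through $\sigma^{-2}p_r^2 \le \sigma^{-2}(-\Delta_x) \lesssim H_\sigma$. The outcome is a Mourre estimate
\[
  f(H_\sigma)^*\bigl(\i[H_\sigma, A] - c' f_2^2\bigr) f(H_\sigma) \ge -\varepsilon\, f(H_\sigma)^* f(H_\sigma)
\]
for every $f \in C^\infty_{\c}$ supported near $1$, uniformly in large $\sigma,\nu$.

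\textbf{From Mourre to Besov.} Using the form-domain stability \eqref{eq:banu_stable}--\eqref{eq:banu_stableb}, I would then run the standard Mourre differential inequality for $G(\epsilon) = f_2\,\eta(H_\sigma)(H_\sigma - z - \i\epsilon)^{-1}\eta(H_\sigma)\,f_2$ with a smooth energy cutoff $\eta$ near $1$, and integrate in $\epsilon$ to get
\[
  \|f_2 R(z,\sigma) f_2\|_{\vB(L^2_s(A), L^2_{-s}(A))} \le C_s, \qquad s > 1/2,
\]
uniformly in large $\sigma,\nu$ and $\Re z \approx 1$. To upgrade to \eqref{eq:limitbound3a} I would verify the hypothesis of Lemma \ref{lemma:resolvent-boundsooA} by controlling $\|F(m \le A < m+1)\,f_2 R(z,\sigma) f_2\,F(n \le A < n+1)\|$ uniformly in $m, n \in \Z$. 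This can be done either by refining the Mourre differential inequality along dyadic spectral pieces of $A$ in the spirit of \cite{Sk4}, or by interpolating via Lemma \ref{lemma:resolvent-bounds} between the weighted $L^2$ bound above and the trivial $\vB(\vH)$-bound valid for $\Im z \ne 0$.

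\textbf{Main obstacle.} The delicate point will be the sharp identification of the lower bound $\gtrsim c' f_2^2$ in the Mourre estimate uniformly in $\sigma,\nu$: matching the three components of $f_2^2$ simultaneously to the convexity term, to the energy localization, and to the radial kinetic term, and verifying that the operator-valued perturbation $\widetilde B$ together with the $s(x,y)^{-1}$ conjugation produce only negligible contributions as $\sigma,\nu \to \infty$. A further subtlety is the Dirichlet boundary on $\partial\Omega$: the generator $\omega = r\nabla r$ vanishes there, which is favourable, but all commutators must nonetheless be justified as quadratic forms on $Q(H) = L^2(M, H^1_0(\Omega); \d y)$.
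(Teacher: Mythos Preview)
Your outline has the right shape (Mourre method followed by a Besov-space upgrade), but two key steps do not work as you describe them, and the paper's route is materially different.

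\textbf{The Mourre lower bound.} Your sketch derives $\gtrsim r/(1+r)$ from $\omega_*\ge c\,r/(1+r)$ together with ``energy localization converts $\sigma^{-2}|p|^2\approx 2$'', and then says the offset $\sigma^{-2/3}$ in $f_1^2$ absorbs residuals. This is backwards on two counts. First, $2\sigma^{-2}p\,\omega_*\,p\ge 2c\,\sigma^{-2}p\,\tfrac{r}{1+r}\,p$ does not become $\gtrsim r/(1+r)$ after energy cutoff: $p\,\tfrac{r}{1+r}\,p$ is not comparable to $|p|^2\cdot r/(1+r)$, and near $r=0$ this quantity vanishes. Second, the $\sigma^{-2/3}$ is not a slack term; it is the \emph{value} of the lower bound near the boundary, and it can only be produced if one keeps the radial piece $\sigma^{-2}p_r^2$ separately and uses the Hardy inequality \eqref{eq:hardy} to extract $\sigma^{-2}r^{-2}$. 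The paper does exactly this: from \eqref{eq:1_comm} one isolates $2\sigma^{-2}p_r^2$, spends part of it via Hardy to obtain $\delta\sigma^{-2}r^{-2}$, and bounds the tangential part $2\sigma^{-2}p_i r(\nabla^2 r)^{ij}p_j$ below by $\delta\,\tfrac{r}{1+r}\,\sigma^{-2}\widetilde H^b$ using \eqref{eq:low_bound} and \eqref{eq:laplace}. The $\sigma^{-2}\widetilde H^b$ is then converted to $\ge (\Re z-\tfrac78)\tfrac{r}{1+r}$ plus an $(H_\sigma-z)$ remainder via \eqref{eq:smallness3}; no energy cutoff enters. The resulting pointwise inequality $\delta\sigma^{-2}r^{-2}+c\,\tfrac{r}{1+r}\ge \delta'(\sigma^{-2/3}+\tfrac{r}{1+r})$ is what yields $f_1^2$, and the retained $3\sigma^{-2}p_r^2$ completes $f_2^2$. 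This is the content of \eqref{eq:posCom1}--\eqref{eq:posCom2}; without Hardy and without the $(H_\sigma-z)$-remainder trick you will not get a uniform positive commutator near $\partial\Omega$.

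\textbf{The resolvent scheme and the Besov upgrade.} The paper does not run a differential inequality for an energy-localized quantity; it uses Mourre's deformed resolvent $R_z(\epsilon)=(H_\sigma-\i\epsilon\,\i[H_\sigma,A]-z)^{-1}$ and the quadratic estimate \eqref{eq:quadratic}. The passage from this to the Besov bound is not by interpolation (your option via Lemma~\ref{lemma:resolvent-bounds} goes the wrong way: it gives $B\to B$, not $B\to B^*$). Instead one proves three uniform bounds \eqref{eq:8}--\eqref{eq:10} (for $\inp{A}^{-1}f_2 R f_2\inp{A}^{-1}$ and two one-sided propagation variants), then replaces $A$ by $A-n$ to obtain the hypotheses of Lemma~\ref{lemma:resolvent-boundsooAb}, using that $-\i f_2 R f_2$ is accretive. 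For this to work one needs the technical Lemma~\ref{lem:first-order-comm}, namely that $\ad_A(f_2)f_2^{-1}$ and $\ad_A^2(f_2)f_2^{-1}$ are uniformly bounded; this is nontrivial because $f_2$ involves $\sigma^{-1}p_r$, and it is the mechanism that lets you commute $f_2$ past $A$-weights with uniform control.

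In short: the missing ingredients are the Hardy bound (producing $\sigma^{-2}r^{-2}$ and hence the $\sigma^{-2/3}$ floor), the $(H_\sigma-z)$-remainder form of the Mourre estimate in place of energy localization, the commutator bounds of Lemma~\ref{lem:first-order-comm}, and the accretivity route of Lemmas~\ref{lemma:resolvent-boundsooA}--\ref{lemma:resolvent-boundsooAb} for the $B\to B^*$ conclusion.
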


The main result of the section is
\begin{proposition}\label{prop:setting-problem}
  With $r$ given as the multiplication operator on $\vH$ in terms of
  the function \eqref{eq:r} we have uniformly in all large
  $\sigma,\nu>1$ and all $\Re z\approx 1$
  \begin{equation}
    \label{eq:limitbound3a_main}
    \left\|
      R(z,\sigma)
    \right\|_{\vB (B(r),B(r)^*)} \leq C\sigma.
  \end{equation}
\end{proposition}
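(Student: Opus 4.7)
The strategy is to factor
\begin{equation*}
R(z,\sigma) = f_2^{-1}\bigl(f_2 R(z,\sigma) f_2\bigr) f_2^{-1}
\end{equation*}
and invoke Lemma \ref{lemma:resolvent-boundsA} for the central factor, reducing the proof, via duality between $f_2^{-1}\colon B(r)\to B(A)$ and $f_2^{-1}\colon B(A)^*\to B(r)^*$, to the single operator bound
\begin{equation*}
\|f_2^{-1}\|_{\vB(B(r),B(A))} \le C\sqrt{\sigma}.
\end{equation*}
Two copies of this, sandwiching the $\vB(B(A),B(A)^*)$-bound of Lemma \ref{lemma:resolvent-boundsA}, yield the claimed estimate $\|R(z,\sigma)\|_{\vB(B(r),B(r)^*)} \le C\sigma$.

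To prove the reduced operator bound, I would dyadically decompose $u\in B(r)$ as $u=\sum_{j\ge 1}F_j(r)u$ (with $F_j(r)=F(R_{j-1}\le r<R_j)$) and estimate each piece $v_j:=f_2^{-1}F_j(r)u$ in $B(A)$ separately. For each $j$ with $R_j\ge 1$ the ring condition gives $f_1\ge 1/\sqrt 2$, hence $\|v_j\|_{\vH}\le C\|F_j(r)u\|_{\vH}$. Using the representation $A=2rp_r-\i|\d r|^2$ from \eqref{eq:5A} together with the elementary identity $\|p_r f_2^{-1}\|_{\vB(\vH)}\le\sigma$ (a direct consequence of $f_2^2\ge\sigma^{-2}p_r^2$) and the fact that $r\approx R_j$ on the relevant support, one obtains, modulo commutator errors absorbed via \eqref{eq:r_bounds}--\eqref{eq:low_bound} and the Hardy bound \eqref{eq:hardy}, the key estimate $\|\inp{A} v_j\|_{\vH} \le CR_j\sigma\,\|F_j(r)u\|_{\vH}$. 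Real interpolation between this and the $\vH$-bound $\|v_j\|_{\vH}\le C\|F_j(r)u\|_{\vH}$, combined with the embedding $L^2_s(A)\subset B(A)$ for $s>1/2$ from \eqref{eq:3}, then gives
\begin{equation*}
\|v_j\|_{B(A)}\le C\sqrt{R_j\sigma}\,\|F_j(r)u\|_{\vH}=C\sqrt{\sigma}\cdot R_j^{1/2}\|F_j(r)u\|_{\vH},
\end{equation*}
and summing in $j$ produces the desired bound $\sqrt{\sigma}\,\|u\|_{B(r)}$.

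The hard part will be the innermost block $F_1(r)u$ (supported in $r<1$), where $f_2^{-1}$ can be as large as $\sigma^{1/3}$ and the simple ring-localization argument above breaks down. Here a further splitting in the $p_r$-variable is needed, using the resolution provided by $f_2^2\ge\sigma^{-2/3}+\sigma^{-2}p_r^2$: the low-frequency part $|p_r|\le\sigma^{2/3}$ has $|A|\lesssim r|p_r|\lesssim 1$ on its image under $f_2^{-1}$ and contributes through the trivial bound $\|\cdot\|_{B(A)}\lesssim\|\cdot\|_{\vH}$, while the complementary high-frequency part is suppressed by the pointwise decay $f_2^{-1}\lesssim\sigma|p_r|^{-1}$ valid in that regime. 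A further subtle point to be watched is the non-locality of $f_2^{-1}$ in $r$, which forces a careful treatment of cross terms between neighbouring dyadic rings via commutators estimated by the structural bounds mentioned above; as an alternative implementation, one may reduce the proposition directly to uniform matrix element estimates $\|F(m\le r<m+1)R(z,\sigma)F(n\le r<n+1)\|_{\vB(\vH)}\le C\sigma$ and invoke Lemma \ref{lemma:resolvent-boundsooA}, again using the factorization above together with the phase-space analysis.
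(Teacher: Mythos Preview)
Your overall strategy---factor $R(z,\sigma)=f_2^{-1}\bigl(f_2R(z,\sigma)f_2\bigr)f_2^{-1}$, invoke Lemma~\ref{lemma:resolvent-boundsA}, and reduce to $\|f_2^{-1}\|_{\vB(B(r),B(A))}\le C\sqrt\sigma$---is exactly the paper's. The difference is in how this last bound is obtained. The paper factors $f_2^{-1}=S_\sigma f_1^{-1}$ with $S_\sigma=f_2^{-1}f_1$ and introduces an intermediate Besov space $B(T_\sigma)$, $T_\sigma=M\bigl(\sigma(r/(1+r))^{1/2}(1+r)f_1^2\bigr)$. Lemma~\ref{lemma:resolvent-bounds} (operator interpolation) is applied to $S_\sigma$ via the single weighted estimate $\|AS_\sigma v\|\le C(\|T_\sigma v\|+\|v\|)$; then the purely multiplicative (hence $r$-local) map $f_1^{-1}$ is handled by a three-region argument in $r$ (the regions $r<\sigma^{-2/3}$, $\sigma^{-2/3}\le r<2$, $r\ge2$). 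This split cleanly separates the non-local part of $f_2^{-1}$ from the small-$r$ analysis.

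Your direct dyadic-in-$r$ approach runs into two problems. First, the justification you give for the far pieces---interpolate to $L^2_s(A)$ for $s>1/2$ and then embed via \eqref{eq:3}---does not yield $\sqrt{R_j\sigma}$: it gives $(R_j\sigma)^s$ with $s>1/2$, and the resulting extra factor $R_j^{s-1/2}$ makes the sum over $j$ diverge. The bound $\|v\|_{B(A)}\le C(\|v\|\,\|\inp{A}v\|)^{1/2}$ that you need is \emph{not} a consequence of that embedding; it follows from a direct dyadic computation in $A$ (split at $R_k\approx M_1/M_0$), or one bypasses the issue entirely by using Lemma~\ref{lemma:resolvent-bounds} at the operator level as the paper does. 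Second, and more seriously, your treatment of the block $F_1(r)u$ is flawed: on the low-frequency set $|p_r|\le\sigma^{2/3}$ with $r<1$ one only gets $|A|\lesssim r|p_r|\lesssim\sigma^{2/3}$, not $\lesssim1$, so the ``trivial bound $\|\cdot\|_{B(A)}\lesssim\|\cdot\|_{\vH}$'' is unavailable. Plugging the crude estimates $\|f_2^{-1}F_1(r)u\|\le\sigma^{1/3}\|F_1(r)u\|$ and $\|Af_2^{-1}F_1(r)u\|\le C\sigma\|F_1(r)u\|$ into the square-root bound gives only $\sigma^{2/3}$, off by $\sigma^{1/6}$. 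What is actually needed is a further decomposition of $\{r<1\}$ into $\{r<\sigma^{-2/3}\}$ and dyadic shells in $\{\sigma^{-2/3}\le r<1\}$, exploiting that on the latter $f_1^{-1}\approx r^{-1/2}$ and the $A$-bound improves to $\sigma r$; this is precisely the content of the paper's Lemma~\ref{lem:besov-space-boundj}, where the function $t_\sigma\sim\sigma r^{3/2}$ encodes that scaling.
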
 Obviously for  $b$ with $\#b=N$ and  $\Omega_b\neq \bX^b$, and under the regularity
conditions on $\Omega=\Omega_b$ and $V_b=V$ introduced above, the bound
\eqref{eq:Basic_assump} is a
consequence of Proposition \ref{prop:setting-problem}.

\subsection{Besov space bound of resolvent, Lemma
  \ref{lemma:resolvent-boundsA}}\label{subsec: Concrete Besov spaces}

In this subsection we shall prove Lemma \ref{lemma:resolvent-boundsA}
using a variant of Mourre theory.

\subsubsection{First order commutator}
We ``compute'' the commutator
\begin{align}
  \begin{split}
    \label{eq:1_comm}
    \i[H,A]&:=s^{-1}\parb{2p_r^2 +2
      p_ir(\nabla^2r)^{ij}p_j+W}s^{-1}\\&\;\;\;\;+2\Re\parb{s^{-1}(\nabla
      r^2\cdot \nabla_xs)\widetilde H^b}-\nabla r^2\cdot
    \nabla_x\widetilde B;
  \end{split}\\
  W(x)&:=\tfrac12 (\Delta r)^2+\partial_r\Delta
  r-\tfrac14\Delta^2 r^2-\nabla r^2\cdot \nabla V(x).\nonumber
\end{align} Thus at this stage the first order commutator $\i[H,A]$ is
defined by its formal expression. We note that it is a bounded
quadratic form on $Q(H)$. The term $W$ is a bounded function, and the
second and third terms are ``small''. More precisely in terms of the
parameters $\nu$ and $\sigma$ of Section \ref{sec:Reduction to
  high-energy hard-core sub-system resolvent bounds} we have uniform
bounds, cf. \eqref{eq:small_termb}, \eqref{eq:der_s} and
\eqref{eq:rep1},
\begin{subequations}
  \begin{align}
    \label{eq:smallness1}
    |\tfrac{1+r}{rs}\nabla
    r^2\cdot \nabla_xs|&\leq C{\nu}^{-1},\\
    \label{eq:smallness1b}
    |\tfrac{1+r}{r}(\nabla
    r^2\cdot \nabla_x)^2s|&\leq C,\\
    \label{eq:smallness2}
    \|\tfrac{1+r}{r}\nabla r^2\cdot
    \nabla_x\widetilde B\|&\leq C \tfrac{\sigma^2}{\nu},\\
    \label{eq:smallness2b}
    \|\tfrac{1+r}{r}(\nabla r^2\cdot
    \nabla_x)^2\widetilde B\|&\leq C \sigma^2,\\
    0\leq \widetilde B&\leq \tfrac78\sigma^2. \label{eq:smallness3}
  \end{align}
\end{subequations}

We can estimate the second term after commutation as
\begin{align*}
  2\Re\parb{s^{-1}(\nabla r^2\cdot \nabla_xs)\widetilde H^b}\geq -C_1
  \nu^{-1}\parb{\Re\parb{\tfrac r{1+r}H}+C_2},
\end{align*} cf. \eqref{eq:smallness1} and \eqref{eq:smallness3}.

Similarly we can estimate the third term as
\begin{align*}
  -\nabla r^2\cdot \nabla_x\widetilde B\geq -C\tfrac {\sigma^2}\nu
  \tfrac r{1+r},
\end{align*} cf. \eqref{eq:smallness2}.

Using these bounds, \eqref{eq:low_bound} and \eqref{eq:hardy} we can
estimate for some small $\delta>0$ (and uniformly in $\sigma,\nu>1$)
\begin{align}
  \begin{split}
    \label{eq:1_commbb}
    \i[\sigma^{-2}H,A]&\geq \sigma^{-2}3p_r^2 +\delta
    \parbb{\sigma^{-2}r^{-2} +\Re\parb{\tfrac
        r{1+r}\sigma^{-2}\widetilde H^b}}\\&-C_1
    \nu^{-1}\Re\parb{\tfrac
      r{1+r}\sigma^{-2}H}-C_3(\sigma^{-2}+\nu^{-1}\tfrac r{1+r}).
  \end{split}
\end{align}

Next we estimate using \eqref{eq:smallness3}
\begin{align*}
  \Re\parb{\tfrac r{1+r}\sigma^{-2}\widetilde H^b}\geq(\Re
  z-\tfrac78)\tfrac r{1+r}+\Re\parb{\tfrac r{1+r}(\sigma^{-2} H-z)}.
\end{align*}

We are interested in the regime $\Re z\approx 1$. Concretely let us
assume that $|1-\Re z |\leq \tfrac19$ allowing us to estimate
uniformly in the spectral parameter: There exists $\delta'\in(0,3)$
such that for all such $z$ and  all large $\sigma,\nu>1$
\begin{align*}
  \delta
  \parbb{\sigma^{-2}r^{-2} +(\Re z-\tfrac78)\tfrac r{1+r}}-C_1
  \nu^{-1}\Re z \tfrac r{1+r}-C_3(\sigma^{-2}+\nu^{-1}\tfrac
  r{1+r})\geq \delta'f_1^2.
\end{align*} From \eqref{eq:1_commbb} we thus obtain
\begin{subequations}
  \begin{align}\label{eq:posCom1}
    \i[H_\sigma,A]\geq
    \delta'f_2^2+(\delta-C_1\nu^{-1})\Re\parb{\tfrac r{1+r}
      (H_\sigma-z)}.
  \end{align}

  Now let us introduce (cf. the method of \cite{Mo1})
  \begin{align*}
    R_{z}(\epsilon) = (H_\sigma - \i\epsilon \i[H_\sigma, A] -
    z)^{-1};\, \epsilon \,\Im z>0,\,|1-\Re z |\leq \tfrac19.
  \end{align*} We only need $|\epsilon|\leq 1$, and we note that as a
  form $H_\sigma - \i\epsilon \i[H_\sigma, A] $ is strictly
  $m$-sectorial in the terminology of \cite{Ka,RS}, cf. the
  computation \eqref{eq:1_comm}. The associated operator,
  cf. \cite[Theorem VIII.17]{RS}, is invertible if we also assume that
  $|\Im z|>>1$, and hence the inverse is well-defined with  adjoint $
  R_{z}(\epsilon)^* = R_{\bar z}(-\epsilon) $ under these
  conditions. However it follows from a connectedness argument and
  \eqref{eq:quadratic} stated below (with $T=f_2$) that
  $R_{z}(\epsilon)$ is well-defined without the condition $|\Im
  z|>>1$. Note also that $\lim_{\epsilon \to
    0}R_{z}(\epsilon)=R(z,\sigma)$.

  We obtain from \eqref{eq:posCom1} that
  \begin{align}\label{eq:posCom2}
    \i[H_\sigma,A]\geq
    \tfrac{\delta'}2f_2^2+(\delta-C_1\nu^{-1})\Re\parb{\tfrac r{1+r}
      (H_\sigma- \i\epsilon \i[H_\sigma, A] -z)}.
  \end{align}
\end{subequations}
In fact using \eqref{eq:laplace} and \eqref{eq:1_comm} we compute
\begin{align}
  \begin{split}\label{eq:imag2com}
    &\Re\parb{\tfrac r{1+r} \i\epsilon \i[H_\sigma, A]}\\&=\tfrac
    \epsilon {2\sigma^2}\i[\tfrac r{1+r}, \i[H, A]]\\&=\tfrac \epsilon
    {2\sigma^2}\i[\tfrac r{1+r},
    s^{-1}2p_r^2s^{-1}+\Re\parb{s^{-1}(\nabla
      r^2\cdot \nabla_xs)s^{-1}p_r^2s^{-1}}]\\
    &=-\tfrac \epsilon {\sigma^2}s^{-1}\parbb{2\Re \parb{p_r
        (1+r)^{-2}}+\Re \parb{(\nabla
        r^2\cdot \nabla_xs)s^{-1}\Re \parb{p_r  (1+r)^{-2}}}}s^{-1}\\
    &=-\tfrac \epsilon {\sigma^2}\Re \parb{p_rg},\end{split}
\end{align} where $g=g(x,y)$ is a uniformly bounded function, and thus
indeed
\begin{align*}
  -(\delta-C_1\nu^{-1})\Re\parb{\tfrac r{1+r}\i\epsilon \i[H_\sigma,
    A]}\leq C\sigma^{-2/3}f_2^2\leq\tfrac{\delta'}2f_2^2.
\end{align*}

Due to \eqref{eq:posCom2} and the second resolvent equation we have the quadratic estimate
\begin{align*}
  \|f_2 R_{z}(\epsilon) T \|^2 \leq C_1 \parb{|\epsilon|^{-1} \|T^*
    R_{z}(\epsilon) T \|+\|T^* R_{z}(\epsilon)^* \tfrac r{1+r}T \|}.
\end{align*} Hence if  $T$ is an operator obeying 
\begin{subequations}
  \begin{align}
    \label{eq:comT}
    \|f_2^{-1}\tfrac r{1+r}T \|\leq C_2,
  \end{align} then
  \begin{align*}
    \|f_2 R_{z}(\epsilon) T \|^2 \leq C_1  \parb{|\epsilon|^{-1} \|T^*
    R_{z}(\epsilon) T \|+C_2\|T^* R_{z}(\epsilon)^* f_2 \|}.
  \end{align*} This leads to
  \begin{align}
    \label{eq:quadratic}
    \|f_2 R_{z}(\epsilon) T \|^2 \leq C_3 |\epsilon|^{-1} \|T^*
    R_{z}(\epsilon) T \|+C_4.
  \end{align}
\end{subequations}
We have the examples $T=f_2$ and $T=f_2\inp{A}^{-1}$ with bounds
independent of all large $\sigma$ and $\nu$. Indeed for all $\psi\in
\vH$ (or alternatively for all $\psi\in L^2(\Omega,\d x)$ since the
operators act on the first tensor factor only)
\begin{align*}
  \inp{f_2^2}_{\tfrac r{1+r}f^{-1}_2\psi}&\leq
  \|\psi\|^2+\inp{\sigma^{-2}p_r^2}_{\tfrac r{1+r}f^{-1}_2\psi} \\
  &\leq
  \|\psi\|^2+\|\sigma^{-1}p_rf^{-1}_2\psi\|^2+C_1\sigma^{-2}\|f^{-1}_2\psi\|^2\\
  &\leq C_2\|\psi\|^2,
\end{align*} proving \eqref{eq:comT} for these examples. 
\begin{comment}
Moreover
by this estimate we have
\begin{align*}
  (A+\i)f_2^{-1}\tfrac r{1+r}f_2\inp{A}^{-1}=[A,f_2^{-1}\tfrac
  r{1+r}f_2]\inp{A}^{-1} +B,
\end{align*} where $B$ is (uniformly) bounded. We claim that also
$[A,f_2^{-1}\tfrac r{1+r}f_2]$ is bounded. We compute its adjoint as
\begin{align}\label{eq:26ooS}
  [f_2\tfrac r{1+r}f_2^{-1},A]=[f_2,A]\tfrac r{1+r}f_2^{-1}+f_2[\tfrac
  r{1+r},A]f_2^{-1}+f_2\tfrac r{1+r}[f_2^{-1},A].
\end{align} Here the middle term is treated as above using that
$-\i[\tfrac r{1+r},A]=(1+r)^{-2}$. Note here and henceforth the
interpretation provided by the property \eqref{eq:banu_stableb},
\begin{align*}
  -\i [\tfrac r{1+r},A]=\slim_{t\to 0}t^{-1}\parb{\tfrac r{1+r}\e^{-\i
      tA}-\e^{-\i tA}\tfrac r{1+r}}_{|Q(p^2_r)},
\end{align*} and similarly for the first and third terms in
\eqref{eq:26ooS}
\begin{align*}
  -\i [f_2,A]&=\slim_{t\to 0}t^{-1}\parb{f_2\e^{-\i
      tA}-\e^{-\i tA}f_2}_{|Q(p^2_r)\to \vH},\\
  -\i [f_2^{-1},A]&=\slim_{t\to 0}t^{-1}\parb{f_2^{-1}\e^{-\i
      tA}-\e^{-\i tA}f_2^{-1}}_{| \vH\to Q(p^2_r)},
\end{align*}
(see \cite{GGM} for related studies in an abstract setting).  For the
latter terms we will use the first bound of  
\end{comment}

\subsubsection{Technical lemma} For $B\in\vB(Q(p_r^2), \vH)$ we define
\begin{align*}
  \ad_A(B)= [B,A]=\slim_{t\to 0}\i t^{-1}\parb{B\e^{-\i
      tA}-\e^{-\i tA}B}_{|Q(p^2_r)\to \vH},
\end{align*} provided the right hand side exists. Note that we here 
  use \eqref{eq:banu_stableb}. In the terminology
of \cite{GGM},  $B\in C^1(A_{|Q(p^2_r)},A_{|\vH})$ if the right hand
side exists. We use this interpretation of
the (repeated) commutators in the following lemma (in turn to be used later).
\begin{lemma}\label{lem:first-order-comm} Uniformly in all
  $\sigma>1$
  \begin{subequations}
    \begin{align}
      \label{eq:Afcom}
      \|\ad_A(f_2)f_2^{-1}\|&\leq C,\\
      \|\ad^2_A(f_2)f_2^{-1}\|&\leq C.\label{eq:Afcomb}
    \end{align} 
\end{subequations}
\end{lemma}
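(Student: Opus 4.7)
The plan is to reduce both bounds to quadratic-form estimates on $F := f_2^2 = f_1^2 + \sigma^{-2}p_r^2$ and then transfer them to $f_2 = \sqrt{F}$ via a Helffer--Sj\"ostrand argument. Throughout, the commutators are interpreted as explained in the paragraph preceding the lemma, with the necessary $A$-stability of $Q(p_r^2)$ supplied by \eqref{eq:banu_stableb}.

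\textbf{Step 1 (form bound for $\ad_A(F)$).} From $A = rp_r + p_r r$, $[r,p_r] = \i|\nabla r|^2$, the identity $[g(r),A] = 2\i r g'(r)|\nabla r|^2$ and $[p_r,A] = -\i(|\nabla r|^2 p_r + p_r|\nabla r|^2)$ one computes
\begin{equation*}
\ad_A(F) = 2\i\,\frac{r|\nabla r|^2}{(1+r)^2} - \i\sigma^{-2}\bigl(|\nabla r|^2 p_r^2 + 2p_r|\nabla r|^2 p_r + p_r^2|\nabla r|^2\bigr).
\end{equation*}
Using $|\nabla r|^2 = O(1)$ from \eqref{eq:r_bounds}, the first term is dominated by $Cf_1^2 \le CF$, while the symmetric second-order terms are $\le C\sigma^{-2}p_r^2 \le CF$ in operator norm on $Q(p_r^2)$. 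Hence $\pm\i\ad_A(F) \le CF$ as forms, so $M := f_2^{-1}\ad_A(F)f_2^{-1}$ extends to a bounded operator on $\vH$ with $\|M\| \le C$ uniformly in large $\sigma$; equivalently $\ad_A(F) = f_2 M f_2$.

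\textbf{Step 2 (transfer to $f_2$).} Pick an almost analytic extension $\tilde g \in C^\infty(\C)$ of $x\mapsto \sqrt x$, viewed as a symbol of order $1/2$, with $\supp\tilde g \subset \{|\Im z|\le 1\}$ and $|\bar\partial\tilde g(z)| \le C_N |\Im z|^N \langle\Re z\rangle^{-1/2-N}$ for every $N$. Since $f_2$ commutes with $(F-z)^{-1}$ and $\ad_A(F) = f_2 M f_2$, the Helffer--Sj\"ostrand formula yields
\begin{equation*}
\ad_A(f_2)\,f_2^{-1} = \frac{1}{2\pi\i}\int_\C \bar\partial\tilde g(z)\,\frac{f_2}{F-z}\,M\,\frac{1}{F-z}\,dz\wedge d\bar z.
\end{equation*}
The spectral-theorem bounds $\|f_2(F-z)^{-1}\| \le C\sqrt{\langle\Re z\rangle}/|\Im z|$ and $\|(F-z)^{-1}\| \le 1/|\Im z|$ majorise the integrand by $C|\bar\partial\tilde g(z)|\sqrt{\langle\Re z\rangle}/|\Im z|^2$, integrable for $N = 2$. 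This proves \eqref{eq:Afcom}.

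\textbf{Step 3 (second commutator and main obstacle).} Applying $\ad_A$ once more to $\ad_A(F)$, using the same identities together with $\partial^\alpha r = O(\langle x\rangle^{1-|\alpha|})$, each new commutation either keeps a bounded coefficient or produces another symmetric arrangement of $p_r^2$, giving $\pm\ad_A^2(F) \le CF$ and thus $\ad_A^2(F) = f_2 M' f_2$ with $M'$ uniformly bounded. Differentiating the integral representation of $f_2$ twice produces, besides the direct $\ad_A^2(F)$-term, a cross term whose crucial piece has the schematic form $(F-z)^{-1}\ad_A(F)(F-z)^{-1}\ad_A(F)(F-z)^{-1}f_2^{-1}$; sandwiching each $\ad_A(F)$-factor as $f_2 M f_2$ and commuting the $f_2$'s through the intervening resolvents rewrites this as $\tfrac{f_2}{F-z}M\tfrac{F}{F-z}M\tfrac{1}{F-z}$. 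Adjoining $\|F/(F-z)\| \le 1 + |z|/|\Im z|$ to the bounds of Step 2 controls the integrand by $C|\bar\partial\tilde g(z)|\langle\Re z\rangle^{3/2}/|\Im z|^3$, integrable for $N \ge 4$, yielding \eqref{eq:Afcomb}. The main technical obstacle is precisely this bookkeeping: ensuring that in every term arising from differentiating the integral twice, the outermost $f_2^{-1}$ can be absorbed by a neighbouring $f_2$ after commutation past the resolvents, which relies on $f_2$ being a function of $F$ so that all intermediate spectral operators commute.
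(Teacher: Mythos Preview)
Your strategy---deduce a form bound $\pm\i\,\ad_A(F)\le CF$ and transfer it to $f_2=\sqrt F$ by a Helffer--Sj\"ostrand argument---is sound, but Step~2 has a uniformity gap. You claim an almost analytic extension $\tilde g$ of $\sqrt{\,\cdot\,}$ satisfying symbol bounds $|\bar\partial\tilde g(z)|\le C_N|\Im z|^N\langle\Re z\rangle^{-1/2-N}$; no such $\tilde g$ exists with $C_N$ independent of $\sigma$, because $\sqrt{\,\cdot\,}$ is singular at $0$ and any smooth $g$ on $\R$ that agrees with $\sqrt{\,\cdot\,}$ on $\sigma(F)\subset[\sigma^{-2/3},\infty)$ must satisfy $|g'(\sigma^{-2/3})|=\tfrac12\sigma^{1/3}$, while $\langle\sigma^{-2/3}\rangle^{-1/2}\approx1$. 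Since the lemma asks for bounds uniform in large $\sigma$, the argument as written does not close. The remedy is a single rescaling: work with $\tilde F:=\sigma^{2/3}F\ge1$, note that $\ad_A(f_2)f_2^{-1}=\ad_A(\tilde F^{1/2})\tilde F^{-1/2}$ and that the form bound on $\ad_A(\tilde F)$ is scale-invariant, and then extend $\sqrt{\,\cdot\,}$ from $[1,\infty)$ where it \emph{is} a symbol of order $1/2$ with $\sigma$-free constants. Your estimates in Steps~2--3 then go through verbatim (the spectral gap of $\tilde F$ even gives a uniform bound on $(\tilde F-z)^{-1}$ for $\Re z\le\tfrac12$).

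Once repaired, your route differs genuinely from the paper's. The paper uses the positive-resolvent representation $S^{-1/2}=\pi^{-1}\int_0^\infty s^{-1/2}(S+s)^{-1}\,\d s$ together with the explicit identity $\i[S,A]=4(S-f^2)$, where $f^2$ is a \emph{bounded} function of $r$; it then commutes $f^2$ past $(S+s)^{-1}$ by hand to make the integrals converge (the form bound alone would leave a divergent $\int s^{-1}\,\d s$ in that approach). Your argument replaces this algebra by the Helffer--Sj\"ostrand contour, and after the rescaling needs only the abstract form bound---conceptually cleaner, at the cost of invoking the symbol-class functional calculus.
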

 \begin{comment}
 Indeed the first term  in \eqref{eq:26ooS} has the form $[f_2,A]f_2^{-1}B$ with $B$
bounded, and similarly for the last term  we write
\begin{align*}
  f_2\tfrac r{1+r}[f_2^{-1},A]=Bf_2[f_2^{-1},A]=-B[f_2,A]f_2^{-1}.
\end{align*} Hence \eqref{eq:comT} for the second example
$T=f_2\inp{A}^{-1}$ follows from \eqref{eq:Afcom}.
 \end{comment}
\begin{proof}
  Note the representation (valid for any strictly positive operator
  $S$)
  \begin{equation*}
    S^{-1/2} = \pi^{-1}\int_0^\infty s^{-1/2} (S+s)^{-1}\d s.
  \end{equation*} With $S=f_2^2$ we thus obtain (for the first term)
  \begin{align*}
    [f_2,A]&=S^{-1/2}[S,A]+[S^{-1/2},A]S\\&=S^{-1/2}[S,A]-\pi^{-1}\int_0^\infty
    s^{-1/2} (S+s)^{-1}[S,A](S+s)^{-1}\d sS,
  \end{align*} where 
\begin{align*}
  [S,A]=\slim_{t\to 0}\i t^{-1}\parb{S\e^{-\i
      tA}-\e^{-\i tA}S}_{|Q(p^2_r)\to Q(p^2_r)^*}
\end{align*} is computed (up a factor $-\i$) as 
  \begin{align*}
    \i[S,A]=2\Re(\i[S, rp_r])=4\sigma^{-2}p_r^2-2r
    (1+r)^{-2}=4S-4f_1^2-2r (1+r)^{-2}.
  \end{align*} In particular
  \begin{align*}
    - CS\leq \i [S,A]\leq C S,
  \end{align*} which obviously allows  us to conclude that
  \begin{align*}
    B_1:=S^{-1/2}[S,A]S^{-1/2}
  \end{align*} is bounded. Similarly we introduce
  \begin{align*}
    B_2:=\int_0^\infty s^{-1/2} (S+s)^{-1}[S,A](S+s)^{-1}\d sS^{1/2},
  \end{align*} and it remains to show boundedness of $B_2$: We write
  with $f:=(f_1^2+\tfrac12 r (1+r)^{-2})^{1/2}$
  \begin{align*}
    B_2&= C I+\i\int_0^\infty s^{-1/2} (S+s)^{-1}4f^2(S+s)^{-1}\d sS^{1/2}\\
    &= C (I-f^2S^{-1})+\i\int_0^\infty s^{-1/2}
    [(S+s)^{-1},4f^2](S+s)^{-1}\d sS^{1/2}\\
    &= C (I-fS^{-1}f-f[f,S^{-1}])-4\i\int_0^\infty s^{-1/2}
    (S+s)^{-1}[S,f^2](S+s)^{-2}\d sS^{1/2}\\
    &= B-CfS^{-1}[S,f]S^{-1}-4\i\int_0^\infty s^{-1/2}
    (S+s)^{-1}[S,f^2](S+s)^{-2}\d sS^{1/2}.
  \end{align*} Using the notation $O(1)=O_{\vB(\vH)}(\sigma^0)$ we
  have
  \begin{align*}
    \i [S,f]&=\tfrac {p_r}\sigma O(1)(\sigma f)^{-1}+\text{ h.c.},\\
    \i [S,f^2]&=\tfrac {p_r}\sigma O(1)\sigma^{-1}+\text{
      h.c.}=\sigma^{-2/3}S^{1/2}O(1)S^{1/2},
  \end{align*} and the first identity yields that also
  \begin{align*}
    \i fS^{-1}[S,f]S^{-1}&=\parb{fS^{-1}\tfrac {p_r}\sigma}
    O(1)(\sigma
    f)^{-1}S^{-1}+fS^{-1}(\sigma^{2/3} f)^{-1}O(1)\parb{\sigma^{-1/3}\tfrac {p_r}\sigma S^{-1}}\\
    &=O(1)\sigma^{-2/3}S^{-1}+fS^{-1}\sigma^{-1/3}O(1)=O(1);
  \end{align*} i.e. the term is uniformly bounded. The second identity
  yields that the integral is bounded by
  \begin{align}\label{eq:estcoms}
    \begin{split}
      &C\sigma^{-2/3}
      \int_0^\infty s^{-1/2} \|(S+s)^{-1}S^{1/2}(S+s)^{-1}\|\d s\\
      &\leq C\sigma^{-2/3} \int_0^\infty s^{-1/2}
      (\sigma^{-2/3}+s)^{-3/2}\d s,
    \end{split}
  \end{align} and since the latter integral is independent of $\sigma$
  indeed also the integral is uniformly bounded. So also $B_2$ is
  uniformly bounded and \eqref{eq:Afcom} follows.

  As for \eqref{eq:Afcomb} we use a previous computation to obtain
  \begin{align*}
    -\ad^2_A(S)&=4\parb{4S-4f_1^2-2r
      (1+r)^{-2}}+2r\tfrac{\partial}{\partial
      r}\parb{4f_1^2+2r(1+r)^{-2}}\\
    &=16S-16\tilde f^2;\; \tilde f=\parbb{\sigma^{-2/3}+\frac{\tfrac34
        r +\tfrac94 r^2+r^3}{(1+r)^3}}^{1/2}.
  \end{align*} which leads to form-boundedness
  \begin{align*}
    \|S^{-1/2}\ad^2_A(S)S^{-1/2}\|\leq C.
  \end{align*}

  We decompose
  \begin{align*}
    \ad^2_A(f_2)S^{-1/2}&=T_1+\cdots+T_5;\\
    T_1&=\ad_A(S^{-1/2})\,\ad_A(S)S^{-1/2}=\parb{\ad_A(S^{-1/2})S^{1/2}}\parb{S^{-1/2}
      \ad_A(S)S^{-1/2}},\\
    T_2&=S^{-1/2}\ad^2_A(S)S^{-1/2},\\
    T_3&=-\pi^{-1}\int_0^\infty s^{-1/2}
    (S+s)^{-1}\ad^2_A(S)(S+s)^{-1}\d
    sS^{1/2},\\
    T_4 &=\pi^{-1}\int_0^\infty s^{-1/2}
    (S+s)^{-1}\ad_A(S)(S+s)^{-1}\ad_A(S)(S+s)^{-1}\d sS^{1/2},\\
    T_5 &=-\pi^{-1}\int_0^\infty s^{-1/2}
    (S+s)^{-1}\ad_A(S)(S+s)^{-1}\ad_A(S)(S+s)^{-1}s\d sS^{-1/2}.
  \end{align*}
  The boundedness of the term $T_1$ follows from the previous
  proof. Clearly the term $T_2$ is bounded.  We can show boundedness
  of $T_3$ as we proceeded for \eqref{eq:Afcom} (note that now $\tilde
  f$ plays the role of the previous $f$).  For $T_4$ and $T_5$ we
  rewrite
  \begin{align*}
    T_4+T_5&=\widetilde T_4+\widetilde T_5;\\
    \widetilde T_4&=\pi^{-1}\int_0^\infty s^{-1/2}
    (S+s)^{-1}\ad_A(S)(S+s)^{-1}\d s\,S^{1/2}\parb{S^{-1/2}\ad_A(S)S^{-1/2}},\\
    \widetilde T_5 &=-2\pi^{-1}\int_0^\infty s^{-1/2}
    (S+s)^{-1}\ad_A(S)(S+s)^{-1}\ad_A(S)(S+s)^{-1}s\d sS^{-1/2}.
  \end{align*} The boundedness of the term $\widetilde T_4$ follows
  from the previous proof. Whence it only remains to show boundedness
  of $\widetilde T_5$.  We proceed in a similar fashion as before
  substituting for the first factor of $\ad_A(S)$ from the left
  $\ad_A(S)=-\i4(S-f^2)$ and then move to the left. The commutator is
  treated as in \eqref{eq:estcoms} using now also the form-boundedness
  of the second factor of $\ad_A(S)$. So it remains to consider
  \begin{align*}
    (I-f^2S^{-1})\int_0^\infty s^{-1/2}
    S(S+s)^{-2}\ad_A(S)(S+s)^{-1}S^{-1/2}s\d s.
  \end{align*} We saw before that the first factor is bounded. For the
  integral we substitute again $\ad_A(S)=-\i4(S-f^2)$ and move to the
  left. Estimating as in \eqref{eq:estcoms} we conclude that the
  commutator is bounded. So we are left with
  \begin{align*}
    \int_0^\infty \cdots \d s=C_1(S-f^2)\int_0^\infty s^{-1/2}
    S(S+s)^{-3}S^{-1/2}s\d s=C_2(I-f^2S^{-1}),
  \end{align*} which is bounded. Whence \eqref{eq:Afcomb} follows.
\end{proof}

\subsubsection{Second order commutator}
In \eqref{eq:1_comm} we took the formal commutator as a definition of
$\i[H,A]$, however due to the property \eqref{eq:banu_stable} there is
the following alternative interpretation
\begin{align*}
  -\i [H,A]=\slim_{t\to 0}t^{-1}\parb{H\e^{-\i tA}-\e^{-\i
      tA}H}_{|Q(H)\to Q(H)^*},
\end{align*} cf. Appendix A and \cite{GGM}, which allows us to compute
\begin{equation}
  \label{eq:7}
  \frac{\d}{\d\epsilon} R_{z}(\epsilon) =-R_{z}(\epsilon)[H_\sigma,A]R_{z}(\epsilon)=
  R_{z}(\epsilon)A -AR_{z}(\epsilon)  +\epsilon R_{z}(\epsilon)\ad^2_A(H_\sigma)R_{z}(\epsilon),
\end{equation} where $\ad^2_A(H_\sigma)=[[H_\sigma,A],A]\in
\vB(Q(H),Q(H)^*)$. Note that in the terminology
of \cite{GGM},  $H\in C^2(A_{|Q(H)},A_{| Q(H)^*})$. The second identity of \eqref{eq:7} is  valid as a form on
the domain $\vD^*:=\vD\parb{A_{| Q(H)^*}}$ of the generator of the extended group
$\{\e^{-\i tA}\}_{|Q(H)^*}$, so that  indeed $A:\vD^*\to Q(H)^*$,
which combines with the mapping property 
$R_{z}(\epsilon):Q(H)^*\to Q(H)$. Below we use tacitly this
interpretation and the fact that $f_{2}\inp{A}^{-1}:\vH\to \vD^*$,
cf. \eqref{eq:Afcom}.

Using \eqref{eq:hardy} (with $\kappa=1/2$), \eqref{eq:1_comm} and
\eqref{eq:smallness1}-\eqref{eq:smallness2b} we compute
\begin{align}\label{eq:secCom}
  \ad^2_A(H_\sigma)=f_2B_0f_2+\sum^{\dim \bX}_{i,j=1}\parb{\tfrac
    r{1+r}}^{1/2}(\sigma s)^{-1}p_iB_{ij}p_j(\sigma s)^{-1}\parb{\tfrac
    r{1+r}}^{1/2},
\end{align} where $p_j$ denotes the components of $p_x$ and all $B$'s
are uniformly bounded.

Using \eqref{eq:quadratic}, \eqref{eq:7} and \eqref{eq:secCom} we
shall prove three bounds which are uniform in $ z$ and $\epsilon$ as
specified above and (for convenience) with $\Im z, \epsilon>0$  as
well as uniform in (large) $\sigma$ and $\nu$:
\begin{subequations}
  \begin{align}
    \label{eq:8}
    \|F_z(\epsilon)\|&\leq C \mfor
    F_z(\epsilon):=\inp{A}^{-1}f_{2}R_{z}(\epsilon)f_{2}\inp{A}^{-1},\\
    \label{eq:9}\|F^-_z(\epsilon)\|&\leq C \mfor
    F^-_z(\epsilon):=\e^{\epsilon A}F(A<
    0)f_{2}R_{z}(\epsilon)f_{2}\inp{A}^{-2},\\
    \label{eq:10}\|F^+_z(\epsilon)\|&\leq C \mfor
    F^+_z(\epsilon):=\inp{A}^{-2}f_{2}R_{z}(\epsilon)f_{2}F(A\geq
    0)\e^{-\epsilon A}.
  \end{align}
\end{subequations}

\noindent{\bf Re \eqref{eq:8}.} Due to \eqref{eq:quadratic} for
$T=f_2\inp{A}^{-1}$ (note that we proved \eqref{eq:comT})
\begin{equation}
  \label{eq:11n}
  \|F_z(\epsilon)\|\leq C\epsilon^{-1}\mfor 0<\epsilon\leq 1.
\end{equation}   Obviously \eqref{eq:Afcom}  yields  the bounds
\begin{equation}
  \label{eq:12n}
  \|f_{2}^{-1}Af_{2}\inp{A}^{-1}\|\leq C\mand \|\inp{A}^{-1}f_{2}Af_{2}^{-1}\|\leq C.
\end{equation} Exploiting  \eqref{eq:quadratic}, \eqref{eq:7}, \eqref{eq:secCom}  and
\eqref{eq:12n} we can show that 
\begin{equation}
  \label{eq:13n}
  \big \|\frac{\d}{\d\epsilon} F_{z}(\epsilon) \big \|\leq C\parb{
    \epsilon^{-1/2}\|F_z(\epsilon)\|^{1/2}+\|F_z(\epsilon)\|+\widetilde
  C}.
\end{equation} Here we can argue as follows for the contribution to \eqref{eq:7}
from the second term in \eqref{eq:secCom}.
For $\psi\in \vH $  we estimate using \eqref{eq:hardy},
\eqref{eq:smallness3}  and \eqref{eq:imag2com}
\begin{subequations}
\begin{align}\label{eq:dobbelt}
  \begin{split}
    \sum_j&\|p_j(\sigma s)^{-1}\parb{\tfrac r{1+r}}^{1/2}\psi\|^2\leq
    C_1\|f_2\psi\|^2+2\Re \inp{H_\sigma-z}_{
      r^{1/2}(1+r)^{-1/2}\psi}\\
    &\leq C_2\|f_2\psi\|^2+2\Re \inp{\tfrac r{1+r}(H_\sigma-
      \i\epsilon \i[H_\sigma, A] -z }_{\psi}.
  \end{split}
\end{align} We use \eqref{eq:dobbelt} to
$\psi=R_{z}(\epsilon)f_{2}\inp{A}^{-1}\tilde \psi$, $\tilde \psi\in \vH $,   and  then \eqref{eq:comT}
and \eqref{eq:quadratic} with $T=f_2\inp{A}^{-1}$. Similarly we apply
\begin{align}\label{eq:dobbelt2}
  \sum_i\|p_i(\sigma s)^{-1}\parb{\tfrac r{1+r}}^{1/2}\psi\|^2\leq C_2\|f_2\psi\|^2+2\Re \inp{\tfrac r{1+r}(H_\sigma+
      \i\epsilon \i[H_\sigma, A] -\bar z }_{\psi}
\end{align} 
  \end{subequations}
to  
$\psi=R_{\bar z}(-\epsilon)f_{2}\inp{A}^{-1}\tilde \psi$. We conclude  \eqref{eq:13n}.

Clearly \eqref{eq:8} follows from \eqref{eq:11n} and \eqref{eq:13n} by
two integrations.

\noindent{\bf Re \eqref{eq:9}.} Due to \eqref{eq:quadratic} and
\eqref{eq:8}
\begin{equation}
  \label{eq:11-}
  \|F^-_z(\epsilon)\|\leq C\epsilon^{-1/2}.
\end{equation} Using \eqref{eq:7} we compute 
\begin{align}
  \label{eq:14}
  \frac{\d}{\d\epsilon} F^-_{z}(\epsilon)&=T_1+T_2+T_3;\\
  T_1&= \e^{\epsilon A}F(A<
  0)[A,f_{2}]R_{z}(\epsilon)f_{2}\inp{A}^{-2},\nonumber\\
  T_2&= \e^{\epsilon A}F(A<
  0)f_{2}R_{z}(\epsilon)Af_{2}\inp{A}^{-2},\nonumber\\
  T_3&= \epsilon\e^{\epsilon A}F(A<
  0)f_{2}R_{z}(\epsilon)\ad^2_A(H_\sigma)R_{z}(\epsilon)f_{2}\inp{A}^{-2}.\nonumber
\end{align} Using again \eqref{eq:quadratic} and \eqref{eq:8} we can
estimate
\begin{equation}
  \label{eq:11-b}
  \|T_j\|\leq C\epsilon^{-1/2}\mfor 0<\epsilon\leq 1\mand j=1,2,3.
\end{equation}  Notice that for all of the terms $T_1$--$T_3$ we  apply \eqref{eq:quadratic}
with $T=f_{2}\inp{A}^{-1}$, Lemma \ref{lem:first-order-comm}   and
in addition  for $T_3$  we apply \eqref{eq:quadratic}
with $T=f_{2}$ and \eqref{eq:dobbelt}--\eqref{eq:dobbelt2}.  Clearly \eqref{eq:9} follows from
\eqref{eq:11-}--\eqref{eq:11-b} by one integration.

\noindent{\bf Re \eqref{eq:10}.}  We mimic the proof of \eqref{eq:9}.

Next we note that the above arguments apply to $A\to A-n$ for any
$n\in \Z$ yielding bounds being independent of $n$. Taking
$\epsilon\to 0$ we thus obtain the following bounds for the accretive
operator $T(z)=-\i f_{2}R(z,\sigma)f_{2}$, all being uniform in $n$
and in 
large $ \sigma$ and $\nu$,
\begin{subequations}
  \begin{align*}
    \|\inp{A-n}^{-1}T(z)\inp{A-n}^{-1}\|&\leq \tilde C,\\
    \|F(A<
    n)T(z)\inp{A-n}^{-2}\|&\leq \tilde C,\\
    \|\inp{A-n}^{-2}T(z)F(A\geq n)\|&\leq \tilde C.
  \end{align*}
\end{subequations}

Due to these bounds and Lemmas
\ref{lemma:resolvent-boundsooA}--\ref{lemma:resolvent-boundsooAb} we
conclude \eqref{eq:limitbound3a} with $C=\nobreak16 \tilde C$ provided
$\Im
z>0$ (and hence also if  $\Im
z<0$).

\subsection{Besov
space bound of  resolvent, Proposition \ref{prop:setting-problem} }\label{subsec: Concrete Besov
spaces,Proposition}
We introduce operators
\begin{align*}
  S_\sigma=f_2^{-1}f_1\mand T_\sigma=M(t_\sigma);\;t_\sigma(r)=\sigma
  \parb{\tfrac r{1+r}}^{1/2}(1+r)f^2_1.
\end{align*} Here and henceforth $M(\cdot)$ refers to the operator of
multiplication by the function in the argument. We shall prove the
following lemmas
\begin{lemma}\label{lem:besov-space-boundi} There exists $C>0$
  independent of $\sigma>1$ such that
  \begin{equation}
    \label{eq:S}
    \| S_\sigma v\|_{B(A)}\leq C\| v\|_{B(T_\sigma)}.
  \end{equation}
\end{lemma}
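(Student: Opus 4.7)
The plan is to invoke the abstract interpolation Lemma~\ref{lemma:resolvent-bounds} with source conjugate operator $T_\sigma$, target conjugate operator $A$, both acting on $\vH$, and $T=S_\sigma$, taking exponent $s=1$. This reduces the lemma to two uniform-in-$\sigma$ estimates: the plain boundedness $\|S_\sigma\|_{\vB(\vH)} \le C$, and the weighted boundedness $\|\langle A\rangle S_\sigma\langle T_\sigma\rangle^{-1}\|_{\vB(\vH)} \le C$. The first is immediate, since $f_1^2 \le f_2^2$ as forms gives $S_\sigma S_\sigma^* = f_2^{-1}f_1^2 f_2^{-1} \le I$.

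The substantive part is the weighted estimate. At the level of symbols the reasoning is clean: $S_\sigma(r,k) = f_1(r)/\sqrt{f_1(r)^2+\sigma^{-2}k^2}$ behaves as a low-pass cutoff $|k|\lesssim \sigma f_1(r)$, while the classical symbol of $A$ is $2rk$; so on the phase-space support of $S_\sigma$ one expects $|A|\lesssim 2r\sigma f_1$. The pointwise inequality
\begin{equation*}
  2r\sigma f_1(r) \;\le\; 2\,t_\sigma(r),\qquad r\ge 0,
\end{equation*}
valid uniformly in $\sigma$, follows directly from the definition $t_\sigma(r) = \sigma\sqrt{r(1+r)}f_1^2$ together with $f_1^2 \ge r/(1+r)$; morally this is what produces an $O(1)$ bound for $A S_\sigma\langle T_\sigma\rangle^{-1}$.

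To turn the heuristic into a proof, I would split $A = 2rp_r + B$ with $B$ a bounded multiplication operator coming from $[p_r,r] = -\i|\d r|^2$ and the bounds \eqref{eq:r_bounds}--\eqref{eq:small_termb}. The $B$-contribution is trivial. For the main term $2rp_r S_\sigma \langle T_\sigma\rangle^{-1}$ I would exploit two essential operator facts: $\||p_r|f_2^{-1}\|\le\sigma$, which follows from the spectral theorem applied to $f_2^2 = f_1^2+\sigma^{-2}p_r^2$, and $\|r\langle T_\sigma\rangle^{-1}\|_\infty \le C/\sigma$, verified by an elementary case analysis across the three regimes $r\gtrsim 1$, $\sigma^{-2/3}\lesssim r\lesssim 1$, $r\lesssim \sigma^{-2/3}$ of the explicit function $t_\sigma$. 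Composing these, after commuting $\langle T_\sigma\rangle^{-1}$ past $p_r$ and $f_2^{-1}$, gives the desired uniform bound; the commutators generate lower-order remainders of the form $\partial_r t_\sigma\cdot(\text{bounded})$ that are controlled by the same case analysis.

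The main obstacle is uniformity in $\sigma$ across the three regimes of $r$ described above, where $f_1$ and $t_\sigma$ exhibit genuinely different scalings and where the symbolic calculus for $f_2^{-1}$ (a square root of a sum of position- and momentum-type operators) requires some care, particularly through the transition at $r \sim \sigma^{-2/3}$. The unifying inequality $r\sigma f_1 \le t_\sigma$ is what glues the regimes together into one clean uniform estimate.
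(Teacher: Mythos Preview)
Your overall strategy---reduce via Lemma~\ref{lemma:resolvent-bounds} with $s=1$ to a uniform bound on $AS_\sigma\langle T_\sigma\rangle^{-1}$, then exploit the pointwise inequality $\sigma r f_1\le t_\sigma$---is exactly the paper's. But one of your ``two essential operator facts'' is false, and this is not a cosmetic slip: at $r=\sigma^{-2/3}$ one computes $t_\sigma\approx 2$ while $r=\sigma^{-2/3}$, so
\[
\|r\langle T_\sigma\rangle^{-1}\|_\infty \;\gtrsim\; \sigma^{-2/3},
\]
not $\sigma^{-1}$. Paired with $\|p_rf_2^{-1}\|\le\sigma$ this gives only $O(\sigma^{1/3})$, which blows up. The commutator route you sketch (moving $\langle T_\sigma\rangle^{-1}$ past $p_r$ and $f_2^{-1}$) does not rescue this: $\partial_r t_\sigma$ is of order $\sigma^{2/3}$ near the same transition point, so those remainders are not lower order.

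The paper's fix is a reordering that keeps the $f_1$ factor adjacent to $r$ throughout, so that your (correct) inequality $\sigma r f_1\le t_\sigma$ can be applied directly rather than the (false) $\sigma r\le t_\sigma$. Concretely: set $\psi=f_1v$, so $AS_\sigma v = Af_2^{-1}\psi$; commute $A$ through $f_2^{-1}$ using the already-established bound \eqref{eq:Afcom}, picking up an error $\le C\|f_2^{-1}\psi\|\le C\|v\|$; then write $A=2p_rr+O(1)$ with $r$ on the \emph{right}, so that
\[
\|f_2^{-1}2p_rr\psi\|\le 2\sigma\|r\psi\|=2\sigma\|rf_1v\|\le 2\|T_\sigma v\|.
\]
No commutation with $\langle T_\sigma\rangle^{-1}$ is ever needed. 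The single nontrivial commutator $[A,f_2^{-1}]$ is precisely what Lemma~\ref{lem:first-order-comm} was proved for.
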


\begin{lemma}\label{lem:besov-space-boundj}
  There exists $C>0$ independent of $\sigma>1$ such that
  \begin{equation}
    \label{eq:f}
    \| f_1^{-1}u\|_{B(T_\sigma)}\leq C\sigma^{1/2} \| u\|_{B(r)}.
  \end{equation}
\end{lemma}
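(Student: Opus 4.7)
My plan is to exploit that $r$, $T_\sigma=M(t_\sigma(r))$ and $f_1(r)$ all act as multiplication in the $x$-variable and as identity on the second tensor factor $L^2(M,\mathrm{d}y)$ of $\vH$. Consequently the spectral projections $F_j(T_\sigma)$ and $F_k(r)$ are multiplications by the characteristic functions of
\[
I_j=\{r\geq 0\,:\,R_{j-1}\leq t_\sigma(r)<R_j\},\qquad J_k=[R_{k-1},R_k).
\]
Writing $u=\sum_{k\geq 1}F_k(r)u$ and invoking $\|F_j(T_\sigma)f_1^{-1}F_k(r)u\|_{\vH}\leq\bigl(\sup_{I_j\cap J_k}f_1^{-1}\bigr)\|F_k(r)u\|_{\vH}$, the bound \eqref{eq:f} will follow from the scalar inequality
\[
\sum_{j:\,I_j\cap J_k\neq\emptyset}R_j^{1/2}\sup_{r\in I_j\cap J_k}f_1^{-1}(r)\leq C\sigma^{1/2}R_k^{1/2}\qquad(k\geq 1).
\]

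First I would verify that $t_\sigma(r)=\sigma(r/(1+r))^{1/2}(1+r)(\sigma^{-2/3}+r/(1+r))$ is a smooth bijection of $[0,\infty)$ (being a product of non-decreasing non-negative factors, at least one strictly increasing), so that $I_j=[r_j,r_{j+1})$ with $r_j=t_\sigma^{-1}(R_{j-1})$. Three asymptotic regimes then emerge by direct inspection: for $r\leq\sigma^{-2/3}$ one has $t_\sigma(r)\asymp\sigma^{1/3}r^{1/2}$ and $f_1^{-1}\asymp\sigma^{1/3}$; for $\sigma^{-2/3}\leq r\leq 1$ one has $t_\sigma(r)\asymp\sigma r^{3/2}$ and $f_1^{-1}\asymp r^{-1/2}$; for $r\geq 1$ one has $t_\sigma(r)\asymp\sigma r$ and $f_1^{-1}\asymp 1$. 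In the case $k\geq 2$ the shell $J_k\subset[1,\infty)$ lies entirely in the last regime and is met by only $O(1)$ intervals $I_j$, each with $R_j\asymp\sigma R_k$ and $\sup f_1^{-1}\asymp 1$, so the bound holds at once.

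The substantive case is $k=1$. Those $j$ with $I_j\cap J_1\neq\emptyset$ satisfy $R_{j-1}\leq t_\sigma(1)\asymp\sigma$; for $j\geq 2$ the left endpoint $r_j$ lies in the second regime with $r_j\asymp(R_{j-1}/\sigma)^{2/3}\geq c\sigma^{-2/3}$, and since $f_1^{-1}$ is decreasing in $r$ one has $\sup_{I_j\cap J_1}f_1^{-1}\asymp r_j^{-1/2}\lesssim(\sigma/R_j)^{1/3}$. Together with the trivial contribution $R_1^{1/2}\sigma^{1/3}=\sigma^{1/3}$ from $j=1$, this gives
\[
\sum_{j:\,I_j\cap J_1\neq\emptyset}R_j^{1/2}\sup_{I_j\cap J_1}f_1^{-1}\lesssim\sigma^{1/3}+\sigma^{1/3}\sum_{2\leq j\lesssim\log_2\sigma}R_j^{1/6}\lesssim\sigma^{1/3}\cdot\sigma^{1/6}=\sigma^{1/2},
\]
since the geometric series of ratio $2^{1/6}$ is controlled by its largest term. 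As $R_1^{1/2}=1$, the required bound follows.

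The main obstacle will be the sharp matching of exponents in the $k=1$ case: the factor $\sigma^{1/3}$ from $f_1^{-1}$ near $r=0$ and the $\sigma^{1/6}$ growth of the partial geometric sum combine to give exactly $\sigma^{1/2}$. This balance is forced by the specific regularizing exponent $\sigma^{-2/3}$ in $f_1^2$; there is no cancellation to exploit, only careful dyadic bookkeeping.
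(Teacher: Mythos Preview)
Your proof is correct and follows essentially the same approach as the paper. Both arguments identify the same three asymptotic regimes for $t_\sigma$ and $f_1^{-1}$ (small $r\lesssim\sigma^{-2/3}$, intermediate $\sigma^{-2/3}\lesssim r\lesssim 1$, and large $r\gtrsim 1$) and use the same geometric series of ratio $2^{1/6}$ in the intermediate region. The paper decomposes $u$ via the three characteristic functions $F_1=F(r<\sigma^{-2/3})$, $F_2=F(\sigma^{-2/3}\leq r<2)$, $F_3=F(r\geq 2)$ and treats each region separately (invoking Lemma~\ref{lemma:resolvent-boundsook} for $F_3$), whereas you decompose dyadically in $r$ and handle the overlap with the dyadic $T_\sigma$-shells directly; your $k=1$ case absorbs the paper's $F_1$ and $F_2$, and your $k\geq 2$ case reproduces the content of Lemma~\ref{lemma:resolvent-boundsook} in this context. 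One harmless slip: $t_\sigma$ is not smooth at $r=0$ (there is an $r^{1/2}$), but continuity and strict monotonicity are all you use.
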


\begin{proof}[Proof of Proposition \ref{prop:setting-problem}]
  We combine Lemmas
  \ref{lem:besov-space-boundi}--\ref{lem:besov-space-boundj} to obtain
  that
  \begin{align*}
    f_2^{-1}=S_\sigma f_1^{-1}\in \vB(B(r),B(A))
  \end{align*} with a bounding constant of the form
  $C\sigma^{1/2}$. Whence, due to Lemma \ref{lemma:resolvent-boundsA},
  \begin{align*}
    R(z,\sigma)=f_2^{-1}\parb{f_2R(z,\sigma)f_2}f_2^{-1}\in
    \vB(B(r),B(r)^*)
  \end{align*} with a bounding constant of the form $C\sigma$.
\end{proof}

\begin{proof}[Proof of Lemma
\ref{lem:besov-space-boundi}]
Since $\|S_\sigma\|\leq 1$ it suffices, due to Lemma
\ref{lemma:resolvent-bounds}, to show the bound
\begin{align}
  \label{eq:bndS}
  \|AS_\sigma v\|\leq C \parb{\|T_\sigma v\|+\|v\|}.
\end{align}

Using \eqref{eq:Afcom} we estimate for all $\psi\in \vD(r)=\vD(M(r))$
\begin{align*}
  \|Af_2^{-1}\psi\|^2&\leq 2\|f_2^{-1}A\psi\|^2+C_1\|f_2^{-1}\psi\|^2\\
  &\leq 4\|f_2^{-1}2p_r r\psi\|^2 +C_2\|f_2^{-1}\psi\|^2\\
  &\leq 16\sigma^2\| r\psi\|^2 +C_2\|S_\sigma f_1^{-1}\psi\|^2\\
  &\leq 16\sigma^2\| \parb{\tfrac r{1+r}}^{1/2}(1+r)f_1\psi\|^2
  +C_2\|f_1^{-1}\psi\|^2.
\end{align*} We apply the estimate to $\psi=f_1v$ yielding
\eqref{eq:bndS} with $C=\max(4, C_2^{1/2})$.
\end{proof}

\begin{proof}[Proof of Lemma \ref{lem:besov-space-boundj}] Introducing
  $\tilde f_1=\sigma^{1/2}f_1$ we need to bound for $j=1,2,3$
  \begin{align}
    \label{eq:fb}
    \| \tilde f_1^{-1}F_ju\|_{B(T_\sigma)}&\leq C \| u\|_{B(r)};\\
    F_1&=F(r<\sigma^{-2/3}),\nonumber\\
    F_2&=F(\sigma^{-2/3}\leq r< 2),\nonumber\\
    F_2&=F(r\geq 2).\nonumber
  \end{align}
  \begin{subequations}
 
    Using that $t_\sigma$ is a bounded function on the support of
    $F_1$ and \eqref{eq:22} we estimate
    \begin{align}
      \label{eq:F1}
      \| \tilde f_1^{-1}F_1u\|_{B(T_\sigma)}\leq
      C_1\sigma^{-1/6}\|u\|\leq C_1\| u\|_{B(r)},
    \end{align} which agrees with \eqref{eq:fb}.

    Let $g_\sigma(r)=\sigma r^{3/2}$ and $G_\sigma=M(g_\sigma)$. Using
    the two-sided estimates $t_\sigma(r)\leq Cg_\sigma(r)$ and
    $g_\sigma (r)\leq Ct_\sigma(r)$, which are valid on the support of
    $F_2$, we can estimate
    \begin{align*}
      \| \tilde f_1^{-1}F_2u\|_{B(T_\sigma)}&\leq C\| \parb{{\sigma
          r}}^{-1/2}F_2u\|_{B(G_\sigma)}\\
      &= C\sum_{2\leq j\leq J}R_j^{1/2}\|F(R_{j-1}\leq
      g_\sigma(r)<R_j)(\sigma
      r^{3/2})^{-1/2}r^{1/4}F_2u\|\\
      &\leq 2^{1/2}C\sum_{2\leq j\leq J}\|F(R_{j-1}\leq
      g_\sigma(r)<R_j)r^{1/4}F_2u\|,
    \end{align*} where $J=J_\sigma\in \N$ is taken smallest such that
    $R_J> 2^{3/2}\sigma$.  By estimating for each term
    \begin{align*}
      r^{1/4}\leq \parb{R_j/\sigma}^{1/6}\leq 2^{(j-J+3)/6},
    \end{align*} we thus obtain
    \begin{align}
      \label{eq:F2}
      \| \tilde f_1^{-1}F_2u\|_{B(T_\sigma)}\leq 2^{1/2}C\sum_{2\leq
        j\leq J}2^{(j-J+3)/6}\|u\|\leq C_1\|u\|_{B(r)},
    \end{align}
    which also agrees with \eqref{eq:fb}.

    Finally using the two-sided estimates $t_\sigma(r)\leq
    C\sigma(1+r)$ and $\sigma(1+r)\leq Ct_\sigma(r)$, which are valid
    on the support of $F_3$, and Lemma \ref{lemma:resolvent-boundsook}
    we can estimate
    \begin{align}
      \begin{split}
        \label{eq:F3}
        \| \tilde f_1^{-1}F_3u\|_{B(T_\sigma)}&\leq
        C_1\sigma^{-1/2}\|u\|_{B(\sigma(1+r))}\\&\leq
        8C_1\|u\|_{B((1+r))}\leq C_2\|u\|_{B(r)},
      \end{split}
    \end{align} which also agrees with \eqref{eq:fb}.

    Having proved \eqref{eq:F1}--\eqref{eq:F3} we conclude
    \eqref{eq:fb}.
  \end{subequations}
\end{proof}

\subsection{Case $\Omega=\bX$}\label{sec:CaseOmegarmc}

We outline a proof of  the analogue of Proposition \ref{prop:setting-problem} for the
case $\Omega=\bX$. This is conceptionally simpler than the previous
case,  and it suffices to mimic parts of the previous proof. We can use the standard conjugate operator
\begin{align}\label{eq:5Abb}
  A=x\cdot p+p\cdot x,
\end{align} rather than the one defined by \eqref{eq:5A}
(alternatively $A$ is given by taking  $r=|x|$ in \eqref{eq:5A}).
We impose
\begin{equation}
      \label{eq:2k2bbB}
      V(x), x\cdot \nabla V(x),\,(x\cdot \nabla)^2 V(x)\in \vC\big (H^1(\bX),H^1(\bX)^*\big ).
    \end{equation} 
We consider a Hilbert space $\vH=L^2(\bX,\d x)\otimes L^2(M,\d
y)$ where interpretation of $x,y$ and $M$ is the same as in Subsection
\ref{subsec: Setting of problem}. Similarly  introducing
$H=\widetilde H^{b}+\widetilde B$ as before the form domains are
\begin{align*}
  Q(\widetilde H^{b})=Q(H)=L^2(M,H^1(\bX);\d y)\subset\vH.
\end{align*} Again we have the property \eqref{eq:banu_stable}. We
define 
\begin{align*}
  f^2=1+\sigma^{-2}p^2,f\geq 0.
\end{align*} We have results similar to Lemma
\ref{lemma:resolvent-boundsA} and Proposition
\ref{prop:setting-problem}.

\begin{lemma}
  \label{lemma:resolvent-boundsA2} With $A$ given by \eqref{eq:5Abb} we
  have uniformly in all large $\sigma,\nu>1$ and all $\Re z\approx 1$
\begin{equation}
    \label{eq:limitbound3a2}
    \left\|
      f R(z,\sigma)
      f \right\|_{\vB (B(A),B(A)^*)} \leq C.
  \end{equation}
\end{lemma}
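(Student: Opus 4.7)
My plan is to follow the same line of argument as in the proof of Lemma \ref{lemma:resolvent-boundsA}, but exploiting several substantial simplifications available in the obstacle-free setting. First I would compute the formal commutator
\[
  \i[H,A] = 2s^{-1}p^2s^{-1} - 2s^{-1}(x\cdot\nabla V)s^{-1} + 2\Re\parb{s^{-1}(x\cdot\nabla_x s)\widetilde H^b} - 2x\cdot\nabla_x\widetilde B
\]
and combine the bound $\widetilde B/\sigma^2 \leq 7/8$ from the analogue of \eqref{eq:smallness3}, the form-compactness of $V,\,x\cdot\nabla V,\,\widetilde B$ and $x\cdot\nabla \widetilde B$ given by \eqref{eq:2k2bbB}, the smallness in $\nu$ of the $s$-rearrangement terms (analogues of \eqref{eq:smallness1}--\eqref{eq:smallness2b}), and the quantitative gap $\Re z - 7/8 \geq 1/72$ valid for $|1-\Re z|\leq 1/9$, to produce a Mourre-type estimate of the form
\begin{equation*}
  \i[H_\sigma,A] \geq \delta' f^2 + \beta\Re(H_\sigma - z) - O(\nu^{-1})
\end{equation*}
for some fixed $\delta',\beta > 0$ and all large $\sigma,\nu$. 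Concretely, $\delta'$ is extracted as a convex combination of the direct lower bound $\i[H_\sigma,A] \geq (4-\epsilon)\sigma^{-2}p^2 - \mathrm{small}$ and the alternative $\i[H_\sigma,A] + 4\Re(z-H_\sigma) \geq 4\Re z - 7/2 - \mathrm{small} \geq 1/18 - \mathrm{small}$, the latter obtained by rewriting $\i[H_\sigma,A] \geq 4H_\sigma - 7/2 - \mathrm{compact}/\sigma^2$ and absorbing $\widetilde B/\sigma^2$ into $4\Re z$.

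Next I would introduce the regularized resolvent $R_z(\epsilon) = (H_\sigma - \i\epsilon\,\i[H_\sigma,A] - z)^{-1}$. The crucial simplification over the obstacle case is that $[H_\sigma,A]$ is formally anti-self-adjoint on $Q(H)$, so $\Re(-\i\epsilon\,\i[H_\sigma,A]) = 0$ and the Mourre bound carries over verbatim to
\[
  \i[H_\sigma,A] \geq \delta' f^2 + \beta\Re\parb{H_\sigma - \i\epsilon\,\i[H_\sigma,A] - z} - O(\nu^{-1}),
\]
bypassing entirely the delicate calculation \eqref{eq:imag2com}. From this the second resolvent equation yields the quadratic estimate $\|fR_z(\epsilon)T\|^2 \leq C|\epsilon|^{-1}\|T^*R_z(\epsilon)T\| + C$ whenever $\|f^{-1}T\|$ is uniformly bounded, which is trivial for $T = f$ and $T = f\inp{A}^{-1}$. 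The commutator bounds $\|\ad_A(f)f^{-1}\|, \|\ad^2_A(f)f^{-1}\| \leq C$ follow from the explicit identities $\i[f^2,A] = 2(f^2-1)$ (a direct consequence of $\i[p^2,A] = 2p^2$) and $\ad^2_A(f^2) = -4(f^2-1)$ together with the integral representation argument of Lemma \ref{lem:first-order-comm}, now significantly simplified since the analogue of the auxiliary function $f$ in \eqref{eq:estcoms} is simply $1$. The second-order commutator takes the schematic form $\ad^2_A(H_\sigma) = fB_0f + \sigma^{-2}\sum p_iB_{ij}p_j$ with uniformly bounded $B$-coefficients, by \eqref{eq:2k2bbB} applied to $(x\cdot\nabla)^2V$ and $(x\cdot\nabla)^2\widetilde B$.

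Finally I would derive, by the same differentiation-integration scheme leading to \eqref{eq:8}--\eqref{eq:10}, the three uniform bounds on $\|\inp{A}^{-1}fR_z(\epsilon)f\inp{A}^{-1}\|$, $\|\e^{\epsilon A}F(A<0)fR_z(\epsilon)f\inp{A}^{-2}\|$ and $\|\inp{A}^{-2}fR_z(\epsilon)fF(A\geq 0)\e^{-\epsilon A}\|$. The invariance of all estimates under the substitution $A \to A - n$ together with Lemmas \ref{lemma:resolvent-boundsooA} and \ref{lemma:resolvent-boundsooAb} then deliver \eqref{eq:limitbound3a2} in the limit $\epsilon\to 0$. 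I expect the main technical obstacle to lie in the extraction of the strictly positive constant $\delta'$ in the Mourre estimate of the first step: without the $r/(1+r)$ localization available in the obstacle case, the direct positive commutator contributes only $\sigma^{-2}p^2 = f^2 - 1$ on the positive side and never the constant $1$ in $f^2 = 1 + \sigma^{-2}p^2$, so one must carefully balance the two variants of the lower bound for $\i[H_\sigma,A]$ above using the quantitative spectral gap $\Re z - \widetilde B/\sigma^2 \geq 1/72$.
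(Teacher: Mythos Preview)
Your proposal is correct and follows the same route the paper indicates: mimic the proof of Lemma~\ref{lemma:resolvent-boundsA} with $f$ in place of $f_2$, exploiting that the $\tfrac{r}{1+r}$ localization disappears. Your observation that the passage from the analogue of \eqref{eq:posCom1} to that of \eqref{eq:posCom2} becomes trivial (the coefficient of $\Re(H_\sigma-z)$ is a scalar, so $\Re\bigl(\beta\cdot\i\epsilon\,\i[H_\sigma,A]\bigr)=0$) is exactly the simplification the paper alludes to when it says ``there are no factors of $\tfrac r{1+r}$ to consider,'' and your convex-combination extraction of $\delta'$ is a clean way to organize what the paper leaves implicit.

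Two harmless slips: with $A=x\cdot p+p\cdot x$ one has $\i[p^2,A]=4p^2$, hence $\i[f^2,A]=4(f^2-1)$ and $\ad_A^2(f^2)=-16(f^2-1)$ (not $2(f^2-1)$ and $-4(f^2-1)$); this changes nothing structurally. Also, the smallness bounds on $\widetilde B$ and its derivatives are not covered by \eqref{eq:2k2bbB} but by the analogues of \eqref{eq:smallness2}--\eqref{eq:smallness3}, which you invoke separately anyway.
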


\begin{proposition}\label{prop:setting-problem2}
  We have uniformly in all large
  $\sigma,\nu>1$ and all $\Re z\approx 1$
  \begin{equation}
    \label{eq:limitbound3a_main2}
    \left\|
      R(z,\sigma)
    \right\|_{\vB (B(|x|),B(|x|)^*)} \leq C\sigma.
  \end{equation}
\end{proposition}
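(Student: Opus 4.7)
The approach is to mirror the proof of Proposition~\ref{prop:setting-problem}, with the simplifying feature that in the empty-obstacle setting the radial distance $r(x)=|x|$ gives globally $\omega=\nabla r^2/2=x$ and $\nabla^2 r^2=2I$, so that the ``angular'' term $2p_i r(\nabla^2 r)^{ij}p_j$ combines with $2p_r^2$ to give simply $2p^2$, no convexity assumption intervenes, and the auxiliary function $f_1$ of the previous subsection collapses to the constant~$1$ (so that $f$ plays the role of both $f_1$ and $f_2$).

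The first step is to establish Lemma~\ref{lemma:resolvent-boundsA2} by re-running Subsection~\ref{subsec: Concrete Besov spaces}. The commutator \eqref{eq:1_comm} reduces to
\[
\i[H,A]=s^{-1}\parb{2p^2+W}s^{-1}+2\Re\parb{s^{-1}(\nabla r^2\cdot\nabla_x s)\widetilde H^b}-\nabla r^2\cdot\nabla_x\widetilde B,
\]
where $W$ collects bounded terms together with $-2x\cdot\nabla V$, which is form-relatively compact on $H^1(\bX)$ by \eqref{eq:2k2bbB}, and the last two terms are $O(\nu^{-1})$ by \eqref{eq:smallness1}--\eqref{eq:smallness3}. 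Since $\sigma^{-2}p^2=f^2-1$, for $\Re z\approx 1$ I obtain the analogue of \eqref{eq:posCom2},
\[
\i[H_\sigma,A]\geq cf^2+\delta\Re\parb{H_\sigma-\i\epsilon\i[H_\sigma,A]-z},
\]
after absorbing the compact correction into the $(H_\sigma-z)$-term on the spectral regime of interest. The remainder of Subsection~\ref{subsec: Concrete Besov spaces} then carries over verbatim with $f_2\to f$, $f_1\to 1$, $r/(1+r)\to 1$: the quadratic bound \eqref{eq:quadratic} follows for $T\in\{f,f\langle A\rangle^{-1}\}$; the analogue of Lemma~\ref{lem:first-order-comm} is elementary since $f=(1+\sigma^{-2}p^2)^{1/2}$ is a function of $p$ alone so that $\ad_A(f)$ is computed explicitly; and integrating the differential identity \eqref{eq:7} yields bounds on $F_z(\epsilon)$ and $F_z^{\pm}(\epsilon)$ analogous to \eqref{eq:8}--\eqref{eq:10}. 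Lemmas~\ref{lemma:resolvent-boundsooA}--\ref{lemma:resolvent-boundsooAb} then give \eqref{eq:limitbound3a2}.

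The second step is the passage from $B(A)$-bounds to $B(|x|)$-bounds, mimicking Subsection~\ref{subsec: Concrete Besov spaces,Proposition}. Setting $T_\sigma=M(\sigma(1+|x|))$ and using $A=2|x|p_r+$ lower order, boundedness of $[A,f^{-1}]$, and $\|f^{-1}p_r\|\leq\sigma$, I estimate
\[
\|Af^{-1}v\|^2\leq 16\sigma^2\||x|v\|^2+C\|v\|^2\leq C'\|T_\sigma v\|^2.
\]
Together with $\|f^{-1}\|\leq 1$, the interpolation Lemma~\ref{lemma:resolvent-bounds} gives $\|f^{-1}u\|_{B(A)}\leq C\|u\|_{B(T_\sigma)}$. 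Applying Lemma~\ref{lemma:resolvent-boundsook} with the self-adjoint operator $1+|x|$ and $c=\sigma$, and using the direct equivalence $\|\cdot\|_{B(1+|x|)}\leq C\|\cdot\|_{B(|x|)}$ which follows from \eqref{eq:1n} since $1+|x|$ and $|x|$ differ by a bounded quantity, I obtain
\[
\|u\|_{B(T_\sigma)}\leq 8\sigma^{1/2}\|u\|_{B(1+|x|)}\leq C\sigma^{1/2}\|u\|_{B(|x|)}.
\]
Hence $\|f^{-1}\|_{\vB(B(|x|),B(A))}\leq C\sigma^{1/2}$, and factoring $R(z,\sigma)=f^{-1}(fR(z,\sigma)f)f^{-1}$ together with Lemma~\ref{lemma:resolvent-boundsA2} yields \eqref{eq:limitbound3a_main2}.

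As in the obstacle case, the genuine content lies in Step~1---specifically in checking uniformly in $\sigma,\nu$ and $\Re z\approx 1$ that the relatively form compact correction $x\cdot\nabla V$ allowed by \eqref{eq:2k2bbB} can be absorbed into the $(H_\sigma-z)$-correction, and in verifying the $C^2(A;Q(H),Q(H)^*)$-regularity of $H$ needed to make the second identity in \eqref{eq:7} rigorous. Step~2 is essentially immediate here because $A$ generates dilations and $f$ depends only on $p$, so all the relevant commutators can be computed explicitly in terms of $f$ and $f^{-1}$.
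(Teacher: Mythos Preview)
Your proposal is correct and follows the same route as the paper's own (rather terse) treatment in Subsection~\ref{sec:CaseOmegarmc}: establish Lemma~\ref{lemma:resolvent-boundsA2} by rerunning the Mourre-type argument of Subsection~\ref{subsec: Concrete Besov spaces} with the simplifications $f_1\to 1$, $f_2\to f$, $r/(1+r)\to 1$, and then pass to $B(|x|)$ via the two reduction steps (your analogues of Lemmas~\ref{lem:besov-space-boundi2} and~\ref{lem:besov-space-boundj2}), finishing with the factorisation $R(z,\sigma)=f^{-1}\bigl(fR(z,\sigma)f\bigr)f^{-1}$.

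One small point of phrasing in Step~1: the form-compact correction $-2x\cdot\nabla V$ is not really ``absorbed into the $(H_\sigma-z)$-term''. Rather, form-compactness gives infinitesimal form-boundedness $|W|\le\epsilon p^2+C_\epsilon$, so that $\sigma^{-2}|W|\le\epsilon(f^2-1)+C_\epsilon\sigma^{-2}$ is absorbed directly into the positive $f^2$-term for large~$\sigma$. The role of the $(H_\sigma-z)$-correction is different: it supplies the constant needed to turn $4\sigma^{-2}p^2=4(f^2-1)$ into a genuine lower bound $\ge cf^2$ (using $0\le\sigma^{-2}\widetilde B\le 7/8$ and $\Re z\approx 1$). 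This does not affect the validity of your argument.
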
 
Given Lemma \ref{lemma:resolvent-boundsA2} we notice that  Proposition
\ref{prop:setting-problem2} is an easy consequence of the following
analogues of Lemmas \ref{lem:besov-space-boundi} and
\ref{lem:besov-space-boundj}. Define
\begin{align*}
     T_\sigma=M(t_\sigma);\;t_\sigma(x,y)=\sigma
  (1+|x|).
  \end{align*}

\begin{lemma}\label{lem:besov-space-boundi2} There exists $C>0$
  independent of $\sigma>1$ such that
  \begin{equation}
    \label{eq:S2}
    \| f^{-1}v\|_{B(A)}\leq C\| v\|_{B(T_\sigma)}.
  \end{equation} 
\end{lemma}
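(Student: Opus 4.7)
The plan is to mimic Lemma~\ref{lem:besov-space-boundi} and apply the interpolation Lemma~\ref{lemma:resolvent-bounds} to $T = f^{-1}$ with $A_1 = T_\sigma$ and $A_2 = A$. The $\vH$-to-$\vH$ bound $\|f^{-1}\|_{\vB(\vH)} \le 1$ is immediate from $f \ge 1$, so the entire task reduces to establishing the weighted bound $\|f^{-1}\|_{\vB(L^2_1(T_\sigma), L^2_1(A))} \le C$ uniformly in large $\sigma$. This in turn amounts to showing
\begin{equation*}
\|A f^{-1} v\| \le C\bigl(\|T_\sigma v\| + \|v\|\bigr)
\end{equation*}
on a dense subspace; Lemma~\ref{lemma:resolvent-bounds} with $s=1$ then yields \eqref{eq:S2}.

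To verify this bound I would split $A f^{-1} = f^{-1} A + [A, f^{-1}]$ and treat the two terms separately. Because $f = (1+\sigma^{-2}p^2)^{1/2}$ is a function of $p$ only and $A = 2x\cdot p - \i d$ with $d = \dim\bX$, the elementary identity $[x_j, g(p)] = \i(\partial_{p_j}g)(p)$ applied to $g(\xi) = (1+\sigma^{-2}|\xi|^2)^{1/2}$ gives $[A, f] = 2\i\sigma^{-2} f^{-1} p^2 = 2\i(f - f^{-1})$, whence
\begin{equation*}
[A, f^{-1}] = -f^{-1}[A, f]f^{-1} = -2\i(f^{-1} - f^{-3}),
\end{equation*}
which is bounded on $\vH$ uniformly in $\sigma$. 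For the piece $f^{-1} A v$ I would use $x_j p_j = p_j x_j + \i$ to move $p_j$ across $x_j$, exploit that $p_j$ commutes with $f^{-1}$, and invoke the spectral bound $\|p_j f^{-1}\| \le \sigma$ (which follows from $\sigma^{-2} p_j^2 \le f^2$). This gives $\|f^{-1} x_j p_j v\| \le \sigma\|x_j v\| + \|v\|$, and summation in $j$ together with $T_\sigma = M(\sigma(1+|x|))$ produces $\|f^{-1} A v\| \le C(\|T_\sigma v\| + \|v\|)$, as required.

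The reason this case is essentially effortless compared with Lemma~\ref{lem:besov-space-boundi} is that the conjugate operator $A$ is translation-invariant in $p$ and $f$ is a pure function of $p$, so the square-root commutator $[A, f^{-1}]$ that required the elaborate Helffer--Sjöstrand-type integral representation of Lemma~\ref{lem:first-order-comm} in the obstacle setting is here available in closed form. I foresee no genuine obstacle beyond this routine computation.
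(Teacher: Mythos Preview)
Your proposal is correct and follows exactly the route the paper indicates: mimic Lemma~\ref{lem:besov-space-boundi} by applying the interpolation Lemma~\ref{lemma:resolvent-bounds} and reducing to the weighted bound $\|Af^{-1}v\|\le C(\|T_\sigma v\|+\|v\|)$. Your explicit closed-form computation of $[A,f^{-1}]=-2\i(f^{-1}-f^{-3})$ is precisely the simplification over Lemma~\ref{lem:first-order-comm} that makes the $\Omega=\bX$ case trivial, and your handling of $f^{-1}Av$ via $\|p_jf^{-1}\|\le\sigma$ matches the paper's treatment of $f_2^{-1}A\psi$ via $\|\sigma^{-1}p_rf_2^{-1}\|\le 1$.
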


\begin{lemma}\label{lem:besov-space-boundj2}
  There exists $C>0$ independent of $\sigma>1$ such that
  \begin{equation}
    \label{eq:f2}
    \| u\|_{B(T_\sigma)}\leq C\sigma^{1/2} \| u\|_{B(|x|)}.
  \end{equation}
\end{lemma}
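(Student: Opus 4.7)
The plan is to reduce \eqref{eq:f2} to the abstract scaling lemma \ref{lemma:resolvent-boundsook}, preceded by a comparison of the Besov norms for the multiplication operators $M(1+|x|)$ and $M(|x|)$. Since $t_\sigma(x,y) = \sigma(1+|x|)$, we have $T_\sigma = \sigma\, M(1+|x|)$, and Lemma \ref{lemma:resolvent-boundsook} (with $A = M(1+|x|)$ and $c = \sigma > 1$) immediately gives
\begin{equation*}
\|u\|_{B(T_\sigma)} \leq 8\sigma^{1/2}\, \|u\|_{B(1+|x|)}.
\end{equation*}
Hence the lemma will follow once we have an estimate of the form $\|u\|_{B(1+|x|)} \leq C_0 \|u\|_{B(|x|)}$ with $C_0$ independent of $u$.

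The comparison is elementary and uses only the definition \eqref{eq:1n}. The spectral projections $F_j(M(1+|x|)) = F(R_{j-1}\leq 1+|x| < R_j)$ act on $\vH$ as multiplication by the indicator of $\{x : R_{j-1}-1 \leq |x| < R_j - 1\}$. For $j=1$ this set is empty. For $j=2$ it equals $\{|x|<1\} \subset \supp F_1(M(|x|))$. For $j \geq 3$ we have $R_{j-1}-1 = 2^{j-2}-1 \geq R_{j-2}$, so the support lies in $[R_{j-2}, R_j)$ and is therefore covered by $\supp F_{j-1}(M(|x|)) \cup \supp F_j(M(|x|))$. This yields the pointwise bound $F_j(M(1+|x|)) \leq F_{j-1}(M(|x|)) + F_j(M(|x|))$ for all $j \geq 2$ (with the convention that the first term is zero when $j=2$).

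Summing with the weights $R_j^{1/2}$ and reindexing the first sum by $k = j-1$, we obtain
\begin{equation*}
\|u\|_{B(1+|x|)} = \sum_{j\geq 2} R_j^{1/2} \|F_j(M(1+|x|)) u\| \leq \sum_{k\geq 1} R_{k+1}^{1/2} \|F_k(M(|x|)) u\| + \sum_{j\geq 2} R_j^{1/2} \|F_j(M(|x|)) u\|,
\end{equation*}
and using $R_{k+1}^{1/2} = \sqrt{2}\, R_k^{1/2}$ this is bounded by $(1+\sqrt{2})\|u\|_{B(|x|)}$. Composing the two estimates gives \eqref{eq:f2} with $C = 8(1+\sqrt{2})$.

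This argument has no real obstacle; the case $\Omega = \bX$ avoids the boundary-scale transition that forced the three-range splitting in the proof of Lemma \ref{lem:besov-space-boundj}, and the clean linear form of $t_\sigma$ is precisely what allows the scaling lemma to be applied in one shot after a trivial comparison.
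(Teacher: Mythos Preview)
Your proof is correct and follows exactly the route the paper indicates: apply Lemma~\ref{lemma:resolvent-boundsook} with $A=M(1+|x|)$ and $c=\sigma$, then compare $B(1+|x|)$ with $B(|x|)$. The paper treats the latter comparison as obvious (it is used without comment in \eqref{eq:F3}), while you spell it out; the only blemish is the parenthetical ``the first term is zero when $j=2$'', which is incorrect (for $j=2$ one has $F_{j-1}(M(|x|))=F_1(M(|x|))\neq 0$) but harmless, since the stated inequality $F_2(M(1+|x|))\le F_1(M(|x|))+F_2(M(|x|))$ holds regardless.
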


We can prove Lemma \ref{lem:besov-space-boundi2} by mimicking the
proof of  Lemma \ref{lem:besov-space-boundi},  while  Lemma
\ref{lem:besov-space-boundj2} is an immediate consequence of Lemma
\ref{lemma:resolvent-boundsook}. 

Whence it remains to show Lemma \ref{lemma:resolvent-boundsA2}. For
that we note the analogue of \eqref{eq:1_comm}  where now $r=|x|$
\begin{align}
    \label{eq:1_comm2}
    \i[H,A]&:=s^{-1}\parb{2p^2 +W}s^{-1}+2\Re\parb{s^{-1}(\nabla
      r^2\cdot \nabla_xs)\widetilde H^b}-\nabla r^2\cdot
    \nabla_x\widetilde B;\\
  W(x)&:=-\nabla r^2\cdot \nabla V(x).\nonumber
\end{align} 

Using \eqref{eq:1_comm2} we can  indeed mimic the proof of Lemma
\ref{lemma:resolvent-boundsA} with $f$ replacing $f_2$.  Note this is
much simpler now. For example there are no factors of $\tfrac r{1+r}$
to consider,  and the analogue of the second commutator is given by
\eqref{eq:secCom} without the second term on the right hand side. We
leave the details to  the reader.

\appendix

\section{}\label{sec:Appendixb} 
In this appendix we show how to undo the commutator $\mathrm{i}[H,A]$.
This is used to obtain (\ref{eq:11.7.13.4.10c}).  Since the
Schr\"odinger operator $H$ is realized with the Dirichlet boundary
condition the approximation procedure of \cite{IS} is not sufficient.

\subsection{Setting}\label{sec:12.6.4.10.14}
We shall work in a generalized setting on a manifold, and present all
conditions needed for the argument independently of the previous
sections.  The case of a constant metric is sufficient for application
to (\ref{eq:11.7.13.4.10c}).  The verification of the conditions below
under the conditions of Sections~\ref{sec:introduction} and
\ref{sec:Reduction to high-energy hard-core sub-system resolvent
  bounds} is straightforward.
  
Let $(\Omega,g)$ be a Riemannian manifold of dimension
$d\ge 1$, and consider the Schr\"odinger operator on ${\mathcal
  H}=L^2(\Omega)=L^2(\Omega,(\det g)^{1/2}\mathrm{d}x)$:
\begin{align*}
  H=H_0+V;\quad H_0=-\tfrac12\Delta=\tfrac12p_i^*g^{ij}p_j,\quad
  p_i=-\mathrm{i}\partial_i.
\end{align*}
We realize $H_0$ as a self-adjoint operator by setting the Dirichlet
boundary condition, i.e.\ $H_0$ is the unique self-adjoint operator
associated with the closure of the quadratic form
\begin{align*}
  \langle H_0\rangle_\psi=\langle \psi, -\tfrac12
  \Delta\psi\rangle,\quad \psi\in C^\infty_{\mathrm{c}}(\Omega).
\end{align*}
We denote the form closure and the self-adjoint realization by the
same symbol $H_0$.  Moreover, we consider the weighted spaces
\begin{align*} {\mathcal H}^s=(H_0+1)^{-s/2}{\mathcal H}, \quad
  s\in\mathbb{R},
\end{align*}
and $H_0$ may also be understood as ${\mathcal H}^s\to{\mathcal
  H}^{s-2}$, $s\in\mathbb{R}$.  For the realization of $H=H_0+V$ we
assume the following condition:
\begin{cond}\label{cond:12.6.2.20.52}
  The potential $V$ is a locally integrable real-valued function, and
  there exist $\varepsilon\in [0,1)$ and $C>0$ such that for any
  $\psi\in C^\infty_{\mathrm{c}}(\Omega)$
  \begin{align*}
    |\langle V\rangle_\psi|\le \varepsilon\langle
    H_0\rangle_\psi+C\|\psi\|^2.
  \end{align*}
\end{cond}
By this condition we extend the form domain of $V$ as ${Q}(V)={\mathcal H}^1$, and this defines a bounded operator $V\colon
{\mathcal H}^1\to{\mathcal H}^{-1}$.  We note, though, this quadratic
form is not necessarily closed.  We henceforth consider $H=H_0+V$ as a
closed quadratic form on $Q(H)={\mathcal H}^1$  or,
equivalently, as a bounded operator ${\mathcal H}^1\to{\mathcal
  H}^{-1}$.  Then the Friedrichs self-adjoint realization of $H$ on
${\mathcal H}$ is the restriction of this $H\colon {\mathcal
  H}^1\to{\mathcal H}^{-1}$ to the domain:
\begin{align*} {\mathcal D}(H)=\{\psi\in{\mathcal H}^1\,|\, H\psi\in
  {\mathcal H}\}\subset {\mathcal H}.
\end{align*}

We next assume a regularity condition for the (virtual) boundary of
$\Omega$:
\begin{cond}\label{cond:12.6.2.21.13}
  There exists a real-valued function $r\in
  C^\infty(\Omega)$  such that:
  \begin{enumerate}
  \item\label{item:12.4.7.19.40} The gradient vector field
    $2\omega=\mathop{\mathrm{grad}} r^2$ on $\Omega$ is complete.
  \item \label{item:1}The following bounds hold:
    \begin{align}
      \sup|\mathrm{d} r|<\infty, \quad
      \sup|\nabla^2 r^2|<\infty,\quad
      \sup_{r\to\infty}|\mathrm{d} \Delta
      r^2|<\infty. \label{eq:10.9.2.23.19}
    \end{align}
  \end{enumerate}
\end{cond}

The function $r$ of Condition~\ref{cond:12.6.2.21.13} is a
generalization of that of previous sections.  For the $r$ of Sections~\ref{sec:introduction} and \ref{sec:Reduction to high-energy
  hard-core sub-system resolvent bounds} we refer to Lemma~\ref{lemma:vector
  field} and Subsection~\ref{Geometric properties} (the completeness
is valid  because the vector field
$\omega$ is tangent to the boundary $\partial \Omega$). For the $r$ of
Subsection \ref{subsec: Setting of problem} we refer to
\eqref{eq:r_bounds}  (the completeness
is valid  because 
$\omega$ vanishes at the boundary $\partial \Omega\times M$).  For the $r$ of
Subsection \ref{sec:CaseOmegarmc} the properties
(\ref{item:12.4.7.19.40}) and (\ref{item:1}) are obvious,  however 
there is a cusp singularity at $x=0$ in this case. A substitute for
Lemmas \ref{lem:12.6.4.13.14}--\ref{lem:12.6.4.3.43}, shown under
Conditions \ref{cond:12.6.2.20.52}--\ref{cond:12.6.2.21.13}, is in
this  case immedidately
provided by the formula  $\|p\mathrm{e}^{\mathrm{i}tA}\psi\|
=\mathrm{e}^{2t}\|p\psi\|$.

By Condition~\ref{cond:12.6.2.21.13} (\ref{item:12.4.7.19.40}) the
vector field $\mathop{\mathrm{grad}} r^2$ generates a one-parameter
group of diffeomorphisms on $\Omega$, which we denote by
\begin{align}
  \mathrm{e}^{2\cdot}\cdot\colon \mathbb{R}\times \Omega\to
  \Omega,\quad (t,x)\mapsto \mathrm{e}^{2t}x.
  \label{eq:12.6.5.2.57}
\end{align}
This satisfies by definition, in local coordinates,
\begin{align}
  \partial_t(\mathrm{e}^{2t}x)^i=g^{ij}(\mathrm{e}^{2t}x)(\partial_jr^2)(\mathrm{e}^{2t}x).
  \label{eq:12.6.5.3.13}
\end{align}
We define the \textit{dilation} $\mathrm{e}^{\mathrm{i}tA}\colon
{\mathcal H}\to{\mathcal H}$ with respect to $r$ by the one-parameter
unitary group
\begin{align*}
  \mathrm{e}^{\mathrm{i}tA}u(x) =J(\mathrm{e}^{2t};x)^{1/2}
  \left(\frac{\det g(\mathrm{e}^{2t}x)}{\det
      g(x)}\right)^{1/4}u(\mathrm{e}^{2t}x),
\end{align*}
where $J$ is the relevant Jacobian.  Note that there is   another
expression:
\begin{align}
  \mathrm{e}^{\mathrm{i}tA}u(x) =\exp \left(\int_0^t\tfrac12(\Delta
    r^2)(\mathrm{e}^{2s}x)\,\mathrm{d}s\right)u(\mathrm{e}^{2t}x).
  \label{eq:12.6.7.1.10}
\end{align}
We let $A$ be the generator of $\mathrm{e}^{\mathrm{i}tA}$.
By the unitarity of $\mathrm{e}^{\mathrm{i}tA}$ the operator $A$ is
self-adjoint, and $C^\infty_{\mathrm{c}}(\Omega)\subseteq {\mathcal
  D}(A)$ is a core for it.  In fact, the dense subspace
$C^\infty_{\mathrm{c}}(\Omega)\subseteq {\mathcal H}$ is invariant
under $\mathrm{e}^{\mathrm{i}tA}$, and for any $u\in
C^\infty_{\mathrm{c}}(\Omega)$ the limit
\begin{align*}
  \lim_{t\to 0}t^{-1}(\mathrm{e}^{\mathrm{i}tA}u-u)
\end{align*}
exists in ${\mathcal H}$.  Note that by (\ref{eq:12.6.7.1.10}) we have
$A$ on $C^\infty_{\mathrm{c}}(\Omega)$ written by
\begin{align*}
  A=\mathrm{i}[H_0,r^2] =\tfrac{1}{2}\{(\partial_i r^2)g^{ij}p_j+p_i^*
  g^{ij}(\partial_j r^2)\} =rp^r+(p^r)^*r,
\end{align*}
where $p^r=-\mathrm{i}\partial^r=-\mathrm{i}(\partial_i
r)g^{ij}\partial_j$.

Let us first consider the commutator $\mathrm{i}[H,A]$ as a quadratic
form defined for $\psi\in C^\infty_{\mathrm{c}}(\Omega)$ by
\begin{align*}
  \langle \mathrm{i}[H,A]\rangle_\psi=\mathrm{i}\langle
  H\psi,A\psi\rangle-\mathrm{i}\langle A\psi,H\psi\rangle.
\end{align*}
In order to discuss its extension we impose the following abstract
form bound condition, which is not quite independent of
Conditions~\ref{cond:12.6.2.20.52} and \ref{cond:12.6.2.21.13}.
\begin{cond}\label{cond:12.6.4.10.16}
  There exists $C>0$ such that for any $\psi\in
  C^\infty_{\mathrm{c}}(\Omega)$
  \begin{align*}
    |\langle \mathrm{i}[H,A]\rangle_\psi|\le C\langle
    H_0+1\rangle_\psi.
  \end{align*}
\end{cond}
Similarly to the above, we henceforth regard $\mathrm{i}[H,A]$ as a
quadratic form on $Q(\mathrm{i}[H,A])={\mathcal H}^1$, 
which may not be closed, or as a bounded operator ${\mathcal
  H}^1\to{\mathcal H}^{-1}$.

\subsection{Preliminaries}\label{sec:12.6.4.6.27}
We prove a regularity property of the flow (\ref{eq:12.6.5.2.57}).
\begin{lemma} \label{lem:12.6.6.20.28} There exists $C>0$ such that
  for any $t\in \mathbb{R}$ and $x\in \Omega$
  \begin{align}
    d\mathrm{e}^{-C|t|} \le
    g^{ij}(x)g_{kl}(\mathrm{e}^{2t}x)[\partial_i(\mathrm{e}^{2t}x)^k][\partial_j(\mathrm{e}^{2t}x)^l]
    \le d \mathrm{e}^{C|t|}.\label{9.12.19.1.58}
  \end{align}
\end{lemma}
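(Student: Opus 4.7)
Let $\phi_t(x) := \mathrm{e}^{2t}x$ denote the flow of the vector field $X = 2\omega = \mathop{\mathrm{grad}} r^2$ on $\Omega$, and set
\[
G_t(x) := (\phi_t^* g)(x),\qquad \text{so that}\qquad (G_t)_{ij}(x) = g_{kl}(\phi_t(x))\,[\partial_i(\phi_t(x))^k][\partial_j(\phi_t(x))^l].
\]
Then the quantity to be estimated is exactly $F(t,x) := g^{ij}(x)(G_t)_{ij}(x)$, the trace of the pulled-back metric with respect to $g$; in particular $F(0,x) = d$. The strategy is to bound $G_t$ against $g$ pointwise by a Gronwall argument and then contract with $g^{-1}$.

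For the first step I would use the standard identity $\tfrac{\mathrm{d}}{\mathrm{d}t}(\phi_t^* g) = \phi_t^*(\mathcal{L}_X g)$ together with
\[
\mathcal{L}_X g = \nabla X^\flat + (\nabla X^\flat)^{\mathrm{T}} = 2\,\nabla^2 r^2,
\]
which holds because $X^\flat = \mathrm{d}(r^2)$ is exact, so $\nabla X^\flat$ is automatically symmetric. This gives
\[
\tfrac{\mathrm{d}}{\mathrm{d}t} G_t = 2\,\phi_t^*(\nabla^2 r^2).
\]
Setting $M := \sup_\Omega |\nabla^2 r^2| < \infty$ from Condition~\ref{cond:12.6.2.21.13}\,(\ref{item:1}), one has $-M g \le \nabla^2 r^2 \le M g$ as symmetric bilinear forms at every point, and pulling back by $\phi_t$ preserves this inequality (as an operation on positive $(0,2)$-tensors):
\[
-2M\, G_t \;\le\; \tfrac{\mathrm{d}}{\mathrm{d}t} G_t \;\le\; 2M\, G_t.
\]

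For the Gronwall step, I would fix $x \in \Omega$ and an arbitrary tangent vector $v$ at $x$ and study the nonnegative scalar $\lambda(t) := (G_t(x))_{ij} v^i v^j$. The matrix inequality above yields $|\lambda'(t)| \le 2M\lambda(t)$, and hence $\mathrm{e}^{-2M|t|}\lambda(0) \le \lambda(t) \le \mathrm{e}^{2M|t|}\lambda(0)$. Since $v$ is arbitrary this upgrades to the pointwise operator inequality $\mathrm{e}^{-2M|t|} g \le G_t \le \mathrm{e}^{2M|t|} g$, and contracting both sides with the positive-definite $g^{-1}$ and using $g^{ij} g_{ij} = d$ produces exactly \eqref{9.12.19.1.58} with $C := 2M$. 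The completeness assumption of Condition~\ref{cond:12.6.2.21.13}\,(\ref{item:12.4.7.19.40}) is what makes $\phi_t$ defined for all $t \in \mathbb{R}$. No step is really an obstacle; the only mildly technical point is reducing the matrix-valued Gronwall inequality to the one-dimensional one, which is handled by testing against a single $v$ at each $x$.
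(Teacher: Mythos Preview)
Your proof is correct and follows essentially the same approach as the paper: both compute the $t$-derivative of the pulled-back metric, identify it with $2\phi_t^*(\nabla^2 r^2)$, use the uniform bound on $\nabla^2 r^2$ from Condition~\ref{cond:12.6.2.21.13}\,(\ref{item:1}), and conclude by Gronwall. The paper carries this out in local coordinates via an explicit Levi-Civita computation and applies Gronwall directly to the contracted scalar $F$, whereas you use the invariant identity $\tfrac{\d}{\d t}\phi_t^*g=\phi_t^*(\mathcal L_Xg)=2\phi_t^*(\nabla^2 r^2)$ and first obtain the stronger pointwise matrix inequality $\e^{-2M|t|}g\le G_t\le \e^{2M|t|}g$ before contracting; the difference is purely one of packaging.
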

\begin{proof}
  The proof is similar to that of \cite[Lemma 2.3]{IS2}.  We note that
  the expression in the middle of (\ref{9.12.19.1.58}) is independent of choice
  of coordinates.  Fix $x\in \Omega$ and choose coordinates such that
  $g_{ij}(x)=\delta_{ij}$.  Consider the vector fields along
  $\{\mathrm{e}^{2t}x\}_{t\in\mathbb{R}}$ given by
  $\partial_i\mathrm{e}^{2t}x$ and $\partial_j\mathrm{e}^{2t}x$.
  Since the Levi-Civita connection $\nabla$ is compatible with the
  metric,
  \begin{align}
    \begin{split}\label{eq:12.6.7.1.34}
      \tfrac{\partial }{\partial
        t}g_{kl}(\mathrm{e}^{2t}x)&[\partial_i(\mathrm{e}^{2t}x)^k][\partial_j(\mathrm{e}^{2t}x)^l]
      =\tfrac{\partial }{\partial
        t}\langle \partial_i\mathrm{e}^{2t}x,
      \partial_j\mathrm{e}^{2t}x\rangle\\
      &=\langle
      \nabla_{\partial_t\mathrm{e}^{2t}x} \partial_i\mathrm{e}^{2t}x,\partial_j\mathrm{e}^{2t}x\rangle
      +\langle \partial_i\mathrm{e}^{2t}x,\nabla_{\partial_t\mathrm{e}^{2t}x} \partial_j\mathrm{e}^{2t}x\rangle.
    \end{split}
  \end{align}
  (The definition of $\nabla_{\partial_t\mathrm{e}^{2t}x}$ is given
  below.)  From (\ref{eq:12.6.5.3.13}) it follows that
  \begin{align*}
    \nabla_{\partial_t \mathrm{e}^{2t}x}\partial_i
    (\mathrm{e}^{2t}x)^\bullet &{}=\partial_t\partial_i
    (\mathrm{e}^{2t}x)^\bullet
    +[\partial_t (\mathrm{e}^{2t}x)^k]\Gamma^\bullet_{kl}\partial_i (\mathrm{e}^{2t}x)^l\\
    &{}=\partial_i\partial_t (\mathrm{e}^{2t}x)^\bullet
    +( g^{km}\partial_mr^2)\Gamma^\bullet_{kl}\partial_i (\mathrm{e}^{2t}x)^l\\
    &{}=[\partial_i (\mathrm{e}^{2t}x)^k]\partial_k(g^{\bullet
      l}\partial_lr^2)
    +[\partial_i (\mathrm{e}^{2t}x)^l]\Gamma^\bullet_{kl}g^{km}\partial_mr^2\\
    &{}=\nabla_{\partial_i \mathrm{e}^{2t}x}(g^{\bullet l}\partial_l r^2)\\
    &{}=g^{\bullet l}[\partial_i (\mathrm{e}^{2t}x)^k] (\nabla^2
    r^2)_{kl}.
  \end{align*}
  Thus, plugging this into (\ref{eq:12.6.7.1.34}) and taking a
  contraction with $g^{ij}(x)=\delta^{ij}$, we obtain
  \begin{align*}
    \Bigl|\tfrac{\partial }{\partial
      t}g^{ij}(x)g_{kl}(\mathrm{e}^{2t}x)[\partial_i(\mathrm{e}^{2t}x)^k]
    [\partial_j(\mathrm{e}^{2t}x)^l]\Bigr| &\le C
    g^{ij}(x)g_{kl}(\mathrm{e}^{2t}x)[\partial_i(\mathrm{e}^{2t}x)^k]
    [\partial_j(\mathrm{e}^{2t}x)^l].
  \end{align*}
  Noting
  $g^{ij}(x)g_{kl}(\mathrm{e}^{2t}x)[\partial_i(\mathrm{e}^{2t}x)^k]
  [\partial_j(\mathrm{e}^{2t}x)^l]\bigr|_{t=0}=d$, we have
  (\ref{9.12.19.1.58}).
\end{proof}

Recall the functions $\chi_\nu,\bar\chi_\nu\in C^\infty(\mathbb{R})$
of 
Subsubsection~\ref{subsec:Notation}.  We shall henceforth consider the
functions $\chi_\nu=\chi_\nu(r)$, $\bar\chi_\nu=\bar\chi_\nu(r)$ as
being composed with the function $r$ from
Condition~\ref{cond:12.6.2.21.13}.  We also set
\begin{align*}
  \chi_{\nu,\nu'}=\chi_\nu\bar\chi_{\nu'},\quad
  \bar\chi_{\nu'}=1-\chi_{\nu'},\quad \nu'\ge 2\nu\ge 2.
\end{align*}
Next, we prove the following statement:
\begin{lemma}\label{lem:11.7.19.22.23}
  Let $\psi\in {\mathcal D}(H)$. Then there exists $\nu_0>0$ such
  that, for any $\nu> \nu_0$ and any $\sigma\ge 0$ with
  $\mathrm{e}^{\sigma r}\psi,\mathrm{e}^{\sigma r}H\psi \in {\mathcal
    H}$, one has $\mathrm{e}^{\sigma r} \chi_\nu\psi\in {\mathcal
    D}(H)$.
\end{lemma}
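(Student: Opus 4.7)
The plan is to realise $\mathrm{e}^{\sigma r}\chi_\nu\psi$ as the $\mathcal{H}$-limit of a sequence $\psi_n\in \mathcal{D}(H)$ and invoke closedness of $H$. Choose $\zeta\in C^\infty_{\mathrm c}([0,\infty))$ with $\zeta\equiv 1$ near $0$ and $\supp\zeta\subset [0,2]$, set $\zeta_n(t)=\zeta(t/n)$, and put $h_n(x) = \mathrm{e}^{\sigma r(x)}\zeta_n(r(x))\chi_\nu(r(x))$, $\psi_n = h_n\psi$. For each fixed $n$ the multiplier $h_n$ lies in $C^\infty(\Omega)$ with $h_n$, $\mathrm{d}h_n$, $\Delta h_n$ all bounded, hence multiplication by $h_n$ preserves $\mathcal{H}^1=Q(H)$ and
\begin{align*}
H\psi_n = h_n H\psi + [H_0,h_n]\psi,\qquad [H_0,h_n]\psi = -\tfrac12(\Delta h_n)\psi - g^{ij}(\partial_ih_n)(\partial_j\psi),
\end{align*}
lies in $\mathcal{H}$; thus $\psi_n\in \mathcal{D}(H)$. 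Convergence $\psi_n\to \mathrm{e}^{\sigma r}\chi_\nu\psi$ in $\mathcal{H}$ is immediate by dominated convergence with majorant $\mathrm{e}^{\sigma r}|\psi|\in\mathcal{H}$, and $h_n H\psi\to \mathrm{e}^{\sigma r}\chi_\nu H\psi$ in $\mathcal{H}$ in the same way using $\mathrm{e}^{\sigma r}H\psi\in \mathcal{H}$.

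The remaining task, and the main obstacle, is to show $[H_0,h_n]\psi\to [H_0,\mathrm{e}^{\sigma r}\chi_\nu]\psi$ in $\mathcal{H}$. The delicate term is the one carrying $\zeta_n'$,
\begin{align*}
\mathrm{e}^{\sigma r}\zeta_n'(r)\chi_\nu(r)\,g^{ij}(\partial_ir)(\partial_j\psi),
\end{align*}
supported on $\{n\le r\le 2n\}$ and pointwise bounded by $(C/n)\mathrm{e}^{\sigma r}|\nabla\psi|$. For this to vanish in $\mathcal{H}$ as $n\to\infty$ it suffices to establish the weighted gradient bound $\mathrm{e}^{\sigma r}\chi_\nu\nabla\psi\in\mathcal{H}$; the same bound then dominates the remaining pieces of $[H_0,h_n]\psi$.

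The gradient bound I would derive via a uniform energy estimate on the approximants themselves. Starting from $\Re\langle H\psi_n,\psi_n\rangle = \tfrac12\|\nabla\psi_n\|^2+\langle V\rangle_{\psi_n}$ (valid since $\psi_n\in \mathcal{D}(H)\subset \mathcal{H}^1$), and moving one factor of $h_n$ across the commutator and integrating by parts the drift produced, one obtains the identity
\begin{align*}
\Re\langle H\psi_n,\psi_n\rangle = \Re\langle h_n^2 H\psi,\psi\rangle + \tfrac12\|(\nabla h_n)\psi\|^2.
\end{align*}
Combining with Condition~\ref{cond:12.6.2.20.52} ($\varepsilon<1$) yields
\begin{align*}
\tfrac{1-\varepsilon}{2}\|\nabla\psi_n\|^2 \le \|\mathrm{e}^{\sigma r}H\psi\|\,\|\mathrm{e}^{\sigma r}\psi\| + \tfrac12\|(\nabla h_n)\psi\|^2 + C\|\mathrm{e}^{\sigma r}\psi\|^2.
\end{align*}
Using $\sup|\mathrm{d}r|<\infty$ from Condition~\ref{cond:12.6.2.21.13} together with $|\zeta_n'|\le C/n$ and $|\chi_\nu'|\le C/\nu$, the term $\|(\nabla h_n)\psi\|^2$ is bounded by a constant (depending on $\sigma,\nu$ but not on $n$) times $\|\mathrm{e}^{\sigma r}\psi\|^2$. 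Hence $\|\nabla\psi_n\|$ is uniformly bounded; weak compactness and the strong convergence of $\psi_n$ force $\nabla \psi_n\rightharpoonup \nabla(\mathrm{e}^{\sigma r}\chi_\nu\psi)$ in $\mathcal{H}$, and the Leibniz rule then gives $\mathrm{e}^{\sigma r}\chi_\nu\nabla\psi\in \mathcal{H}$.

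With the weighted gradient bound in hand, dominated convergence with majorant $\mathrm{e}^{\sigma r}\chi_\nu(|\psi|+|\nabla\psi|)$ splits $[H_0,h_n]\psi$ into a part converging to $[H_0,\mathrm{e}^{\sigma r}\chi_\nu]\psi$ in $\mathcal{H}$ and a piece whose $\mathcal{H}$-norm is $O(\|\mathrm{e}^{\sigma r}\chi_\nu\nabla\psi\|_{L^2(\{n\le r\le 2n\})})=o(1)$. Therefore $H\psi_n$ converges in $\mathcal{H}$, and closedness of $H$ delivers $\mathrm{e}^{\sigma r}\chi_\nu\psi\in\mathcal{D}(H)$. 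The threshold $\nu_0$ enters only to ensure that $\chi_\nu$ localises in the regime where the global bounds (\ref{eq:10.9.2.23.19}) suffice to control all commutator terms in the energy identity; it can be chosen to depend only on the constants in Conditions~\ref{cond:12.6.2.20.52}--\ref{cond:12.6.2.21.13}.
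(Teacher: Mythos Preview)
Your argument is correct and follows the paper's three-stage strategy: truncate at large $r$ (your $\zeta_n$ plays the role of the paper's $\bar\chi_{\nu'}$), derive the weighted gradient bound $\mathrm{e}^{\sigma r}\chi_\nu\nabla\psi\in\mathcal{H}$ via a form estimate on the truncated state, and pass to the limit. The only tactical differences are that you obtain the gradient bound from the clean energy identity $\Re\langle H\psi_n,\psi_n\rangle=\Re\langle h_n^2H\psi,\psi\rangle+\tfrac12\|(\nabla h_n)\psi\|^2$ (the paper uses an absorption argument in its Step~II) and that you conclude via weak $\mathcal{H}^1$-compactness and closedness of $H$ where the paper in Step~III builds an explicit $C^\infty_{\mathrm c}$-approximating sequence.
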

\begin{proof} {\noindent \it Step I.}
  We first claim $\mathrm{e}^{\sigma r}\chi_{\nu,\nu'}\psi\in
  {\mathcal D}(H)$.  Since $\psi\in{\mathcal H}^1$, we have
  \begin{align*}
    \mathrm{e}^{\sigma r}\chi_{\nu,\nu'}\psi, \mathrm{e}^{\sigma
      r}\chi_{\nu,\nu'} p\psi\in {\mathcal H},
  \end{align*}
  and hence $p\mathrm{e}^{\sigma r}\chi_{\nu,\nu'} \psi\in {\mathcal
    H}$ by (\ref{eq:10.9.2.23.19}).  Choose a sequence $\psi_n\in
  C^\infty_{\mathrm{c}}(\Omega)$ such that, as $n\to\infty$,
  \begin{align}
    \|\psi-\psi_n\|+\|p(\psi-\psi_n)\|\to 0,
    \label{eq:12.4.7.21.56}
  \end{align}
  and then by (\ref{eq:10.9.2.23.19}) again, as $n\to\infty$,
  \begin{align*}
    \mathrm{e}^{\sigma r}\chi_{\nu,\nu'}\psi_n\to \mathrm{e}^{\sigma
      r}\chi_{\nu,\nu'}\psi, \quad p\mathrm{e}^{\sigma
      r}\chi_{\nu,\nu'}\psi_n\to p\mathrm{e}^{\sigma
      r}\chi_{\nu,\nu'}\psi \quad \mbox{ in }{\mathcal H}.
  \end{align*}
  This implies that $\mathrm{e}^{\sigma r}\chi_{\nu,\nu'}\psi\in
  {\mathcal H}^1$.  Note the distributional identity
  \begin{align*}
    H \mathrm{e}^{\sigma r}\chi_{\nu,\nu'}\psi =\mathrm{e}^{\sigma
      r}\chi_{\nu,\nu'}H\psi -\mathrm{e}^{\sigma r}(\sigma
    \chi_{\nu,\nu'}+\chi_{\nu,\nu'}')\partial^r\psi
    -\tfrac12(\Delta \mathrm{e}^{\sigma r}\chi_{\nu,\nu'})\psi.
  \end{align*}
  Then, since $\psi, p\psi, H\psi\in {\mathcal H}$ and by
  (\ref{eq:10.9.2.23.19})
  \begin{align}
    \chi_\nu |\Delta r|=\tfrac{1}{2r}\chi_\nu|(\Delta
    r^2)-2|\mathrm{d}r|^2|\le C_\nu,
    \label{eq:11.7.22.9.52}
  \end{align}
  we have $H\mathrm{e}^{\sigma r}\chi_{\nu,\nu'}\psi\in {\mathcal H}$.
  Hence $\mathrm{e}^{\sigma r}\chi_{\nu,\nu'}\psi\in {\mathcal D}(H)$.

  \smallskip {\noindent \it Step II.}  We next show
  $\mathrm{e}^{\sigma r}\chi_{\nu}p\psi\in {\mathcal H}$.  Noting that
  $\mathrm{e}^{\sigma r}\chi_{\nu,\nu'}\psi \in {\mathcal H}^1$ as in
  Step I, we estimate by Condition~\ref{cond:12.6.2.20.52}
  \begin{align*}
    \|\mathrm{e}^{\sigma r}\chi_{\nu,\nu'}p\psi\|^2 &\le
    2\|p\mathrm{e}^{\sigma r}\chi_{\nu,\nu'}\psi\|^2
    +2\|(p\mathrm{e}^{\sigma r}\chi_{\nu,\nu'})\psi\|^2\\
    &\le C\langle H\rangle_{\mathrm{e}^{\sigma r}\chi_{\nu,\nu'}\psi}
    +C_{\sigma}\|\mathrm{e}^{\sigma r}\chi_{\nu/2,2\nu'}\psi\|^2.
  \end{align*}
  Then, by $\mathrm{e}^{\sigma r}\chi_{\nu,\nu'}\psi \in {\mathcal
    D}(H)$ and the definition of the Friedrichs extension
  \begin{align*}
    \|\mathrm{e}^{\sigma r}\chi_{\nu,\nu'}p\psi\|^2 \le {}& C\langle
    \mathrm{e}^{\sigma r}\chi_{\nu,\nu'}\psi,H\mathrm{e}^{\sigma
      r}\chi_{\nu,\nu'}\psi\rangle
    +C_{\sigma}\|\mathrm{e}^{\sigma r}\chi_{\nu/2,2\nu'}\psi\|^2\\
    \le {}& C\langle \mathrm{e}^{\sigma
      r}\chi_{\nu,\nu'}\psi,\mathrm{e}^{\sigma
      r}\chi_{\nu,\nu'}H\psi\rangle
    +C\langle \mathrm{e}^{\sigma r}\chi_{\nu,\nu'}\psi,[-\tfrac 12\Delta,\mathrm{e}^{\sigma r}\chi_{\nu,\nu'}] \psi\rangle\\
    &{}+C_{\sigma}\|\mathrm{e}^{\sigma r}\chi_{\nu/2,2\nu'}\psi\|^2\\
    \le {}& C\|\mathrm{e}^{\sigma r}\chi_{\nu,\nu'}H \psi\|^2
    +\tfrac12 \|\mathrm{e}^{\sigma r}\chi_{\nu,\nu'}p\psi\|^2
    +C_{\sigma}\|\mathrm{e}^{\sigma r}\chi_{\nu/2,2\nu'}\psi\|^2.
  \end{align*}
  Hence
  \begin{align*}
    \|\mathrm{e}^{\sigma r}\chi_{\nu,\nu'}p\psi\|^2 \le
    C\|\mathrm{e}^{\sigma r}\chi_{\nu,\nu'}H \psi\|^2 +C_{\sigma}
    \|\mathrm{e}^{\sigma r}\chi_{\nu/2,2\nu'}\psi\|^2.
  \end{align*}
  Now we let $\nu'\to\infty$. Then by the Lebesgue convergence theorem
  $\mathrm{e}^{\sigma r}\chi_{\nu}p\psi\in {\mathcal H}$.

  \smallskip {\noindent \it Step III.}  We note $p\mathrm{e}^{\sigma
    r}\chi_{\nu}\psi\in {\mathcal H}$ by Step II.  We choose a
  sequence $\psi_n\in C^\infty_{\mathrm{c}}(\Omega)$ satisfying
  (\ref{eq:12.4.7.21.56}) as $n\to\infty$, and estimate
  \begin{align}
    \|\mathrm{e}^{\sigma r} \chi_\nu\psi-\mathrm{e}^{\sigma r}
    \chi_{\nu,\nu'}\psi_n\| +\|p(\mathrm{e}^{\sigma r}
    \chi_\nu\psi-\mathrm{e}^{\sigma r} \chi_{\nu,\nu'}\psi_n)\|.
    \label{eq:12.4.8.22.50}
  \end{align}
  For $\nu'\ge 2\nu$ we have the first term of (\ref{eq:12.4.8.22.50})
  bounded by
  \begin{align*}
    \|\mathrm{e}^{\sigma r} \chi_\nu\psi-\mathrm{e}^{\sigma r}
    \chi_{\nu,\nu'}\psi_n\| \le \|\mathrm{e}^{\sigma r}
    \chi_{\nu'}\psi\| +\|\mathrm{e}^{\sigma r}
    \chi_{\nu,\nu'}(\psi-\psi_n)\|,
  \end{align*}
  and the second term bounded by
  \begin{align*}
    &\|p(\mathrm{e}^{\sigma r} \chi_\nu\psi-\mathrm{e}^{\sigma r} \chi_{\nu,\nu'}\psi_n)\|\\
    &\le \|p\mathrm{e}^{\sigma r} \chi_{\nu'}\psi\|
    +\|p\mathrm{e}^{\sigma r} \chi_{\nu,\nu'}(\psi-\psi_n)\|\\
    &\le \|\mathrm{e}^{\sigma r} \chi_{\nu'} p \psi\| + C_\sigma
    \|\mathrm{e}^{\sigma r} \chi_{\nu'/2,2\nu'} \psi\|
    +\|\mathrm{e}^{\sigma r} \chi_{\nu,\nu'} p(\psi-\psi_n)\|
    +C_\sigma\|\mathrm{e}^{\sigma r}
    \chi_{\nu'/2,2\nu'}(\psi-\psi_n)\|.
  \end{align*}
  Thus we can make (\ref{eq:12.4.8.22.50}) arbitrarily small by
  letting $\nu'$ be large and then $n$ large.  Then we obtain a
  sequence verifying
  \begin{align*}
    \|\mathrm{e}^{\sigma r} \chi_\nu\psi-\mathrm{e}^{\sigma r}
    \chi_{\nu,\nu'(m)}\psi_{n(m)}\| +\|p(\mathrm{e}^{\sigma r}
    \chi_\nu\psi-\mathrm{e}^{\sigma r}
    \chi_{\nu,\nu'(m)}\psi_{n(m)})\|\to 0
  \end{align*}
  as $m\to\infty$, and hence $\mathrm{e}^{\sigma r} \chi_\nu\psi\in
  {\mathcal H}^1$.

  Finally, noting the distributional identity
  \begin{align*}
    H\mathrm{e}^{\sigma r} \chi_\nu\psi =\mathrm{e}^{\sigma r}
    \chi_\nu H\psi +[-\tfrac12\Delta ,\mathrm{e}^{\sigma r}
    \chi_\nu]\psi,
  \end{align*}
  we learn $H\mathrm{e}^{\sigma r} \chi_\nu\psi\in {\mathcal H}$, and
  hence $\mathrm{e}^{\sigma r} \chi_\nu\psi\in {\mathcal D}(H)$.
\end{proof}

\begin{corollary}\label{cor:12.6.9.11.44}
  Let $\psi\in {\mathcal D}(H)$ satisfy $\mathrm{e}^{\sigma
    r}\psi,\mathrm{e}^{\sigma r}H\psi \in {\mathcal H}$ for any
  $\sigma\ge 0$.  Then for large $\nu>0$ one has $\mathrm{e}^{\sigma
    r} \chi_\nu\psi\in {\mathcal D}(H)\cap {\mathcal D}(A)$.
\end{corollary}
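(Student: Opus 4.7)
The inclusion $\phi:=\mathrm{e}^{\sigma r}\chi_\nu\psi\in \mathcal{D}(H)$ is immediate from Lemma~\ref{lem:11.7.19.22.23} (applied at each rate $\sigma\ge 0$), so what remains is to establish $\phi\in \mathcal{D}(A)$. The plan is to exploit that $A$ is self-adjoint with core $C^\infty_{\mathrm{c}}(\Omega)$, on which it acts as $A=-\mathrm{i}(2r\partial^r+\tfrac12\Delta r^2)$: by the closedness of $A$ it suffices to produce a sequence $\phi_m\in C^\infty_{\mathrm{c}}(\Omega)$ with $\phi_m\to\phi$ and $A\phi_m\to\eta$ in $\mathcal{H}$ for some $\eta$, whereupon $\phi\in \mathcal{D}(A)$ with $A\phi=\eta$.

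We mimic Step III of the proof of Lemma~\ref{lem:11.7.19.22.23}: take $\psi_n\in C^\infty_{\mathrm{c}}(\Omega)$ with $(\psi_n,p\psi_n)\to(\psi,p\psi)$ in $\mathcal{H}\times\mathcal{H}$, set $\phi_{n,\nu'}=\mathrm{e}^{\sigma r}\chi_{\nu,\nu'}\psi_n$, and extract a diagonal sequence $\phi_m=\phi_{n(m),\nu'(m)}$ with $\nu'(m)\to\infty$. Convergence $\phi_m\to\phi$ in $\mathcal{H}$ and convergence of the $(\Delta r^2)$-multiplication piece of $A\phi_m$ follow at once from the uniform bound~\eqref{eq:10.9.2.23.19} together with dominated convergence, since $\mathrm{e}^{\sigma r}\chi_\nu\psi\in\mathcal{H}$. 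The serious term is $r\partial^r\phi_m$, which by Leibniz splits into a piece $r\mathrm{e}^{\sigma r}(\sigma\chi_{\nu,\nu'}+\chi'_{\nu,\nu'})|\mathrm{d}r|^2\psi_n$ (handled by Lebesgue dominated convergence using the elementary inequality $r\mathrm{e}^{\sigma r}\le C_\sigma\mathrm{e}^{(\sigma+1)r}$ on $\{r\ge 1\}$ combined with the hypothesis $\mathrm{e}^{(\sigma+1)r}\psi\in\mathcal{H}$) and a remaining piece $r\mathrm{e}^{\sigma r}\chi_{\nu,\nu'}\partial^r\psi_n$ which demands the weighted gradient bound $r\mathrm{e}^{\sigma r}\chi_\nu\partial^r\psi\in\mathcal{H}$.

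The main obstacle is exactly this weighted gradient bound: plain $\mathcal{H}^1$-membership of $\psi$, or even the conclusion of Step II of Lemma~\ref{lem:11.7.19.22.23} at rate $\sigma$, only yields $\mathrm{e}^{\sigma r}\chi_\nu\partial^r\psi\in\mathcal{H}$ but not the extra factor of $r$ required by $A$. The resolution is to bump the decay rate and re-run Step~II of Lemma~\ref{lem:11.7.19.22.23} at rate $\sigma+1$ instead of $\sigma$; this produces $\mathrm{e}^{(\sigma+1)r}\chi_\nu p\psi\in\mathcal{H}$, and therefore $r\mathrm{e}^{\sigma r}\chi_\nu\partial^r\psi\in\mathcal{H}$ by the same elementary inequality as above. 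The analogous bumped $L^2$-estimate applied to $p(\psi-\psi_n)$, together with the convergence $p\psi_n\to p\psi$, will let us pick the diagonal pair $(n(m),\nu'(m))$ so that $r\mathrm{e}^{\sigma r}\chi_{\nu,\nu'(m)}\partial^r\psi_{n(m)}\to r\mathrm{e}^{\sigma r}\chi_\nu\partial^r\psi$ in $\mathcal{H}$. This bumping step is legitimate precisely because the corollary hypothesises exponential decay of both $\psi$ and $H\psi$ at \emph{every} rate $\sigma\ge 0$, rather than at a single rate as in the lemma.
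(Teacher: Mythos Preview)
Your proof is correct and matches what the paper intends; note that the paper gives no explicit argument for this corollary, leaving the $\mathcal D(A)$-membership to the reader. You have correctly isolated the one nontrivial point: the operator $A=rp^r+(p^r)^*r$ carries an extra factor of $r$, and the way to absorb it is to invoke the hypothesis at rate $\sigma+1$ rather than $\sigma$ and then re-run Step~II of Lemma~\ref{lem:11.7.19.22.23}, exactly as you do. The diagonal extraction is the same mechanism as in Step~III of that lemma.

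One phrasing issue: your final sentence speaks of ``the analogous bumped $L^2$-estimate applied to $p(\psi-\psi_n)$'', which sounds as if you want to run the Step~II inequality with $\psi$ replaced by $\psi-\psi_n$. That would not work, since you have no control on $\mathrm e^{(\sigma+1)r}H(\psi-\psi_n)$. What actually makes the diagonal argument go through is simpler: for \emph{fixed} $\nu'$ the weight $r\,\mathrm e^{\sigma r}\chi_{\nu,\nu'}$ is bounded (the cutoff has compact support in $r$), so the unweighted convergence $p\psi_n\to p\psi$ already gives
\[
\|r\,\mathrm e^{\sigma r}\chi_{\nu,\nu'}\partial^r(\psi-\psi_n)\|\to 0;
\]
then letting $\nu'\to\infty$ is handled by the genuine bumped estimate $r\,\mathrm e^{\sigma r}\chi_\nu\partial^r\psi\in\mathcal H$ (dominated convergence on the tail $\chi_{\nu'}$). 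With that clarification your argument is complete.
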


\subsection{Undoing commutators}\label{sec:12.6.4.7.0}

\begin{lemma}\label{lem:12.6.4.13.14} 
  For any $s\in [-1,1]$ the inclusion
  $\mathrm{e}^{\mathrm{i}tA}{\mathcal H}^s\subseteq{\mathcal H}^s$
  holds, and
  \begin{align}
    \sup_{|t|<1} \|\mathrm{e}^{\mathrm{i}tA}\|_{{\mathcal B}({\mathcal
        H}^s)}<\infty.
    \label{eq:12.6.7.2.34}
  \end{align}
  Moreover, $\mathrm{e}^{\mathrm{i}tA}\colon {\mathcal H}^s\to{\mathcal H}^s$ is strongly continuous in $t\in \mathbb{R}$.
\end{lemma}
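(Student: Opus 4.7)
\noindent\textbf{Proof plan for Lemma~\ref{lem:12.6.4.13.14}.} The case $s=0$ is immediate from the unitarity of $\mathrm{e}^{\mathrm{i}tA}$. The plan is to establish the result for $s=1$, deduce $s=-1$ by duality (using $(\mathrm{e}^{\mathrm{i}tA})^*=\mathrm{e}^{-\mathrm{i}tA}$ and the identification of $\mathcal{H}^{-1}$ as the dual of $\mathcal{H}^1$), and then conclude for $s\in(-1,1)$ by complex interpolation between $\mathcal{H}^1$ and $\mathcal{H}^{-1}$.

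\noindent\textbf{The $s=1$ bound.} For $u\in C^\infty_{\mathrm{c}}(\Omega)$, which is a core for $H_0^{1/2}$ and is invariant under $\mathrm{e}^{\mathrm{i}tA}$, I would work from the explicit formula \eqref{eq:12.6.7.1.10}. Differentiating in $x$ produces two kinds of terms: derivatives falling on the exponential prefactor, and derivatives falling on $u(\mathrm{e}^{2t}x)$. For the first, the bound $\sup|\mathrm{d}\Delta r^2|<\infty$ from Condition~\ref{cond:12.6.2.21.13}\eqref{item:1} gives a uniform in $|t|<1$ bound on the gradient of the prefactor (the prefactor itself is also uniformly bounded by integrating the bounded function $\tfrac12 \Delta r^2$ over a bounded time interval). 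For the second, the chain rule yields
\begin{equation*}
\partial_i\bigl(u(\mathrm{e}^{2t}x)\bigr)=[\partial_i(\mathrm{e}^{2t}x)^k](\partial_k u)(\mathrm{e}^{2t}x),
\end{equation*}
whence contracting with $g^{ij}(x)$ and invoking Lemma~\ref{lem:12.6.6.20.28} gives
\begin{equation*}
g^{ij}(x)[\partial_i u(\mathrm{e}^{2t}x)][\partial_j u(\mathrm{e}^{2t}x)]\le d\,\mathrm{e}^{C|t|}\,g^{kl}(\mathrm{e}^{2t}x)(\partial_k u)(\mathrm{e}^{2t}x)(\partial_l u)(\mathrm{e}^{2t}x).
\end{equation*}
Changing variables $y=\mathrm{e}^{2t}x$ (which is exactly the change absorbed by the Jacobian factor defining $\mathrm{e}^{\mathrm{i}tA}$) converts the $L^2$ norm of the second type of term into a constant times $\|pu\|_{\mathcal{H}}$. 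Combining these estimates and using Condition~\ref{cond:12.6.2.21.13}\eqref{item:1} (particularly $\sup|\mathrm{d}r|<\infty$ and $\sup|\nabla^2 r^2|<\infty$), I obtain
\begin{equation*}
\|p\,\mathrm{e}^{\mathrm{i}tA}u\|_{\mathcal{H}}\le C_1\,\mathrm{e}^{C_2|t|}\bigl(\|pu\|_{\mathcal{H}}+\|u\|_{\mathcal{H}}\bigr),\qquad u\in C^\infty_{\mathrm{c}}(\Omega).
\end{equation*}
Since $\|p\cdot\|+\|\cdot\|$ is equivalent to the $\mathcal{H}^1$-norm and $C^\infty_{\mathrm{c}}(\Omega)$ is dense in $\mathcal{H}^1$, this extends by continuity to all of $\mathcal{H}^1$, giving inclusion and the uniform bound \eqref{eq:12.6.7.2.34} for $|t|<1$.

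\noindent\textbf{Duality and interpolation.} The bound on $\mathcal{H}^{-1}$ follows by dualizing: for $\psi\in\mathcal{H}^{-1}$ and $\phi\in\mathcal{H}^1$,
\begin{equation*}
|\langle\mathrm{e}^{\mathrm{i}tA}\psi,\phi\rangle|=|\langle\psi,\mathrm{e}^{-\mathrm{i}tA}\phi\rangle|\le\|\psi\|_{\mathcal{H}^{-1}}\|\mathrm{e}^{-\mathrm{i}tA}\phi\|_{\mathcal{H}^1},
\end{equation*}
and the right factor is uniformly controlled by the $s=1$ case. The intermediate $s\in(-1,1)$ then follows from complex interpolation between $\mathcal{H}^{-1}$ and $\mathcal{H}^1$ with the weight function $(H_0+1)^{-s/2}$.

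\noindent\textbf{Strong continuity.} Strong continuity in $\mathcal{H}$ is standard from self-adjointness. For strong continuity in $\mathcal{H}^s$ with $s=\pm 1$, I would use the invariant dense subspace $C^\infty_{\mathrm{c}}(\Omega)\subset\mathcal{H}^1$ (invariant under $\mathrm{e}^{\mathrm{i}tA}$, with pointwise-in-$t$ smoothness), combined with the uniform operator bound \eqref{eq:12.6.7.2.34}, via a routine three-$\varepsilon$ argument: approximate any $\psi\in\mathcal{H}^s$ by some $\psi_n\in C^\infty_{\mathrm{c}}(\Omega)$ in the $\mathcal{H}^s$-norm, control $\mathrm{e}^{\mathrm{i}tA}\psi_n-\mathrm{e}^{\mathrm{i}t_0A}\psi_n\to0$ by direct inspection of \eqref{eq:12.6.7.1.10} (differentiating in $t$ produces $\mathcal{H}^s$-bounded terms by the same gradient computation above), and close up by the uniform norm bound. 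Interpolation then transfers strong continuity to all $s\in[-1,1]$.

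\noindent\textbf{Expected main obstacle.} The only nontrivial step is the explicit gradient estimate that transforms the chain-rule contribution into an $\mathcal{H}^1$-bound; everything hinges on Lemma~\ref{lem:12.6.6.20.28} and on the bounds \eqref{eq:10.9.2.23.19}. The duality/interpolation and strong-continuity parts are mechanical once this estimate is in place.
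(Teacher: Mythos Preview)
Your proposal is correct and follows essentially the same approach as the paper: establish the $s=1$ bound by differentiating the explicit formula \eqref{eq:12.6.7.1.10} on $C^\infty_{\mathrm{c}}(\Omega)$ and invoking Lemma~\ref{lem:12.6.6.20.28} together with the bounds \eqref{eq:10.9.2.23.19}, then pass to $s=-1$ by duality and to intermediate $s$ by interpolation, with strong continuity handled via the dense invariant subspace and the uniform bound. One small remark: the gradient of the exponential prefactor also picks up a flow-derivative factor $\partial_i(\mathrm{e}^{2s}x)^j$ via the chain rule, so Lemma~\ref{lem:12.6.6.20.28} is needed there as well, not only $\sup|\mathrm{d}\Delta r^2|<\infty$; since you already invoke that lemma for the second term this is a cosmetic oversight rather than a gap.
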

\begin{proof}
  Let us first set $s=1$.  For any $\psi \in
  C^\infty_{\mathrm{c}}(\Omega)$ we can compute by
  (\ref{eq:12.6.7.1.10})
  \begin{align}
    \begin{split}
    &p_i(\mathrm{e}^{\mathrm{i}tA}\psi)(x)\\
    &=\left(\int_0^t\tfrac12
      [p_i(\mathrm{e}^{2s}x)^j](\partial_j\Delta
      r^2)(\mathrm{e}^{2s}x)\,\mathrm{d}s \right)
    (\mathrm{e}^{\mathrm{i}tA}\psi)(x)
    +[\partial_i(\mathrm{e}^{2t}x)^j](\mathrm{e}^{\mathrm{i}tA}p_j\psi)(x),
  \end{split}\label{eq:12.7.25.2.39}
  \end{align}
  where we abuse $\mathrm{e}^{\mathrm{i}tA}$ as if it were defined for
  the sections of the cotangent bundle.  Then by
  (\ref{eq:10.9.2.23.19}) and Lemma~\ref{lem:12.6.6.20.28} for any
  $|t|\le T$
  \begin{align*}
    \|\mathrm{e}^{\mathrm{i}tA}\psi\|_{{\mathcal H}^1}^2
    &=\|\psi\|_{{\mathcal H}}^2
    +\|p\mathrm{e}^{\mathrm{i}tA}\psi\|_{{\mathcal H}}^2\\
    &\le \|\psi\|_{{\mathcal H}}^2
    +C_T\|\mathrm{e}^{\mathrm{i}tA}\psi\|_{{\mathcal H}}^2
    +C_T\|\mathrm{e}^{\mathrm{i}tA} p\psi\|_{{\mathcal H}}^2\\
    &\le C_T\|\psi\|_{{\mathcal H}^1}^2.
  \end{align*}
  By the density argument this implies
  $\mathrm{e}^{\mathrm{i}tA}{\mathcal H}^1\subseteq{\mathcal H}^1$,
  and moreover for any $\psi\in{\mathcal H}^1$ and $|t|\le T$
  \begin{align*}
    \|\mathrm{e}^{\mathrm{i}tA}\psi\|_{{\mathcal H}^1}^2\le
    C_T\|\psi\|_{{\mathcal H}^1}^2.
  \end{align*}
  Thus the uniform boundedness principle applies and
  (\ref{eq:12.6.7.2.34}) follows for $s=1$.  
  As for the strong continuity as ${\mathcal H}^1\to{\mathcal H}^1$, 
  we can show it first on $C^\infty_{\mathrm{c}}(\Omega)$ using (\ref{eq:12.7.25.2.39}) and 
  $\partial_i(\mathrm{e}^{2t}x)^j=\delta_i^j$,
  and then extend it by the boundedness.
  We can show the same
  result for $s=-1$ by taking the adjoint, and then the assertion is
  proved for $s\in [-1,1]$ by  interpolation.
\end{proof}

\begin{lemma}\label{lem:12.6.3.18.30}
  There exists $C>0$ such that for any $|t|<1$
  \begin{align*}
    \|H\mathrm{e}^{\mathrm{i}tA}-\mathrm{e}^{\mathrm{i}tA}H\|_{{\mathcal
        B}({\mathcal H}^{1},{\mathcal H}^{-1})}\le C|t|
  \end{align*}
\end{lemma}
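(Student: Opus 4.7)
The plan is to first establish the conjugated variant
\[
\|e^{-itA}He^{itA} - H\|_{{\mathcal B}({\mathcal H}^{1},{\mathcal H}^{-1})} \le C|t|,\qquad |t|<1,
\]
and then to recover the statement of the lemma from the identity $He^{itA} - e^{itA}H = e^{itA}(e^{-itA}He^{itA} - H)$ together with the uniform $s=-1$ bound in Lemma~\ref{lem:12.6.4.13.14}.

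First I would verify that $e^{isA}$ preserves $C^\infty_{\mathrm{c}}(\Omega)$. By the completeness assumed in Condition~\ref{cond:12.6.2.21.13}(\ref{item:12.4.7.19.40}), $\Omega\ni x\mapsto e^{2s}x\in\Omega$ is a smooth diffeomorphism for each $s\in\mathbb{R}$, so formula (\ref{eq:12.6.7.1.10}) shows that $e^{isA}$ maps $C^\infty_{\mathrm{c}}(\Omega)$ into itself, and that the support of $e^{isA}\phi$ stays in a compact subset of $\Omega$ uniformly for $|s|$ in a bounded interval. It follows that for $\phi\in C^\infty_{\mathrm{c}}(\Omega)$ the curve $s\mapsto e^{isA}\phi$ is a smooth ${\mathcal H}^1$-valued function with derivative $iAe^{isA}\phi\in C^\infty_{\mathrm{c}}(\Omega)\subseteq {\mathcal H}^1$.

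For $\phi,\psi\in C^\infty_{\mathrm{c}}(\Omega)$ I then introduce
\[
F(s) := \langle e^{isA}\phi,\, He^{isA}\psi\rangle,
\]
interpreted via the ${\mathcal H}^{1}\times{\mathcal H}^{-1}$ pairing associated to the bounded bilinear form $H$. Differentiating and collecting terms yields
\[
F'(s) = i\bigl(\langle e^{isA}\phi, HAe^{isA}\psi\rangle - \langle Ae^{isA}\phi, He^{isA}\psi\rangle\bigr),
\]
which is exactly the pairing $i\langle e^{isA}\phi, [H,A]e^{isA}\psi\rangle$ of the commutator form introduced just before Condition~\ref{cond:12.6.4.10.16}. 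Polarizing the bound of Condition~\ref{cond:12.6.4.10.16} and using Lemma~\ref{lem:12.6.4.13.14} to control the ${\mathcal H}^{1}$-norms of $e^{isA}\phi$ and $e^{isA}\psi$ for $|s|<1$, one gets $|F'(s)|\le C\|\phi\|_{{\mathcal H}^{1}}\|\psi\|_{{\mathcal H}^{1}}$ uniformly in such $s$. Integrating from $0$ to $t$ and using
\[
F(t) - F(0) = \langle \phi,\, (e^{-itA}He^{itA} - H)\psi\rangle
\]
(the identity $e^{-itA}=(e^{itA})^{*}$ extended to the pairing via Lemma~\ref{lem:12.6.4.13.14}), together with the density of $C^\infty_{\mathrm{c}}(\Omega)$ in ${\mathcal H}^{1}$, yields the conjugated estimate, and the reduction above closes the argument.

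The main obstacle is justifying the termwise differentiation leading to the expression for $F'(s)$: since $H$ lives only as a bounded bilinear form on ${\mathcal H}^{1}$ and $[H,A]$ only through Condition~\ref{cond:12.6.4.10.16}, one cannot a priori make sense of $AH\psi$ or $HA\psi$ when $\psi$ is a general ${\mathcal H}^{1}$ element. Restricting to $\phi,\psi\in C^\infty_{\mathrm{c}}(\Omega)$ throughout, so that both $Ae^{isA}\phi$ and $Ae^{isA}\psi$ stay in $C^\infty_{\mathrm{c}}(\Omega) \subseteq {\mathcal H}^{1}$, and postponing the passage to arbitrary ${\mathcal H}^{1}$-vectors to the very last step by density, avoids this technicality.
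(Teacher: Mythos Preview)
Your proof is correct and follows essentially the same route as the paper: both use a Duhamel-type integral representation, invoke Condition~\ref{cond:12.6.4.10.16} to bound the integrand $\i[H,A]$, apply Lemma~\ref{lem:12.6.4.13.14} to control the factors $\e^{\i sA}$ on $\mathcal{H}^{\pm 1}$, and extend by density of $C^\infty_{\mathrm c}(\Omega)$ in $\mathcal{H}^1$. The only cosmetic difference is that the paper writes the integral directly for $H\e^{\i tA}-\e^{\i tA}H$ as an operator $C^\infty_{\mathrm c}(\Omega)\to\mathcal{H}^{-1}$, whereas you pass through the conjugated form $\e^{-\i tA}H\e^{\i tA}-H$ via a sesquilinear argument and then multiply back by $\e^{\i tA}$.
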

\begin{proof}
  As a quadratic form on $C^\infty_{\mathrm{c}}(\Omega)$, or as an
  operator $C^\infty_{\mathrm{c}}(\Omega)\to{\mathcal H}^{-1}$, 
  \begin{align*}
    H\mathrm{e}^{\mathrm{i}tA}-\mathrm{e}^{\mathrm{i}tA}H
    &=\int_0^t \tfrac{\mathrm{d}}{\mathrm{d}s} \mathrm{e}^{\mathrm{i}(t-s)A}H\mathrm{e}^{\mathrm{i}sA}\,\mathrm{d}s\\
    &=\int_0^t
    \mathrm{e}^{\mathrm{i}sA}\mathrm{i}[H,A]\mathrm{e}^{\mathrm{i}(t-s)A}\,\mathrm{d}s.
  \end{align*}
  Then by Lemma \ref{lem:12.6.4.13.14} and the density argument of 
  $C^\infty_{\mathrm{c}}(\Omega)\subseteq {\mathcal H}^{1}$ the
  assertion follows.
\end{proof}

\begin{lemma}\label{lem:12.6.4.3.43} 
  The following strong limit to the right exists in ${\mathcal
    B}({\mathcal H}^{1},{\mathcal H}^{-1})$, and the following equality holds
  \begin{align}
    \mathrm{i}[H,A] =\slim_{t\to 0}
    t^{-1}[H\mathrm{e}^{\mathrm{i}tA}-\mathrm{e}^{\mathrm{i}tA}H].
    \label{eq:12.6.4.3.29}
  \end{align}
\end{lemma}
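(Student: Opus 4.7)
The plan is to build on the integral identity already derived in the proof of Lemma~\ref{lem:12.6.3.18.30}, namely
\begin{align*}
    H\mathrm{e}^{\mathrm{i}tA}-\mathrm{e}^{\mathrm{i}tA}H
    =\int_0^t
    \mathrm{e}^{\mathrm{i}sA}\,\mathrm{i}[H,A]\,\mathrm{e}^{\mathrm{i}(t-s)A}\,\mathrm{d}s
\end{align*}
as an operator identity in $\mathcal{B}({\mathcal H}^{1},{\mathcal H}^{-1})$ (valid after extending by density from $C^\infty_{\mathrm{c}}(\Omega)$, where the identity holds by a calculus argument). Dividing by $t$ and subtracting $\mathrm{i}[H,A]$ it suffices to prove
\begin{align*}
    t^{-1}\int_0^t \bigl(\mathrm{e}^{\mathrm{i}sA}\,\mathrm{i}[H,A]\,\mathrm{e}^{\mathrm{i}(t-s)A}-\mathrm{i}[H,A]\bigr)\psi\,\mathrm{d}s\;\longrightarrow\;0 \quad\text{in }{\mathcal H}^{-1}
\end{align*}
for every fixed $\psi\in {\mathcal H}^{1}$ as $t\to 0$.

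For this I would appeal to three ingredients, all available from Lemma~\ref{lem:12.6.4.13.14} and Condition~\ref{cond:12.6.4.10.16}: (i) $\mathrm{i}[H,A]\in \mathcal{B}({\mathcal H}^{1},{\mathcal H}^{-1})$; (ii) $\{\mathrm{e}^{\mathrm{i}uA}\}$ is uniformly bounded on ${\mathcal H}^{\pm 1}$ for $|u|<1$; and (iii) $\mathrm{e}^{\mathrm{i}uA}$ is strongly continuous on ${\mathcal H}^{1}$ and on ${\mathcal H}^{-1}$. For fixed $\psi\in{\mathcal H}^{1}$, (iii) gives $\mathrm{e}^{\mathrm{i}(t-s)A}\psi\to\psi$ in ${\mathcal H}^{1}$ as $s,t-s\to 0$; then (i) gives $\mathrm{i}[H,A]\mathrm{e}^{\mathrm{i}(t-s)A}\psi\to\mathrm{i}[H,A]\psi$ in ${\mathcal H}^{-1}$; and a further application of (iii) on ${\mathcal H}^{-1}$ yields $\mathrm{e}^{\mathrm{i}sA}\mathrm{i}[H,A]\mathrm{e}^{\mathrm{i}(t-s)A}\psi\to \mathrm{i}[H,A]\psi$ in ${\mathcal H}^{-1}$ whenever $s,t-s\to 0$.

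Using (i) and (ii) the integrand is bounded uniformly in $s\in[0,t]$ and $|t|<1$ by $C\|\mathrm{i}[H,A]\|_{\mathcal{B}({\mathcal H}^{1},{\mathcal H}^{-1})}\|\psi\|_{{\mathcal H}^{1}}$, which is an integrable dominating constant on $[0,1]$. An elementary averaging argument (or Lebesgue's dominated convergence applied to the rescaled variable $s/t\in[0,1]$) then gives the required convergence to $\mathrm{i}[H,A]\psi$ in ${\mathcal H}^{-1}$, proving \eqref{eq:12.6.4.3.29}.

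The only subtle point is the strong continuity of $\mathrm{e}^{\mathrm{i}uA}$ on ${\mathcal H}^{-1}$, which is obtained by duality from its strong continuity on ${\mathcal H}^{1}$ (established in Lemma~\ref{lem:12.6.4.13.14}) together with uniform boundedness, and the verification that the pointwise convergence of the integrand is uniform in $s$ on $[0,t]$ up to an arbitrarily small error — this follows because both $s$ and $t-s$ lie in $[0,t]$, so a single modulus of continuity controls both simultaneously. Everything else is a direct application of the already established mapping and continuity properties.
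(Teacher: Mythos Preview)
Your proof is correct and follows essentially the same route as the paper: both use the integral identity from Lemma~\ref{lem:12.6.3.18.30} and the strong continuity/uniform boundedness of $\mathrm{e}^{\mathrm{i}tA}$ on $\mathcal H^{\pm 1}$ from Lemma~\ref{lem:12.6.4.13.14}. The only cosmetic difference is that the paper first verifies \eqref{eq:12.6.4.3.29} on $C^\infty_{\mathrm c}(\Omega)$ and then passes to $\mathcal H^1$ by density using the uniform bound of Lemma~\ref{lem:12.6.3.18.30}, whereas you work directly with $\psi\in\mathcal H^1$; also note that strong continuity on $\mathcal H^{-1}$ is already part of Lemma~\ref{lem:12.6.4.13.14}, so your duality remark is unnecessary (and, strictly speaking, duality alone yields only weak-$*$ continuity---one still needs density plus uniform boundedness to upgrade to strong continuity).
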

\begin{proof}
  For any $\psi\in C^\infty_{\mathrm{c}}(\Omega)$
  \begin{align*}
    t^{-1}(H\mathrm{e}^{\mathrm{i}tA}-\mathrm{e}^{-\mathrm{i}tA}H)\psi-\mathrm{i}[H,A]\psi
    =t^{-1}\int_0^t \bigl\{\mathrm{e}^{\mathrm{i}sA}\mathrm{i}[H,A]\mathrm{e}^{\mathrm{i}(t-s)A}-\mathrm{i}[H,A]\bigr\}
     \psi\,\mathrm{d}s.
  \end{align*}
  We use the strong continuity of $\mathrm{e}^{\mathrm{i}tA}$ of Lemma~\ref{lem:12.6.4.13.14} to obtain 
  (\ref{eq:12.6.4.3.29}) on $C^\infty_{\mathrm{c}}(\Omega)$.
  Then by Lemma~\ref{lem:12.6.3.18.30} and the density
  argument, the strong limit of (\ref{eq:12.6.4.3.29}) exists in
  ${\mathcal B}({\mathcal H}^{1},{\mathcal H}^{-1})$.  
\end{proof}

The following lemma is a main goal of this appendix:
\begin{lemma}
  Let $\psi\in {\mathcal D}(H)$ satisfy $\mathrm{e}^{\sigma
    r}\psi,\mathrm{e}^{\sigma r}H\psi\in {\mathcal H}$ for any
  $\sigma\ge 0$.  Then,
  \begin{align*}
    \langle \mathrm{i}[H,A]\rangle_{\mathrm{e}^{\sigma r}\chi_\nu\psi}
    =\mathrm{i}\langle H\mathrm{e}^{\sigma r}\chi_\nu\psi,
    A\mathrm{e}^{\sigma r}\chi_\nu\psi\rangle - \mathrm{i}\langle
    A\mathrm{e}^{\sigma r}\chi_\nu\psi, H\mathrm{e}^{\sigma
      r}\chi_\nu\psi\rangle.
  \end{align*}
\end{lemma}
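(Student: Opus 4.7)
The plan is to write $\phi:=\mathrm{e}^{\sigma r}\chi_\nu\psi$ and show that $\phi\in \mathcal{D}(H)\cap \mathcal{D}(A)$, which reduces the question to a routine use of self-adjointness of $H$ together with the strong-limit characterisation of $\mathrm{i}[H,A]$ from Lemma~\ref{lem:12.6.4.3.43}. All the nontrivial analysis, namely the dilation-invariance of $\mathcal{H}^{\pm 1}$ (Lemma~\ref{lem:12.6.4.13.14}), the Lipschitz estimate for the commutator (Lemma~\ref{lem:12.6.3.18.30}), the stability result (Lemma~\ref{lem:11.7.19.22.23} and Corollary~\ref{cor:12.6.9.11.44}), has already been performed.

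First, invoking Corollary~\ref{cor:12.6.9.11.44} (which applies by the hypothesis $\mathrm{e}^{\sigma r}\psi,\mathrm{e}^{\sigma r}H\psi\in\mathcal{H}$ for all $\sigma\ge 0$) and taking $\nu$ sufficiently large, I have $\phi\in \mathcal{D}(H)\cap \mathcal{D}(A)$. In particular $\phi\in \mathcal{H}^1=Q(H)=Q(\mathrm{i}[H,A])$, so the left-hand side $\langle \mathrm{i}[H,A]\rangle_\phi$ is well-defined as a pairing $\mathcal{H}^1\times \mathcal{H}^1\to\mathbb{C}$ coming from $\mathrm{i}[H,A]\colon \mathcal{H}^1\to \mathcal{H}^{-1}$.

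Next I apply Lemma~\ref{lem:12.6.4.3.43}, which yields in $\mathcal{B}(\mathcal{H}^1,\mathcal{H}^{-1})$
\begin{align*}
\langle \mathrm{i}[H,A]\rangle_\phi
=\lim_{t\to 0} t^{-1}\bigl\langle \phi,(H\mathrm{e}^{\mathrm{i}tA}-\mathrm{e}^{\mathrm{i}tA}H)\phi\bigr\rangle.
\end{align*}
Since $\phi\in\mathcal{D}(H)$, the self-adjointness of $H$ together with $\mathrm{e}^{\mathrm{i}tA}\phi\in\mathcal{H}^1$ (Lemma~\ref{lem:12.6.4.13.14}) lets me undo the action of $H$ on the two terms separately:
\begin{align*}
\langle \phi,H\mathrm{e}^{\mathrm{i}tA}\phi\rangle=\langle H\phi,\mathrm{e}^{\mathrm{i}tA}\phi\rangle,\qquad
\langle \phi,\mathrm{e}^{\mathrm{i}tA}H\phi\rangle=\langle \mathrm{e}^{-\mathrm{i}tA}\phi,H\phi\rangle.
\end{align*}
Using $\phi\in \mathcal{D}(A)$, the quotients $t^{-1}(\mathrm{e}^{\mathrm{i}tA}\phi-\phi)\to \mathrm{i}A\phi$ and $t^{-1}(\mathrm{e}^{-\mathrm{i}tA}\phi-\phi)\to -\mathrm{i}A\phi$ converge in $\mathcal{H}$, while $H\phi\in\mathcal{H}$. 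Passing to the limit in the ordinary $\mathcal{H}$-inner product therefore gives
\begin{align*}
\langle \mathrm{i}[H,A]\rangle_\phi
&=\lim_{t\to 0}t^{-1}\bigl[\langle H\phi,\mathrm{e}^{\mathrm{i}tA}\phi-\phi\rangle-\langle \mathrm{e}^{-\mathrm{i}tA}\phi-\phi,H\phi\rangle\bigr]\\
&=\mathrm{i}\langle H\phi,A\phi\rangle-\mathrm{i}\langle A\phi,H\phi\rangle,
\end{align*}
which is the desired identity.

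Since all the subtle points (definition of $A$ on $\mathcal{D}(H)$-type subspaces, invariance under the dilation group, stability of $\mathcal{D}(H)$ under multiplication by $\mathrm{e}^{\sigma r}\chi_\nu$) have been established previously, there is no real obstacle beyond correctly invoking them. The only point requiring attention is ensuring that when $H$ is moved to the other side of the inner product the relevant vectors are in the right spaces, which is guaranteed by $\phi\in \mathcal{D}(H)$ and $\mathrm{e}^{\pm\mathrm{i}tA}\phi\in \mathcal{H}^1$.
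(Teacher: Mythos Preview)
Your proof is correct and follows essentially the same approach as the paper: invoke Corollary~\ref{cor:12.6.9.11.44} to get $\phi\in\mathcal{D}(H)\cap\mathcal{D}(A)$, use Lemma~\ref{lem:12.6.4.3.43} to write the commutator expectation as a limit of difference quotients, and then pass to the limit using $H\phi\in\mathcal{H}$ and $\phi\in\mathcal{D}(A)$. You have simply spelled out the final limiting step in more detail than the paper does.
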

\begin{proof}
  We note $\mathrm{e}^{\sigma r}\chi_\nu\psi\in {\mathcal D}(H)\cap
  {\mathcal D}(A)$ by Corollary~\ref{cor:12.6.9.11.44}.  Then, by
  Lemma~\ref{lem:12.6.4.3.43}
  \begin{align*}
    \langle \mathrm{i}[H,A]\rangle_{\mathrm{e}^{\sigma r}\chi_\nu\psi}
    &=\lim_{t\to 0}
    \langle t^{-1}[H\mathrm{e}^{\mathrm{i}tA}-\mathrm{e}^{\mathrm{i}tA}H]\rangle_{\mathrm{e}^{\sigma r}\chi_\nu\psi}\\
    &=\mathrm{i}\langle H\mathrm{e}^{\sigma r}\chi_\nu\psi,
    A\mathrm{e}^{\sigma r}\chi_\nu\psi\rangle - \mathrm{i}\langle
    A\mathrm{e}^{\sigma r}\chi_\nu\psi, H\mathrm{e}^{\sigma
      r}\chi_\nu\psi\rangle.
  \end{align*}
\end{proof}

Finally we examine \eqref{eq:banu_stable} and \eqref{eq:banu_stableb}.
It is easy to prove \eqref{eq:banu_stable} by Lemma~\ref{lem:12.6.4.13.14} for $s=1$ combined with 
the smallness of $V$, cf.\ Condition~\ref{cond:12.6.2.20.52}.
The bound \eqref{eq:banu_stableb} is already obtained by the explicit formula $\|p_r\mathrm{e}^{\mathrm{i}tA}\psi\|
=\mathrm{e}^{2t}\|p_r\psi\|$, but we show the corresponding slightly general statement under the 
setting of the appendix for the quadratic form  $P_r:=\{p^r+(p^r)^*\}^2$.
We consider $P_r$ as a closed form with 
domain  $Q(P_r)=\vD(p^r)$.
\begin{lemma}\label{lem:12.6.4.13.14b}
  The inclusion
  $\mathrm{e}^{\mathrm{i}tA}Q(P_r)\subseteq Q(P_r)$
  holds, and
  \begin{align}
    \sup_{|t|<1} \|\mathrm{e}^{\mathrm{i}tA}\|_{{\mathcal B}(Q(P_r))}<\infty.
    \label{eq:12.6.7.2.34b}
  \end{align}
\end{lemma}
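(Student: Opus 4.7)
The plan is to adapt the $s=1$ part of the proof of Lemma \ref{lem:12.6.4.13.14}, replacing $p_i$ by the scalar derivative $p^r=-\mathrm{i}\partial^r$. For $\psi\in C^\infty_{\mathrm{c}}(\Omega)$, differentiating (\ref{eq:12.6.7.1.10}) along $\nabla r$ (equivalently, contracting (\ref{eq:12.7.25.2.39}) with $(\partial_i r)g^{ij}$) yields a decomposition
\begin{align*}
  p^r(\mathrm{e}^{\mathrm{i}tA}\psi)(x) = b(t,x)(\mathrm{e}^{\mathrm{i}tA}\psi)(x) + c^k(t,x)(\mathrm{e}^{\mathrm{i}tA}p_k\psi)(x),
\end{align*}
where $c^k(t,x)=(\partial_i r)(x)g^{ij}(x)[\partial_j(\mathrm{e}^{2t}x)^k]$ is the $k$-component at $\mathrm{e}^{2t}x$ of the pushforward of the vector field $\nabla r$ under the flow, and $b$ collects Jacobian contributions and is uniformly bounded on $[-T,T]\times\Omega$ by (\ref{eq:10.9.2.23.19}) together with Lemma~\ref{lem:12.6.6.20.28}.

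The geometric heart of the argument is the identity
\begin{align*}
  c^k(t,x) = \mu(t,x)\,g^{kl}(\mathrm{e}^{2t}x)(\partial_l r)(\mathrm{e}^{2t}x),
\end{align*}
with a scalar factor $\mu$ bounded uniformly for $|t|\le T$. This is a consequence of the Lie bracket computation $[\omega,\nabla r]=-|\mathrm{d}r|^2\nabla r$ with $\omega=r\nabla r$: since $\omega$ is itself parallel to $\nabla r$, the pushforward of $\nabla r$ evolves proportionally to $\nabla r$ along the flow, the scalar coefficient satisfying a linear ODE driven by the bounded function $-|\mathrm{d}r|^2$. Plugging back and applying (\ref{eq:12.6.7.1.10}) once more turns the second term into $\mu(t,x)(\mathrm{e}^{\mathrm{i}tA}p^r\psi)(x)$, giving
\begin{align*}
  \|p^r\mathrm{e}^{\mathrm{i}tA}\psi\|\le C_T\bigl(\|\psi\|+\|p^r\psi\|\bigr),\quad \psi\in C^\infty_{\mathrm{c}}(\Omega),\ |t|\le T.
\end{align*}
In the Euclidean setting of Subsection~\ref{sec:CaseOmegarmc} this degenerates to $\mu\equiv\mathrm{e}^{2t}$, $b\equiv 0$, recovering the sharp identity $\|p_r\mathrm{e}^{\mathrm{i}tA}\psi\|=\mathrm{e}^{2t}\|p_r\psi\|$ cited in the text.

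The same bound propagates to $(p^r)^*$, and hence to $p^r+(p^r)^*$, since the difference $(p^r)^*-p^r=\mathrm{i}\Delta r$ is absorbed in the graph norm of the Friedrichs self-adjoint extension through which $Q(P_r)=\vD(p^r)$ is defined. By density of $C^\infty_{\mathrm{c}}(\Omega)$ in $Q(P_r)$ in its graph norm the estimate extends to all $\psi\in Q(P_r)$, proving $\mathrm{e}^{\mathrm{i}tA}Q(P_r)\subseteq Q(P_r)$ with a pointwise-in-$t$ bound; the uniform boundedness principle, applied as at the end of the proof of Lemma~\ref{lem:12.6.4.13.14}, then produces (\ref{eq:12.6.7.2.34b}). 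The principal obstacle is the pushforward identity for $c^k$: while trivial in the Euclidean case, in the general Riemannian setting it relies on the rather special circumstance that the flow generator $\omega$ is parallel to $\nabla r$, which is precisely what makes the Lie-bracket computation yield a scalar multiple rather than a mixture of directions.
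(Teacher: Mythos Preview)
Your approach is correct and arrives at the same key identity as the paper, but via a different packaging. The paper's proof exploits the algebraic relation $A=r(p^r+(p^r)^*)+\tfrac1{\i}|\d r|^2$ together with the intertwining $A\e^{\i tA}=\e^{\i tA}A$; this yields an identity for $r(x)\bigl((p^r+(p^r)^*)\e^{\i tA}\psi\bigr)(x)$ directly, after which one divides by $r(x)$ using the flow formula $r(\e^{2t}x)=r(x)\exp\bigl(2\int_0^t|\d r|^2(\e^{2s}x)\,\d s\bigr)$. Your route---contracting \eqref{eq:12.7.25.2.39} with $\nabla r$ and invoking the Lie bracket identity $[\omega,\nabla r]=-|\d r|^2\nabla r$ to see that the pushforward of $\nabla r$ remains a scalar multiple of $\nabla r$---is the geometric counterpart of the same fact: the flow of $\omega=r\nabla r$ moves points along integral curves of $\nabla r$. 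Your scalar $\mu$ coincides exactly with the paper's ratio $r(\e^{2t}x)/r(x)$.

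The one step that deserves more care is the passage from the bound on $p^r\e^{\i tA}\psi$ to one on $(p^r+(p^r)^*)\e^{\i tA}\psi$. You write that $(p^r)^*-p^r=\pm\i\Delta r$ is ``absorbed in the graph norm,'' but this requires $\Delta r$ bounded, which is not a direct consequence of Condition~\ref{cond:12.6.2.21.13} alone (only $r\Delta r$ is bounded). In the concrete applications $\Delta r$ \emph{is} bounded, so your argument goes through there; the paper's choice to work with the symmetric combination $(p^r+(p^r)^*)$ from the outset sidesteps this issue entirely, never separating the symmetric and antisymmetric parts. If you want your version to match the paper's level of generality, simply run the same pushforward computation for $p^r+(p^r)^*=2p^r-\i\Delta r$ as a whole: the extra $\Delta r$ terms combine with your $\mu$ to produce exactly the paper's $\tfrac1{r(x)}\bigl[(r\Delta r)(\e^{2t}x)-(r\Delta r)(x)\bigr]$, which is controlled by the bounded quantity $r\partial^r|\d r|^2$ together with $\partial^r\Delta r^2$.
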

\begin{proof}
We first note that 
\begin{align*}
\partial_t r(\mathrm{e}^{2t}x)=2r(\mathrm{e}^{2t}x) (\partial^rr)(\mathrm{e}^{2t}x)
=2r(\mathrm{e}^{2t}x) |\mathrm{d}r(\mathrm{e}^{2t}x)|^2,
\end{align*}
and this implies
\begin{align}
r(\mathrm{e}^{2t}x)=r(x)\,\exp \left(2\int_0^t |\mathrm{d}r(\mathrm{e}^{2s}x)|^2\,\mathrm{d}s\right).
\label{eq:12.6.27.0.41}
\end{align}
Let $\psi\in C^\infty_{\mathrm{c}}(M)$.
Then, by $A\mathrm{e}^{\mathrm{i}tA}\psi=\mathrm{e}^{\mathrm{i}tA}A\psi$
and $A=r(p^r+(p^r)^*)+\tfrac1{\mathrm{i}}|\mathrm{d}r|^2$,
we can compute
\begin{align*}
&r(x)((p^r+(p^r)^*)\mathrm{e}^{\mathrm{i}tA}\psi)(x)\\
&=(\mathrm{e}^{\mathrm{i}tA}r(p^r+(p^r)^*)\psi)(x)
+\tfrac{1}{\mathrm{i}}(|\mathrm{d}r(\mathrm{e}^{2t}x)|^2-|\mathrm{d}r(x)|^2)(\mathrm{e}^{\mathrm{i}tA}\psi)(x)\\
&=r(\mathrm{e}^{2t}x)(\mathrm{e}^{\mathrm{i}tA}(p^r+(p^r)^*)\psi)(x)
+\tfrac{1}{\mathrm{i}}\left[\int_0^t 
\partial_s|\mathrm{d}r(\mathrm{e}^{2s}x)|^2\,\mathrm{d}s\right] (\mathrm{e}^{\mathrm{i}tA}\psi)(x)\\
&=r(\mathrm{e}^{2t}x)(\mathrm{e}^{\mathrm{i}tA}(p^r+(p^r)^*)\psi)(x)
+\tfrac{2}{\mathrm{i}}\left[\int_0^t 
r(\mathrm{e}^{2s}x)(\partial^r|\mathrm{d}r|^2)(\mathrm{e}^{2s}x)\,\mathrm{d}s\right]
(\mathrm{e}^{\mathrm{i}tA}\psi)(x),
\end{align*}
so that we obtain using (\ref{eq:12.6.27.0.41})
\begin{align}
|((p^r+(p^r)^*)\mathrm{e}^{\mathrm{i}tA}\psi)(x)|
\le C_T\bigl[|(\mathrm{e}^{\mathrm{i}tA}(p^r+(p^r)^*)\psi)(x)|
+|(\mathrm{e}^{\mathrm{i}tA}\psi)(x)|\bigr]
\label{eq:12.6.27.0.47}
\end{align}
for $|t|<T$ and $x\notin r^{-1}(0)$.
For $x\in r^{-1}(0)$ it is obvious that $\mathrm{e}^{2t}x=x$, and 
hence (\ref{eq:12.6.27.0.47}) holds also in the interior of the closed set $r^{-1}(0)$.
Since both sides of (\ref{eq:12.6.27.0.47}) are continuous, we have (\ref{eq:12.6.27.0.47})
in $\Omega$ and it follows that 
\begin{align*}
\|\mathrm{e}^{\mathrm{i}tA}\psi\|_{Q(P_r)}^2\le C_T\|\psi\|_{Q(P_r)}^2.
\end{align*}
By the density argument and the uniform boundedness theorem the assertion follows.
\end{proof}

\section{}\label{sec:Appendix} 
In this appendix we introduce the notion of strictly convexity of an
obstacle and derive the geometric properties needed for the one-body
type model considered in Subsection \ref{subsec: Setting of problem}.

Let $\Theta\subset \mathbb{R}^d$, $d\ge 2$, be
a bounded open set, denote its
closure by $\overline\Theta$, and set $\Omega=\mathbb{R}^d\setminus
\overline{\Theta}$.  The goal of these short notes is to give a
criterion for the existence of a function $r\in C^\infty(\Omega)$ such
that for some $c>0$ 
\begin{subequations}
  \begin{align}
    |\nabla r|&=1\mbox{ in }\Omega,\label{eq:12.5.27.14.13}\\
    (\nabla^2 r)_{|S_r}&\ge c\langle r\rangle^{-1}g_{|S_r},\label{eq:12.5.27.14.14}\\
    |\partial^\alpha r|&\le C_\alpha \langle
    r\rangle^{1-|\alpha|},\label{eq:12.5.27.14.15}
  \end{align}
\end{subequations}
where $g$ is the Euclidean metric,
$S_{r}=r^{-1}(r)$ is the level surface and $g_{|S_r}$ is the
pull-back of $g$ to $S_r$. Note that $S_r$ is
smooth by (\ref{eq:12.5.27.14.13}).  
\begin{comment}
If $d=1$, the distance function
$r(x)=\mathop{\mathrm{dist}}(x,\Theta)$, $x\in \Omega$, automatically
satisfies (\ref{eq:12.5.27.14.13})--(\ref{eq:12.5.27.14.15}) without
any further conditions.  Hence we consider only $d\ge 2$.  
\end{comment}
 We impose
the following convexity type condition for $\Theta$.  Note that the
inequality (\ref{eq:12.5.27.14.14}) represents the convexity of $r$.

\begin{cond}\label{cond:12.5.27.15.10}
  Let $\Theta\subset \mathbb{R}^d$, $d\ge 2$, be an open connected
  subset with smooth boundary $S=\partial \Theta$, and
  $\nu\in\Gamma(N^+S)$ be the outward unit normal vector field on $S$.
  There exists a constant $\kappa> 0$ such that
  \begin{align}
    (\nabla \nu)_{|S}\ge \kappa g_{|S}.
    \label{eq:12.5.27.15.29}
  \end{align}
\end{cond}

A subset $\Theta\subset \mathbb{R}^d$ fulfilling Condition
\ref{cond:12.5.27.15.10} is called {\it strictly convex}. We show in
Lemma \ref{lemma:convex} that such set is convex.

\begin{remarks}
  \begin{enumerate}
  \item The symmetric tensor $(\nabla \nu)_{|S}$ is called the
    \textit{second fundamental form} of $S$, and its eigenvalues
    relative to $g_{|S}$ are called the \textit{principal curvatures}.
    Hence (\ref{eq:12.5.27.15.29}) implies that the principal
    curvatures are bounded below by $\kappa> 0$.  For these notions we
    refer to \cite[Section II.2]{C},
    although we adopt different signs for them.

  \item The Bonnet-Mayers theorem, cf. 
    \cite[Theorem II.6.1]{C}, applies to $S$, and, in particular, $S$
    is compact.  In fact, the sectional curvatures are the products of
    two principal curvatures when the hypersurface under consideration
    is embedded into the Euclidean space, cf. \cite[Theorem II.2.1]{C}, and thus the Ricci
    curvature is bounded below by a positive constant.  See also
    \cite[Exercise II.6]{C}.  By the same fact we can rewrite
    (\ref{eq:12.5.27.15.29}) in terms of the intrinsic sectional
    curvatures of $S$, assuming the compactness of $S$.  We note,
    however, that for these arguments we need $d\ge 3$ to have
    nontrivial sectional curvatures.

  \end{enumerate}
\end{remarks}

\begin{proposition}\label{thm:12.5.27.16.50}
  Suppose $\Theta\subset \mathbb{R}^d$ is strictly convex.  Then the
  distance function $r(x)=\mathop{\mathrm{dist}}(x,\Theta)$, $x\in
  \Omega$, satisfies
  (\ref{eq:12.5.27.14.13})--(\ref{eq:12.5.27.14.15}).
\end{proposition}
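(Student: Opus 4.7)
The plan is to parametrize $\Omega$ by outward normal rays emanating from $S=\partial\Theta$ and then read off the three geometric properties of $r$ from this tube-coordinate picture.

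First I would show that the map $\Phi\colon S\times[0,\infty)\to\overline\Omega$, $\Phi(p,t)=p+t\nu(p)$, is a diffeomorphism. Surjectivity follows from the compactness of $S$: every $x\in\overline\Omega$ admits a closest point $p\in S$, and the first-order minimization condition forces $x-p$ to lie in the positive normal direction at $p$, so $x=p+r(x)\nu(p)$. Injectivity is a consequence of the convexity of $\Theta$, which by Lemma~\ref{lemma:convex} follows from Condition~\ref{cond:12.5.27.15.10}, ensuring that the closest point is unique. The differential of $\Phi$ at $(p,t)$ sends a tangential variation $v\in T_pS$ to $(I+tW_0)v$, where $W_0=(\nabla\nu)_{|S}$, and sends the $t$-direction to $\nu$; since $W_0\ge\kappa I$ on $T_pS$, the block $I+tW_0$ is positive definite for all $t\ge 0$ and $D\Phi$ is an isomorphism. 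Hence $\Phi^{-1}$ is smooth, and the identity $r\circ\Phi(p,t)=t$ gives $r\in C^\infty(\overline\Omega)$.

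Next I would read off the pointwise identities from this parametrization. Writing $\pi(x)$ for the nearest point in $S$, a direct computation with $\Phi$ shows $\nabla r(x)=\nu(\pi(x))$, a unit vector, giving \eqref{eq:12.5.27.14.13}. To obtain the Hessian, differentiate $\nabla r=\nu\circ\pi$. Along the ray $s\mapsto p+s\nu(p)$ the projection $\pi$ is constant, so $\nabla^2 r$ vanishes in the normal direction. On the tangent plane $T_xS_r$ the pushforward $\Phi_{t,*}\colon T_pS\to T_xS_r$ acts as $I+tW_0$, so for $X=(I+tW_0)v\in T_xS_r$ one finds
\begin{equation*}
  \nabla_X(\nu\circ\pi)(x)=W_0 v=W_0(I+tW_0)^{-1}X,
\end{equation*}
which is precisely the shape operator $W_r$ of $S_r$. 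Since $W_0$ is symmetric with eigenvalues bounded below by $\kappa$, the eigenvalues of $W_r$ on $T_xS_r$ are $\lambda_i/(1+t\lambda_i)\ge \kappa/(1+t\kappa)\ge c\langle r\rangle^{-1}$, giving \eqref{eq:12.5.27.14.14}.

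Finally, for \eqref{eq:12.5.27.14.15} I would argue by induction on $|\alpha|$. The cases $|\alpha|\le 2$ are already handled, since $\nabla^2 r=W_0(I+tW_0)^{-1}$ (extended by zero along $\nu$) is $O(\langle r\rangle^{-1})$. For the inductive step I would express higher $x$-derivatives of $r$ by repeatedly differentiating this matrix field: a normal derivative $\partial^\nu$ produces an additional factor $W_0(I+tW_0)^{-1}=O(\langle r\rangle^{-1})$, while a tangential derivative brings in bounded factors coming from the smoothness of $W_0$ on the compact $S$, together with a conversion matrix $(I+tW_0)^{-1}$ between the $(p,t)$-frame and the Cartesian frame. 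The main obstacle will be the bookkeeping of this last conversion: one must verify inductively that all iterated derivatives of $(I+tW_0)^{-1}$ carry exactly the powers of $\langle r\rangle^{-1}$ required, which is straightforward once one notices that each $t$-derivative produces the extra factor $W_0(I+tW_0)^{-1}=O(\langle r\rangle^{-1})$, and each tangential derivative acts only on the smooth, bounded quantity $W_0$ on $S$.
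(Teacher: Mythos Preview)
Your proposal is correct and follows essentially the same route as the paper: both set up tube coordinates via the outward normal map $\Phi(p,t)=p+t\nu(p)$, show it is a diffeomorphism using strict convexity, and then compute the shape operator of $S_r$ as $W_0(I+rW_0)^{-1}$ to obtain \eqref{eq:12.5.27.14.14}, leaving \eqref{eq:12.5.27.14.15} as a routine computation in these coordinates. The only cosmetic difference is that the paper phrases the differential of $\Phi$ in the language of Jacobi fields, but since the ambient curvature vanishes the Jacobi equation degenerates to $(c_\alpha^\beta)''=0$ and one recovers exactly your linear map $I+tW_0$.
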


In the sequel we prove Proposition~\ref{thm:12.5.27.16.50}.  We first
show the convexity of $\Theta$.
\begin{lemma}\label{lem:12.5.29.1.20}
  Let $x\in S$ and set $\tilde{S}_x=\exp [(TS)_x]$.  Then there exists
  a neighborhood $U$ of $x$ in $\mathbb{R}^d$ such that
  $\overline{\Theta}\cap \tilde{S}_x\cap U=\{x\}$.
\end{lemma}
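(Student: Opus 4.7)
The plan is to reduce the lemma to a standard local supporting-hyperplane argument via a graph representation. Since $\mathbb{R}^d$ is equipped with the Euclidean metric its exponential map is affine, so $\tilde S_x = x + T_xS$ is the affine tangent hyperplane at $x$. First I would choose orthonormal coordinates $(z', z_d) \in \mathbb{R}^{d-1} \times \mathbb{R}$ centred at $x$ with $T_xS = \{z_d = 0\}$ and $\nu(x) = e_d$; under this choice $\tilde S_x$ corresponds to $\{z_d = 0\}$.

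By smoothness of $S$ and the implicit function theorem there is a neighbourhood $U' \subset \mathbb{R}^{d-1}$ of $0$ and a smooth $f : U' \to \mathbb{R}$ with $f(0) = 0$ and $\nabla f(0) = 0$ such that $S$ near $0$ is the graph $\{(z', f(z')) : z' \in U'\}$. The outward unit normal along this graph is $\nu = (-\nabla f, 1)/\sqrt{1+|\nabla f|^2}$, and a direct computation at $z' = 0$ (where the normalising factor has vanishing first derivative) yields $(\nabla \nu)|_S(e_i,e_j) = -\partial_i\partial_j f(0)$, i.e.\ $(\nabla\nu)|_S = -\operatorname{Hess} f(0)$ under the identification $T_0S \cong \mathbb{R}^{d-1}$. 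Condition~\ref{cond:12.5.27.15.10} then forces $\operatorname{Hess} f(0) \leq -\kappa I$, and Taylor's theorem (after shrinking $U'$) gives $f(z') \leq -\tfrac{\kappa}{4}|z'|^2$, so $f(z') < 0$ for every $z' \in U' \setminus \{0\}$.

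To conclude, I would use that $\nu$ is the outward unit normal to $\Theta$: on a sufficiently small Euclidean neighbourhood $U$ of $x$ the set $\overline \Theta$ lies on the $\{z_d \leq f(z')\}$ side of the graph, since along the normal line $t \mapsto x + t\nu(x)$ small positive $t$ lies in $\Omega$ and small negative $t$ lies in $\Theta$, and continuity propagates this separation to all of $U$. Intersecting with $\tilde S_x \cap U = \{z_d = 0\} \cap U$ then gives $\{z' \in U' : 0 \leq f(z')\}$, which by the quadratic bound above is exactly $\{0\} = \{x\}$.

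The main content is the Hessian computation linking the second fundamental form with $-\operatorname{Hess} f(0)$ and the resulting quadratic Taylor bound; the local separation of $\Theta$ from $\tilde S_x \setminus \{x\}$ is then essentially automatic. I do not anticipate any genuine obstacle beyond careful bookkeeping of the sign convention relating $(\nabla\nu)|_S$ to $\operatorname{Hess} f(0)$.
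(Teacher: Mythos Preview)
Your argument is correct: the graph representation, the identification $(\nabla\nu)_{|S}=-\operatorname{Hess} f(0)$, and the quadratic Taylor bound are exactly the standard route to this local supporting-hyperplane statement. The paper itself omits the proof entirely and simply refers the reader to \cite[Exercise~II.4]{C}, so you have in fact written out what the authors left as an exercise; your approach is the intended one.
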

This is a sort of \textit{local convexity}. We omit the proof, just
referring to  \cite[Exercise II.4]{C}.
\begin{lemma}\label{lemma:convex}
  For any $x,y\in \Theta$ the geodesic $\gamma_{xy}$ connecting $x$
  and $y$ lies in $\Theta$.
\end{lemma}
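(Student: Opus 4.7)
The plan is to fix an arbitrary $x\in\Theta$ and to prove that the set
\[
T=\{y\in\Theta : [x,y]\subset\Theta\}
\]
equals $\Theta$, where $[x,y]$ denotes the Euclidean line segment (which is the geodesic $\gamma_{xy}$ in $\mathbb{R}^d$). Since $\Theta$ is open, connected, and contains $x\in T$, it will suffice to verify that $T$ is both open and relatively closed in $\Theta$.

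Openness is a standard tubular-neighborhood argument: if $y\in T$, then the compact segment $[x,y]$ is contained in the open set $\Theta$, so admits an $\epsilon$-neighborhood in $\Theta$, from which $[x,y']\subset\Theta$ for every $y'$ with $|y'-y|<\epsilon$.

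The core of the argument, and the main obstacle, is relative closedness. Given $y_n\in T$ with $y_n\to y\in\Theta$, one may assume $v:=y-x\neq 0$ and set $\gamma(t)=(1-t)x+ty$. From $[x,y_n]\subset\overline\Theta$ and closedness of $\overline\Theta$ we get $[x,y]\subset\overline\Theta$, and the task becomes to exclude the existence of an interior point $p=\gamma(t_p)\in S$, $t_p\in(0,1)$. I plan to work in local coordinates $(\tilde y,z)\in T_pS\times\mathbb{R}$ centered at $p$ with $z$ along $\nu(p)$. By Condition~\ref{cond:12.5.27.15.10}, $S$ is locally the graph $z=h(\tilde y)$ of a smooth function with $h(0)=0$, $\nabla h(0)=0$, and $\nabla^2 h(0)$ strictly negative definite, so that on a small neighborhood $U$ of $p$ one has $h\le 0$ and hence $\{z>0\}\cap U\subset\Omega$. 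The proof will then split into two cases. If $\nu(p)\cdot v\neq 0$, then for $t$ on the appropriate side of $t_p$ one finds $\nu(p)\cdot(\gamma(t)-p)>0$, placing $\gamma(t)$ in $\Omega$, in contradiction with $\gamma(t)\in\overline\Theta$. If instead $\nu(p)\cdot v=0$, then the entire line $\gamma$ lies in the affine tangent hyperplane $\tilde S_p$, so $\gamma(t)\in\tilde S_p\cap U$ for $t$ near $t_p$; Lemma~\ref{lem:12.5.29.1.20} together with $\gamma(t)\in\overline\Theta$ then forces $\gamma(t)=p$, contradicting $v\neq 0$.

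The hard part is precisely this closedness step: disposing of the tangential case $\nu(p)\cdot v=0$ genuinely requires the strict positivity of the second fundamental form in Condition~\ref{cond:12.5.27.15.10}, which Lemma~\ref{lem:12.5.29.1.20} conveniently packages as a local intersection statement. Once closedness is established, $T$ is a nonempty clopen subset of the connected set $\Theta$, so $T=\Theta$ and the lemma follows.
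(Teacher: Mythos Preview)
Your argument is correct and follows the same basic strategy as the paper's proof: both are connectedness arguments that ultimately hinge on Lemma~\ref{lem:12.5.29.1.20}. The organization differs, however. You fix one endpoint $x$ and show that the set $T=\{y\in\Theta:[x,y]\subset\Theta\}$ is clopen in the connected set $\Theta$; the paper instead works with the set $\Phi\subset\Theta\times\Theta$ of good pairs, picks a boundary point $(x_0,y_0)\in\partial\Phi$, and argues that the preimage $\gamma_{x_0y_0}^{-1}(\Theta)$ can be arranged to have exactly two components $[0,\tau)\cup(\tau,1]$, which contradicts Lemma~\ref{lem:12.5.29.1.20} at the single touching point. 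Your treatment of the closedness step is more explicit and arguably cleaner: you separate the transversal case $\nu(p)\cdot v\neq 0$, which needs only smoothness of $S$, from the tangential case $\nu(p)\cdot v=0$, which is the only place the strict convexity (via Lemma~\ref{lem:12.5.29.1.20}) is genuinely used. The paper's proof invokes Lemma~\ref{lem:12.5.29.1.20} twice and is terser but less transparent about where the hypothesis enters.
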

\begin{proof}
  Let us assume otherwise, i.e.\ assume that the set
  \begin{align*}
    \Phi=\{(x,y)\in \Theta\times \Theta\,|\, \gamma_{xy}([0,1])\subset
    \Theta\}
  \end{align*}
  does not coincide with $\Theta\times \Theta$.  Since
  $\gamma_{xy}(t)$ is continuously dependent on $(t,x,y)$, it is clear
  that $\Phi$ is open in $\Theta\times \Theta$.  Choose
  $(x_0,y_0)\in \partial\Phi\subset \Theta\times \Theta$, and then by
  definition of $\Phi$ we have a proper open subset
  $\gamma_{x_0y_0}^{-1}(\Theta)\subsetneq [0,1]$.  Since any open
  subset of $[0,1]$ is a union of at most countable number of open
  intervals, we can find points $x_1,y_1\in \gamma_{x_0y_0}([0,1])\cap
  \Theta$ such that the number of component of
  $\gamma_{x_1y_1}^{-1}(\Theta)$ is equal to $2$.  We note we still
  have $(x_1,y_1)\in \partial\Phi$, and hence
  \begin{align}
    \gamma_{x_1y_1}^{-1}(\Theta)=[0,\tau)\cup (\tau,1]
    \label{eq:12.5.29.2.48}
  \end{align}
  for some $\tau\in (0,1)$.  In fact, if
  $\gamma_{x_1y_1}^{-1}(\Theta)=[0,\tau_1)\cup (\tau_2,1]$,
  $0<\tau_1<\tau_2<1$, we have to have
  $\gamma_{x_1y_1}([\tau_1,\tau_2])\subset S$ due to
  $(x_1,y_1)\in \partial\Phi$ and this contradicts
  Lemma~\ref{lem:12.5.29.1.20}.  But (\ref{eq:12.5.29.2.48}) also
  contradicts Lemma~\ref{lem:12.5.29.1.20}, and thus
  $\Phi=\Theta\times\Theta$.
\end{proof}

Now we are ready to give the distorted spherical coordinates for
$\Omega$.
\begin{lemma}\label{lem:12.5.27.17.21}
  The exponential map on the outward normal vectors on $S$:
  \begin{align*}
    \exp_{|N^+S} \colon N^+S\to \Omega
  \end{align*}
  is bijective.
\end{lemma}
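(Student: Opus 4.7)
The plan is to prove bijectivity by identifying $\exp(x, t\nu(x)) = x + t\nu(x)$ with the ``nearest point'' decomposition relative to the closed convex set $\overline{\Theta}$. The strict convexity hypothesis on $\Theta$ enters through Lemma \ref{lemma:convex}, which tells us that $\Theta$ (and hence $\overline{\Theta}$) is a genuine convex subset of $\mathbb{R}^d$; this allows the standard Hilbert-space projection technology to operate.

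First, I would verify that $\exp_{|N^+S}$ maps into $\Omega$. For $x \in S$ and $t>0$, convexity of $\overline{\Theta}$ implies the existence of a supporting hyperplane at $x$, which must be the tangent hyperplane $x + T_xS$ with inward normal $-\nu(x)$. Hence for every $z \in \overline{\Theta}$ we have $\langle z - x, \nu(x)\rangle \le 0$, and in particular $x + t\nu(x) \notin \overline{\Theta}$, giving $\exp(x,t\nu(x)) \in \Omega$.

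Next, surjectivity. For any $y \in \Omega$, the function $z \mapsto |y-z|^2$ on the closed convex compact set $\overline{\Theta}$ attains its minimum at a unique point $x$; uniqueness follows from strict convexity of the squared distance combined with convexity of $\overline{\Theta}$. Since $y \notin \overline{\Theta}$ we have $x \in S$, and a variational argument (differentiating $|y - (x + s(z-x))|^2$ at $s=0$ for $z \in \overline{\Theta}$) together with the smoothness of $S$ yields $\langle y - x, w\rangle = 0$ for all $w \in T_xS$, hence $y - x = t\nu(x)$ for some $t > 0$ (the sign being forced by $y \in \Omega$ and the support inequality above).

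Finally, injectivity. Suppose $y = x_1 + t_1\nu(x_1) = x_2 + t_2\nu(x_2)$. For any $z \in \overline{\Theta}$, using $\langle z - x_i, \nu(x_i)\rangle \le 0$,
\begin{equation*}
|y - z|^2 = |x_i - z|^2 + 2t_i\langle \nu(x_i), x_i - z\rangle + t_i^2 \ge t_i^2 + |x_i - z|^2,
\end{equation*}
so each $x_i$ realizes $\mathop{\mathrm{dist}}(y, \overline{\Theta})$ and $t_i = \mathop{\mathrm{dist}}(y, \overline{\Theta})$. Uniqueness of the closest point from the previous step gives $x_1 = x_2$ and then $t_1 = t_2$. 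The main subtlety in the argument is the reliance on the convex support inequality $\langle z - x, \nu(x)\rangle \le 0$; I expect the bulk of the writing to be a careful invocation of Lemma \ref{lemma:convex} and the smoothness of $S$ to secure this globally (not only locally as in Lemma \ref{lem:12.5.29.1.20}), after which the nearest-point argument is routine.
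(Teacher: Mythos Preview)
Your proposal is correct and follows essentially the same route as the paper: both arguments hinge on the convexity of $\overline{\Theta}$ supplied by Lemma~\ref{lemma:convex}, use the supporting tangent hyperplane to see that the image lies in $\Omega$ and that $t$ equals $\dist(y,\overline{\Theta})$, and obtain surjectivity and injectivity from the existence and uniqueness of the nearest-point projection onto a closed convex set. Your write-up simply makes the supporting-hyperplane inequality $\langle z-x,\nu(x)\rangle\le 0$ explicit where the paper argues more verbally, but the content is the same.
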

\begin{proof}
  We shall intensively use the convexity of $\overline\Theta$.  Let us
  denote an element of $N^+S$ by $r\nu(\sigma)$, $(r,\sigma)\in
  (0,\infty)\times S$.  If $\exp(r\nu(\sigma))\in \overline{\Theta}$
  for some $(r,\sigma)$, then by the convexity this contradicts the
  fact that $\nu$ is outward.  Thus the image $\exp(N^+S)$ is included
  in $\Omega$.

  Next, assume $\exp(r\nu(\sigma))=\exp(r'\nu(\sigma'))$ for some
  $(r,\sigma)$, $(r',\sigma')$.  By the convexity we note that the
  obstacle $\Theta$ is in one side of the half space devided by the
  tangent plane at $\sigma$, and by the normality of $\nu(\sigma)$ we
  obtain $\mathop{\mathrm{dist}}(\exp(r\nu(\sigma)),\Theta)=r$.  Thus
  $r=r'$. Moreover, by the convexity of $\overline\Theta$ and the
  minimality of
  $r=r'=\mathop{\mathrm{dist}}(\exp(r\nu(\sigma)),\Theta)$,
  $\sigma=\sigma'$.

  Finally take any $x\in \Omega$, and then we can find $y\in \partial
  \Theta$ such that $\mathop{\mathrm{dist}}(x,\Theta)=\|x-y\|$.  Then
  the geodesic connecting $x$ and $y$ is orthogonal to $\partial
  \Theta$, because, otherwise, $\|x-y\|$ does not give a minimal
  distance. This implies $x$ is in the image $\exp(N^+S)$.
\end{proof}
As in the proof above we identify $N^+S\cong (0,\infty)\times S$
through
\begin{align*}
  N^+S\ni r\nu(\sigma)\leftrightarrow (r,\sigma)\in (0,\infty)\times
  S,
\end{align*}
and consider $(r,\sigma)\in (0,\infty)\times S$ as local coordinates
of $N^+S$.  By Lemma~\ref{lem:12.5.27.17.21} $\exp_{|N^+S} \colon
N^+S\to \Omega$ is a $C^\infty$ bijection, and $r$ is well-defined as
the distance function on $\Omega$:
$r(x)=\mathop{\mathrm{dist}}(x,\Theta)$, $x\in\Omega$.  The following
lemma implies that the pair $(r,\sigma)$ in fact defines local
coordinates for $\Omega$.

\begin{lemma}
  The exponential map $\exp_{|N^+S} \colon N^+S\to \Omega$ is a
  diffeomorphism.
\end{lemma}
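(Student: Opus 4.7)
The plan is to show that the differential $d\exp_{|N^+S}$ is a linear isomorphism at every point of $N^+S$; combined with the global bijectivity already established in Lemma~\ref{lem:12.5.27.17.21} and the inverse function theorem, this yields that $\exp_{|N^+S}$ is a diffeomorphism.

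Fix $r\nu(\sigma)\in N^+S$ and use the identification $N^+S\cong (0,\infty)\times S$ to split $T_{r\nu(\sigma)}N^+S=\mathbb R\partial_r\oplus T_\sigma S$. Since the ambient geodesics are straight lines, $\exp_{|N^+S}$ is the affine map $(r,\sigma)\mapsto\sigma+r\nu(\sigma)$. Hence in the $\partial_r$ direction the differential is $\nu(\sigma)$, while for $v\in T_\sigma S$, picking a smooth curve $s\mapsto \sigma(s)\subset S$ with $\sigma(0)=\sigma$ and $\sigma'(0)=v$, one obtains
\begin{equation*}
  d\exp_{|N^+S}(v)=\tfrac{d}{ds}\bigl|_{s=0}\bigl(\sigma(s)+r\nu(\sigma(s))\bigr)=v+r\nabla_v\nu=(I+rA)v,
\end{equation*}
where $A\colon T_\sigma S\to T_\sigma S$ denotes the shape operator $Av=\nabla_v\nu$ (this lands in $T_\sigma S$ because $|\nu|=1$ forces $\langle\nabla_v\nu,\nu\rangle=0$).

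By Condition~\ref{cond:12.5.27.15.10}, the shape operator satisfies $A\ge\kappa I$ with $\kappa>0$, so $I+rA\ge (1+r\kappa)I>0$ is a linear automorphism of $T_\sigma S$ for every $r>0$. Since $\nu(\sigma)$ is a unit vector orthogonal to $T_\sigma S$ and $(I+rA)T_\sigma S=T_\sigma S$, the full differential maps $\mathbb R\partial_r\oplus T_\sigma S$ bijectively onto $\mathbb R\nu(\sigma)\oplus T_\sigma S=\mathbb R^d$. Hence $d\exp_{|N^+S}$ is everywhere a linear isomorphism, and the inverse function theorem together with Lemma~\ref{lem:12.5.27.17.21} delivers the conclusion.

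The argument presents essentially no obstacle: the Euclidean setting trivializes the Jacobi equation along the normal rays, and the strict convexity hypothesis $\nabla\nu\ge\kappa g_{|S}$ produces the lower bound $1+r\kappa$ on the eigenvalues of $I+rA$ in one stroke. The only care required is in identifying $\nabla\nu$ in Condition~\ref{cond:12.5.27.15.10} with the shape operator $A$ on $T_\sigma S$, which is standard. This positivity, incidentally, will also be the mechanism that produces the bound (\ref{eq:12.5.27.14.14}) when, after this lemma, one reads off $\nabla^2 r$ on the level sets $S_r=\exp(r\nu(S))$ from $(I+rA)^{-1}$.
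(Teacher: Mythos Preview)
Your proof is correct and follows the same strategy as the paper: establish that the differential of $\exp_{|N^+S}$ is everywhere an isomorphism, then invoke the bijectivity from Lemma~\ref{lem:12.5.27.17.21}. The paper carries this out via Jacobi fields along the normal geodesics---solving $\nabla_{\gamma'}^2 Y_\alpha=0$ with the appropriate initial data to obtain $Y_\alpha(t)=X_\alpha(t)+tr((\nabla^2 r)_{|S})_{\alpha\gamma}(g_{|S})^{\gamma\beta}X_\beta(t)$---whereas you bypass this machinery by writing the map explicitly as $(r,\sigma)\mapsto\sigma+r\nu(\sigma)$ and differentiating directly; since the ambient curvature vanishes, these are the same computation, and your formula $I+rA$ is exactly the paper's Jacobi field in different notation.
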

\begin{proof}
  Parts of the arguments below  depend on \cite[Section III.6]{C}.  By
  Lemma~\ref{lem:12.5.27.17.21} it suffices to show that
  $\exp_{|N^+S}$ is a local diffeomorphism.  For $r\ge 0$ and
  $\sigma\in S$ let $\gamma(\cdot;r,\sigma)$ be the geodesic defined
  by
  \begin{align*}
    \gamma(t;r,\sigma)=\exp (tr\nu(\sigma)),\quad t\in [0,1],
  \end{align*}
  and we consider the vector field $Y_\alpha$ along it:
  \begin{align*}
    Y_\alpha(t)=\partial_\alpha\gamma(t;r,\sigma);\quad
    \partial_\alpha=\partial_{\sigma^\alpha},\ \alpha=2,\dots,d.
  \end{align*}
  The vector field $Y_\alpha$ is so-called the \textit{Jacobi field}
  and satisfies the equation
  \begin{align}
    \nabla_{\gamma'}^2Y_\alpha+R(\gamma',Y_\alpha)\gamma'=0\label{eq:12.5.28.6.51}
  \end{align}
  with the initial conditions
  \begin{align}
    Y_\alpha(0)=\partial_\alpha,\quad
    (\nabla_{\gamma'}Y_\alpha)(0)=\pi_{(TS)_\sigma}(r\nabla_\alpha\nu).\label{eq:12.5.28.6.52}
  \end{align}
  Let $X_\alpha(t)$ be the parallel transport of $\partial_\alpha\in
  (TS)_\sigma$ along $\gamma$, i.e.\
  \begin{align*}
    (\nabla_{\gamma'}X_\alpha)(t)=0,\quad X_\alpha(0)=\partial_\alpha,
  \end{align*}
  and seek for a solution to (\ref{eq:12.5.28.6.51}) and
  (\ref{eq:12.5.28.6.52}) of the form
  $Y_\alpha(t)=c_\alpha^\beta(t)X_\beta(t)$.  Since $R=0$ and
  $(\nabla_\alpha \nu)_\beta=((\nabla^2r)_{|S})_{\alpha\beta}$, we
  have (\ref{eq:12.5.28.6.51}) and (\ref{eq:12.5.28.6.52}) reduced to
  \begin{align*}
    (c_\alpha^\beta)''(t)=0,\quad
    c_\alpha^\beta(0)=\delta_\alpha^\beta,\quad
    (c_\alpha^\beta)'(0)=r((\nabla^2r)_{|S})_{\alpha\gamma}(g_{|S})^{\gamma\beta}.
  \end{align*}
  We can solve this as a matrix equation, and hence obtain
  \begin{align}
    Y_\alpha(t)=X_\alpha(t)+tr
      ((\nabla^2r)_{|S})_{\alpha\gamma}(g_{|S})^{\gamma\beta}X_\beta(t).
    \label{eq:12.5.29.15.33}
  \end{align}
  Note that we can choose local coordinates $\sigma$ such that
  $\partial_\alpha$, $\alpha=2,\dots,d$, are principal directions of
  $S$, so that $(\nabla^2r)_{|S}$ and $g_{|S}$ are written as diagonal
  matrices.  Then by positivity (\ref{eq:12.5.27.15.29}) it is
  straightforward to see that $Y_\alpha(t)$, $\alpha=2,\dots ,d$, do
  not degenerate for all $t\ge 0$.  Thus $\exp_{|N^+S}$ is a local
  diffeomorphism.
\end{proof}

\begin{proof}[Proof of Proposition
  \ref{thm:12.5.27.16.50}]
By (\ref{eq:12.5.29.15.33}) we can write the metric of $\Omega$ in
terms of coordinates $(r,\sigma)$, and
\begin{align}
  g=\mathrm{d}r\otimes \mathrm{d}r+
  (g_{|S}+r(\nabla^2r)_{|S})_{\alpha\gamma}(g_{|S})^{\gamma\delta}
  (g_{|S}+r(\nabla^2r)_{|S})_{\delta\beta}\,\mathrm{d}\sigma^\alpha\otimes\mathrm{d}\sigma^\beta.
  \label{eq:12.5.28.8.26}
\end{align}
This can be verified by choosing local coordinates $\sigma^\alpha$
diagonalizing $(\nabla^2r)_{|S}$ and $g_{|S}$ and using the fact that
the parallel transport does not change the length of vectors.  By the
representation (\ref{eq:12.5.28.8.26}) we can compute explicitly
\begin{align*}
  (\nabla^2r)_{\alpha\beta}=((\nabla^2r)_{|S})_{\alpha\beta}
  +r((\nabla^2r)_{|S})_{\alpha\gamma}(g_{|S})^{\gamma\delta}((\nabla^2r)_{|S})_{\delta\beta}.
\end{align*}
Then for any $c<1$ there exists $r_0>0$ such that for all $r\ge r_0$
\begin{align*}
  (\nabla^2r)_{|S_r}\ge cr^{-1}g_{|S_r}.
\end{align*}
Hence we have (\ref{eq:12.5.27.14.14}).  By (\ref{eq:12.5.28.8.26}) we
can compute $|\nabla^kr|$ in $(r,\sigma)$ coordinates, and then the
verification of (\ref{eq:12.5.27.14.15}) is also straightforward.
\end{proof}

\begin{comment}
  As a final remark, we consider an unbounded region.  The proof is
  completely parallel to the above.
  \begin{corollary}\label{cor:12.5.27.16.50}
    Let $\Theta\subset \mathbb{R}^d$, $d\ge 2$, be an open convex
    subset with smooth boundary $S=\partial \Theta$, the second
    fundamental form of which is non-negative.  Fix any $\kappa >0$,
    and set
    \begin{align*}
      S(\kappa)=\{\sigma \in S\,|\, (\nabla \nu)_{|S}(\sigma)> \kappa
      g_{|S}(\sigma)\},\quad \Omega(\kappa)=\exp (N^+S(\kappa)),
    \end{align*}
    where $\nu\in\Gamma(N^+S)$ is the outward unit normal vector field
    on $S$.  Then the distance function
    $r(x)=\mathop{\mathrm{dist}}(x,\Theta)$, $x\in \Omega(\kappa)$,
    belongs to $C^\infty(\Omega(\kappa))$ and satisfies
    \begin{align*}
      |\nabla r|=1\mbox{ in }\Omega(\kappa),\quad (\nabla^2
      r)_{|S(\kappa)_r}\ge c\langle
      r\rangle^{-1}g_{|S(\kappa)_r},\quad |\partial^\alpha r|\le
      C_\alpha \langle r\rangle^{1-|\alpha|}.
    \end{align*}
    Moreover, the vector field $\mathop{\mathrm{grad}}r^2$ is complete
    in $\Omega(\kappa)$.
  \end{corollary}

  \begin{remarks}
    \begin{enumerate}
    \item The set $\Omega(\kappa)$ is the exterior region with some
      conic sets corresponding to the flat parts of $S$ removed.  Of
      course, $\Theta$ itself may be unbounded.

    \item A certain region with some cylinders removed can be
      handled. But the general formulation would be something awkward.

    \end{enumerate}
  \end{remarks}

\end{comment}

\end{document}